\newtheorem{lemma}{Lemma}[section]
\newtheorem{proposition}{Proposition}[section]
\newtheorem{corollary}{Corollary}[section]
\newtheorem{remark}{Remark}[section]
\newtheorem{example}{Example}[section]
\newtheorem{definition}{Definition}[section]
\newcommand{\sgn}{\text{\rm sgn}}
\newcommand\I{\mathds{1}}
\newcommand{\lbn}{-1}
\newcommand{\PutO}{P_0}
\newcommand{\Putt}{P_t}
\newcommand{\PutT}{P_T}
\newcommand{\CallO}{C_0}
\newcommand{\Callt}{C_t}
\newcommand{\CallT}{C_T}
\newcommand{\rhodf}{\rho_{12}}
\newcommand{\rhod}{\rho_{13}}
\newcommand{\rhof}{\rho_{23}}
\newcommand{\alone}{\alpha_1}
\newcommand{\altwo}{\alpha_2}
\newcommand{\althree}{\alpha_3}
\newcommand{\Phin}{\Phi}
\newcommand{\pp}{\psi}
\newcommand{\ppfe}{\psi^f}
\newcommand{\pppr}{\psi^p}
\newcommand{\ppf}{\psi_f}
\newcommand{\ppa}{\psi_e}
\newcommand{\ppd}{\psi_d}
\newcommand{\CCF}{\mbox{\rm CF$\,$}}
\newcommand{\wh}{\widehat}
\newcommand{\wt}{\widetilde}
\newcommand{\whc}{\widehat{c}}
\newcommand{\whH}{\widehat{H}}
\newcommand{\whh}{\widehat{h}}
\newcommand{\whY}{\widehat{Y}}
\newcommand{\wtY}{\widetilde{Y}}
\newcommand{\wtK}{\widetilde{K}}
\newcommand{\whd}{\widehat{d}}
\newcommand{\wtd}{\widetilde{d}}
\newcommand{\whC}{\widehat{C}}
\newcommand{\wtC}{\widetilde{C}}
\newcommand{\whP}{\widehat{P}}
\newcommand{\wtP}{\widetilde{P}}
\newcommand{\EPSd}{\mbox{\rm EPS\,}^d_T}
\newcommand{\EPSf}{\mbox{\rm EPS\,}^f_T}
\newcommand{\EPSfe}{\mbox{\rm EPS\,}^{f,e}_T}
\newcommand{\EPSfq}{\mbox{\rm EPS\,}^{f,q}_T}
\newcommand{\EPSwe}{\mbox{\rm EPS\,}^{w,e}_T}
\newcommand{\whHd}{\whH^d_T}
\newcommand{\whHfe}{\whH^{f,e}_T}
\newcommand{\whHfq}{\whH^{f,q}_T}
\newcommand{\whHwe}{\whH^{w,e}_T}
\newcommand{\whHdt}{\whH^d_t}
\newcommand{\whHfet}{\whH^{f,e}_t}
\newcommand{\whHfqt}{\whH^{f,q}_t}
\newcommand{\whHwet}{\whH^{w,e}_t}
\newcommand{\EPSdo}{\mbox{\rm EPS\,}^d_0}
\newcommand{\EPSfo}{\mbox{\rm EPS\,}^f_0}
\newcommand{\EPSfeo}{\mbox{\rm EPS\,}^{f,e}_0}
\newcommand{\EPSfqo}{\mbox{\rm EPS\,}^{f,q}_0}
\newcommand{\X}{Q}
\newcommand{\Np}{N}
\newcommand{\Nd}{N^d}
\newcommand{\Nf}{N^f}
\newcommand{\Nfd}{N^{f,d}}
\newcommand{\Sd}{S^d}
\newcommand{\Sf}{S^f}
\newcommand{\Sfd}{S^{f,e}}
\newcommand{\wtSd}{\wt{S}^{d}}
\newcommand{\wtSfd}{\wt{S}^{f,e}}
\newcommand{\whSd}{\wh{S}^{d}}
\newcommand{\whSfd}{\wh{S}^{f,e}}
\newcommand{\Swe}{S^{w,e}}
\newcommand{\Rwe}{R^{w,e}}
\newcommand{\wtSwe}{\wt{S}^{w,e}}
\newcommand{\Rd}{R^d}
\newcommand{\Rf}{R^f}
\newcommand{\RQ}{R^q}
\newcommand{\Rfd}{R^{f,e}}
\newcommand{\Hwes}{H^{w,e,s}}
\newcommand{\Hwesf}{H^{w,e,s,f}}
\newcommand{\Hwesb}{H^{w,e,s,b}}
\newcommand{\HwesbT}{H_T^{w,e,s,b}}
\newcommand{\Hwea}{H^{w,e,a}}
\newcommand{\Vwe}{X^{w,e}}
\newcommand{\E}{{\mathbb E}}
\newcommand{\Q}{{\mathbb Q}}
\newcommand{\Qd}{{\Q}^d}
\newcommand{\wtQd}{\wt{\Q}^d}
\newcommand{\whQd}{\wh{\Q}^d}
\newcommand{\EQ}{{\E}_{{\Q}}}
\newcommand{\EQd}{{\E}_{{\Q}^d}}
\newcommand{\bbP}{{\mathbb P}}
\newcommand{\bbR}{{\mathbb R}}
\newcommand{\cF}{\mathcal F}
\title{Pricing and Hedging of Cross-Currency \\ Equity Protection Swaps}
\author[M. Rutkowski, H. Xu]{Marek Rutkowski$^{a,b}$, Huansang Xu$^{c}$}
\begin{document}

\maketitle

\begin{center}
\normalsize{\today} \\ \vspace{0.5cm}
\small\textit{$^{a}$ School of Mathematics and Statistics, University of Sydney, \\ Sydney, NSW 2006, Australia}\\
\small\textit{$^{b}$ Faculty of Mathematics and Information Science, Warsaw University of Technology, \\ 00-661 Warszawa, Poland}\\
\small\textit{$^{c}$ National University of Singapore, Department of Mathematics,\\ 21 Lower Kent Ridge Road, 119077.}
\end{center}


\begin{abstract}~

We study arbitrage-free pricing and static hedging strategies for an innovative insurance product called the equity protection swap (EPS). Notably, we focus on the application of EPSs involving cross-currency reference portfolios, reflecting the realities of investor asset diversification across different economies. The research examines key considerations regarding exchange rate fluctuations, pricing and hedging frameworks, in order to satisfy dynamic requirements from EPS buyers. We differentiate between two hedging paradigms: one where domestic and foreign equities are treated separately using two EPS products and another that integrates total returns across currencies. Through detailed analysis, we propose various hedging strategies with consideration of different types of returns — nominal, effective, and quanto - for EPS products in both separate and aggregated contexts. The aggregated hedging portfolios contain basket options with cross-currency underlying asset, which only exists in the OTC market, thus we further consider a superhedging strategy. A numerical study assesses hedging costs and performance metrics associated with these hedging strategies, illuminating practical implications for EPS providers and investors engaged in international markets. We further employ the Monte Carlo method for the basket option pricing, together with two other approximation methods - geometric averaging and moment matching. This work contributes to enhancing fair pricing mechanisms and risk management strategies in the evolving landscape of cross-currency financial derivatives.
\\[2mm]
\noindent
{\bf Keywords:} {Equity protection swap; Cross-currency portfolio; Effective return; Static hedging; Basket option.}

\end{abstract}




\section{Introduction} \label{sec1}

Xu et al. \cite{XLR2024} have introduced a new type of insurance for superannuation - {\it equity protection swap} (EPS). It is a financial derivative, which is reminiscent of a total return swap but has also some features of the annuity insurance product RILA. Additionally, the structure of an EPS makes it a reliable safeguard when the value of a reference portfolio drops, and it can also be tailored to accommodate the specific requirements of super fund members. Compared with variable annuity riders, an EPS is less complex and is not necessarily associated with a variable annuity. Apart from that, there is a main difference between RILA and EPS products that EPS is an insurance product that can be added to the variable annuity rather than the annuity itself in order to manage the risks.

An important direction in the study of EPSs is the case of a cross-currency reference portfolio, which will be discussed in this paper. In reality, it is fairly common for investors to hold assets across different economies in order to do risk diversification. Then the underlying reference portfolio for an EPS will contain assets from several economies. Therefore, to generalise the application scope of the concept of an EPS, the pricing and hedging strategies should take into account the exchange rate fluctuations, as well as the correlation between prices of domestic and foreign assets. Various existing approaches to the arbitrage-free pricing and hedging of cross-currency reference portfolios include either considering two independent local returns on domestic and foreign assets or focusing on the total return on the portfolio's value expressed in the domestic currency. This area of research is essential for providers and investors operating in international markets and it provides a valuable insight into fair pricing and hedging strategies for an EPS in a global context.

To mitigate the foreign exchange risk, a variant of the Black-Scholes model was first proposed for pricing and hedging of European currency options by Garman and Kohlhagen \cite{GK1983}. It should be acknowledged that since the Garman-Kohlhagen model is based on the geometric Brownian motion, it is unable to correctly specify the dynamics of the exchange rate.

In order to hedge the risk of investments in domestic and foreign reference portfolios, we need to consider the preferences of EPS buyers and examine their respective strategies. If the purchaser regards the underlying portfolio as two separate independent portfolios, they will focus on the actual returns of domestic and foreign equities independently. In this case, we can model two hedging portfolios to hedge domestic and foreign equity risks separately. We need to mention that the hedging portfolio for domestic equity has been discussed before, thus we only need to consider foreign equity risks in this strategy. In another situation, EPS buyers only focus on the total return on the whole reference portfolio rather than separate returns of domestic and foreign equities. The insurance company should hedge total return risks together with currency risks and build only one hedging portfolio, it is more difficult to structure than in the first case.

We aim to analyse the pricing and hedging of an EPS with a cross-currency reference portfolio. For conciseness of notation, we assume that the reference portfolio contains only two economies - a domestic and a foreign one. By convention, the domestic currency is chosen to be the Australian dollar (AUD) and the foreign one is the U.S. dollar (USD) but, obviously, all definitions and results from this paper can be applied to any two (or more than two) currencies.

We start this paper by defining a setup for the domestic and foreign markets in Section \ref{sec2}. We discuss not only the nominal return but also the effective return of the foreign equity, then provide the cross-currency and quanto (i.e., for a guaranteed exchange rate) rate of return for the cross-currency portfolio. The cross-currency reference portfolio of EPS is defined and we discuss the hedging strategies for such EPS products. Unlike for a standard EPS with a single currency reference portfolio studied by Xu et al. \cite{XLR2024}, here we have several different hedging strategies since we consider alternative conventions regarding realised returns and hence various structures of the protection and fee legs.

In Section \ref{sec3} , we study separate hedging strategies, which correspond to the EPS convention where the realised returns of domestic and foreign equities are mitigated independently and, as a consequence, two sets of hedging portfolios with various kinds of domestic and foreign options are used. In fact, in that case we can adapt the static hedging strategies for a standard EPS with a single currency reference portfolio to fully hedge the domestic equity, by adding appropriate weights to the notional principal. However, for the foreign equity, we need to consider not only its realised return but also the returns on the exchange rate. Firstly, the EPS provider can hedge only the nominal return of the foreign equity and exchange this foreign nominal return to domestic return by multiplying the exchange rate at maturity. In this case, European options written on foreign equities can be used directly to build the hedging portfolio and the volatility of the exchange rate will not be considered.

Secondly, one can also hedge the foreign equity's effective return, which is the return on the foreign equity struck in domestic currency. The EPS provider needs to track the value of foreign equity expressed in the domestic currency in the holding period of the EPS product and, in this case, he needs to use European options with foreign equity struck in the domestic currency as the underlying asset. We also consider in Section \ref{sec3.6} the issue of hedging the foreign equity's quanto rate, which predetermines a fixed exchange rate corresponding to the EPS holder's requirements.

In Section \ref{sec4}, we consider the hedging strategy for an EPS referencing the total return that combines the returns of both domestic and foreign equities. We discuss two different hedging strategies here, the first one is a static hedging strategy that uses European cross-currency basket options with the whole portfolio struck in domestic currency as the underlying asset; here we can extend the algorithm of the hedging strategy for standard EPS products with single currency reference portfolio to do the hedging. However, we need to mention that this fully hedging strategy may be unattainable in that European cross-currency basket options on bespoke portfolios are unlikely to be offered in practice, though they can be negotiated on the OTC market. In order to build a more practical hedging portfolio, we further introduce a joint superhedging strategy. By using the subadditive property, we build a joint superhedging strategy by using European options with single currency underlying assets. However, some parts of this joint superhedging strategy still need to be structured in the OTC market.

In Section \ref{sec5}, we present a numerical study for EPS products with the cross-currency reference portfolio.
We select parameters for domestic and foreign markets and examine trends of stock prices and exchange rates with correlated volatility. Our numerical study illustrated the hedging costs for separate hedging strategies: domestic return, effective foreign return, and quanto foreign return, as well as for aggregated hedging strategies: cross-currency strategies, quanto hedging strategies, and superhedging strategies. For the aggregated hedging strategies, we need to use basket options to build the hedging portfolio. However, we cannot find an explicit formula for the basket option and we will use some approximations instead. We will consider the premium obtained using the Monte Carlo method as an exact cost, the closed-form solutions for basket options with geometric averaging approximation and under three-moment matching approximation method are also included as a comparison.

Each hedging strategy satisfies different requirements from EPS buyers, which influence the structure and clauses of the EPS product at the beginning, thus all of them are important and cannot be substituted.
We note that the aggregated superhedging strategy requires higher hedging costs compared with other aggregated hedging strategies, which is consistent with the definition of superhedging.
We conclude this work by summarising our findings and emphasising the practical importance of this study.



\section{Generic Equity Protection Swaps} \label{sec2}

We first define a generic equity protection swap (EPS), which is a financial contract with a single terminal payoff at maturity $T$ determined by the notional principal $N$ and performance of an underlying reference portfolio over the lifetime $[0,T]$ of an EPS. Let us denote by $R_T$ the simple rate of return on the strictly positive {\it wealth process} $X$ of a reference portfolio so that $R_T := (X_T-X_0)/X_0$ satisfies $R_T>-1$.
In the cross-currency framework the portfolio's wealth process $X$, which is given in the domestic currency,
can be formally specified in several alternative ways, depending on the financial interpretation of a reference portfolio and investor's preferences and thus also the random variable $R_T$ can be specified in various ways (see Section \ref{sec3.1}).

Before we state the definition of a generic EPS, we need to introduce some notation. We consider sequences $\lbn = l_{n+1}< l_{n}<\cdots <l_{2}< l_{1}<l_0=0$ and $0=g_{0}< g_{1}< \cdots <g_{m}< g_{m+1} = \infty$ of real numbers representing a predetermined set of thresholds for losses and gains, respectively. For each $k$, the  {\it protection rate} over the loss interval $[l_{k+1},l_k)$ is denoted as $p_{k+1}$ whereas the {\it fee rate} over the gain interval $(g_{k},g_{k+1}]$ is denoted by $f_{k+1}$. It is natural to assume that protection rates belong to the interval $[0,1]$, but it suffices to assume that the fee rates are arbitrary nonnegative constants.

According to the definition of a generic EPS introduced in Xu et al. \cite{XLR2024}, the cash flow at time $T$ from the investor to the provider of an EPS is given by $N \pp (R_T)$ where the function $\pp :(-1,\infty ) \to \bbR$ is used to encode the structure of an EPS and $N$ is the {\it notional principal} of an EPS. We postulate that the payoff profile $\pp$ is a piecewise linear, non-decreasing, continuous function such that $\pp (0)=0$. The following definition was introduced in Xu et al. \cite{XLR2024}.

\begin{definition} \label{def2.1} \rm
A generic EPS is a financial contract with the payoff at maturity date $T$ to its provider given by the \textit{adjusted return} $\pp (R_T)=\pppr(R_T)+\ppfe(R_T)$ where the non-positive {\it protection leg} $\pppr(R_T)$ represents the provider's loss and equals, per one unit of the notional principal $\Np$,
\begin{equation*}
\pppr(R_T):=\pp(R_T)\I_{\{R_T<0\}}=\sum_{i=0}^n \Big(\sum_{k=0}^{i-1}p_{k+1}\Delta l_k+p_{i+1}(R_T-l_i)\Big)\I_{\{R_T \in (l_{i+1},l_i)\}}
\end{equation*}
and the non-negative {\it fee leg} $\ppfe(R_T)$ represents the provider's gain and satisfies
\begin{equation*}
\ppfe(R_T):=\pp(R_T)\I_{\{R_T>0\}}=\sum_{j=0}^m \Big(\sum_{k=0}^{j-1}f_{k+1}\Delta g_k+f_{j+1}(R_T-g_j)\Big)\I_{\{R \in (g_j,g_{j+1})\}}.
\end{equation*}
Equivalently, the EPS provider's payoff is specified by the {\it adjusted return} $\pp (R_T)$ where
the derivative of the adjusted return function $\pp:(-1,\infty ) \to\mathbb{R}$ exists almost everywhere
and is given by the equality
\begin{equation*}
\pp'(R)=\sum_{i=0}^{n} p_{i+1}\I_{(l_{i+1},l_i)}(R)+\sum_{j=0}^{m} f_{j+1}\I_{(g_j,g_{j+1})}(R).
\end{equation*}
\end{definition}

Throughout this work, our main goal is to study arbitrage-free market-based pricing of various classes of
cross-currency EPSs and analyze static hedging strategies based on European options. In particular, we adopt the following natural definition of {\it fairness} of a generic EPS, which explains why the term `swap' is suitable for a generic EPS with fair initial value. Recall that the term {\it swap} is commonly used in reference to a financial contract that is worthless at its inception.

\begin{definition} \label{def2.1a} \rm
 We say that an EPS initiated at time 0 with the payoff structure $\pp$ is {\it fair} if the terminal payoff $\pp (R_T)$ with the settlement date $T$ has null arbitrage-free price at time 0 or, equivalently, if it can be replicated by a static portfolio of European options such that the portfolio's initial cost
is null.
\end{definition}

\subsection{Buffer and Floor EPS}   \label{sec2.1}

Two EPS products introduced by Xu et al. \cite{XLR2024} will be used throughout as examples of a generic EPS. Needless to say, much more complex products can be offered to cater for individual risk preferences of investors and all of them can be dealt with using general pricing and hedging results from this work.  For simplicity, we only deal here
with EPS products of European style with a payoff at maturity date $T$, although one may allow for cancellation of an EPS by its holder at its current market price provided that the underlying European options are liquidly traded.

We first set $m=n=1$ and define the \textit{buffer EPS} where the concept of a buffer applies to both the protection and fee leg. Let $\text{Call}_{\,T}(X,K,T)=(X_T-K)^+$ and $\text{Put}_{\,T}(X,K,T)=(K-X_T)^+$ denote the payoffs of a European call and put options written on the process $X$,  with the strike $K$ and maturity date $T$.

\begin{definition} \label{def2.2}
{\rm The \textit{buffer EPS} is obtained by taking $p_1=f_1=0$ (hence $\pp_B(R)=0$ for all $R \in [l_1,g_1]$ where $l_1<0$ and $g_1>0$) and some values for $p_2\in (0,1]$ and $f_2>0$. Hence the cash flow at maturity $T$ of a buffer EPS to its provider equals
\begin{align*}
\pp_B (R_T)& :=p_2(R_T-l_1) \I_{\{R_T \in (-1,l_1)\}}+f_2(R_T-g_1)\I_{\{R_T\in (g_1,\infty)\}}\\
&=-p_2(l_1-R_T)^+ +f_2(R_T-g_1)^+ \\
&=\frac{p_2}{X_0}\,\PutT (X,K^l_1(X_0),T)-\frac{f_2}{X_0}\,\CallT (X,K^g_1(X_0),T)
\end{align*}
where $K^l_1(X_0):=(1+l_1)X_0$ and $K^g_1(X_0):=(1+g_1)X_0$.}
\end{definition}


In the {\it floor EPS} we set $p_2=0,p_1\in (0,1],f_1=0,f_2\in(0,1]$, which means that the provider covers the holder's losses above the level $l_1$ but the holder's losses below $l_1$ are not covered at all. Notice that, by convention, the buffer is applied to the fee leg and thus it is exactly the same as in a buffer EPS.

\begin{definition} \label{def2.3}
{\rm The \textit{floor EPS} is specified by $p_2=0,p_1\in (0,1],f_1=0,f_2>0$ and thus the cash flow at maturity $T$
to its provider equals
\begin{align*}
\pp_F(R_T)&:=p_1 l_1\I_{\{R_T \in (-1,l_1]\}}-p_1(-R_T)^+\I_{\{R_T\in(l_1,0]\}}+f_2(R_T-g_1)^+\\
&=-p_1(-R_T)^++p_1(l_1-R_T)^++f_2(R_T-g_1)^+ \\
&=-\frac{p_1}{X_0}\,\PutT (X,K^l_1(X_0),T)+\frac{p_1}{X_0}\,\PutT(X,X_0,T)- \frac{f_2}{X_0}\,\CallT (X,K^g_1(X_0),T).
\end{align*}}
\end{definition}


\subsection{Static Hedging of a Generic EPS}   \label{sec2.2}

After representing the payoff of a generic EPS, one can observe that its payoff can be written in terms of a combination of European calls and puts on rate of return $R_T$ maturing at $T$. Hence our next goal is to examine a hedging portfolio for a generic EPS in terms of options written on the underlying portfolio $X$.

\begin{definition} \label{def2.4}
{\rm A {\it static portfolio} of European put and call options is represented by a vector $h=(a_0,a_1,a_2,\dots ,a_n,b_0,b_1,b_2,\dots ,b_m)\in \mathbb{R}^{n+m+2}$ and its value process $H$ is given by, for every $t\in [0,T]$,
\begin{equation*}
H_t:=\sum_{i=0}^{n}a_i\Putt(X,K^l_i(X_0),T)+\sum_{j=0}^{m}b_j\,\Callt(X,K^g_j(X_0),T)
\end{equation*}
where for $t=T$ we have that $\text{\rm Put}_{\,T}(X,K^l_i(X_0),T)=(K^l_i(X_0)-X_T)^+$ where the strike equals $K^l_i(X_0)=(1+l_i)X_0$ and $\text{\rm Call}_{\,T}(X,K^g_j(X_0),T)=(X_T-K^g_j(X_0))^+$ where $K^g_j(X_0)=(1+g_j)X_0$.}
\end{definition}

According to our convention, the provider receives from the buyer the initial premium $c\in \mathbb{R}$ at time 0 and the cash flow $\pp (R_T)$ at time $T$. Assume that the provider establishes at time 0 a portfolio $h$ of plain-vanilla call and put options written on $X$. Then we have the following definition for the provider's hedged cash flow and fair price
where $B(0,T)$ is the price of the domestic zero-coupon bond with unit notional principal. 

\begin{definition} \label{def2.5}
{\rm For a generic EPS introduced in Definition \ref{def2.1}, the provider's \textit{hedged cash flow} at time $T$ per one unit of the notional principal $\Np$, denoted by $\CCF_T(c,h)$, equals
\begin{equation*}
\CCF_T(c,h)=(c-H_0) (B(0,T))^{-1}+H_T+\pp (R_T).
\end{equation*}
We say that an  EPS can be \textit{statically hedged} by the provider if there exists a static portfolio $\whh$ established at time 0 and an initial premium $\whc \in \mathbb{R}$ such the equality $\CCF_T(\whc,\whh)=0$ holds almost surely. Then the number $\whc$ is called the \textit{fair premium} for an EPS per one unit of the notional principal.}
\end{definition}

Notice that it suffices to find $\whh$ such that the  $\whH_T=N \pp (R_T)$ where $N$ is the notional principal of an EPS
and set $\whc=\whH_0$. The static hedge and fair price for a generic EPS with arbitrary payoff structure can be identified from the following proposition, which is due to Xu et al. \cite{XLR2024} (see Proposition 3.1 therein). Recall that $l_0=g_0=0$ and, for convenience, we also adopt the convention that $f_0=p_0=0$. We stress that the static hedge from Proposition \ref{pro2.1} is model-free and can be easily implemented if the relevant European options are traded.

\begin{proposition} \label{pro2.1}
The static hedge $(\whc,\whh)$ composed of call and put options for a generic EPS with the notional principal $N$ satisfies the following equation
\begin{equation} \label{eq1}
\whH_T= N \sum_{i=0}^{n} \frac{p_{i+1}-p_i}{X_0}\,\PutT (X,K^l_i(X_0),T)
- N \sum_{j=0}^{m} \frac{f_{j+1}-f_j}{X_0}\,\CallT (X,K^g_j(X_0),T)
\end{equation}
where $K^l_i(X_0)=(1+l_i)X_0$ and $K^g_j(X_0)=(1+g_j)X_0$ for every $i=0,1,\dots, n$ and $j=0,1,\dots ,m$. The fair premium $\whc$ for an EPS is equal to $\whH_0$ per one unit of the notional principal where, in the domestic currency,
\begin{equation} \label{eq2}
\whH_0 = \sum_{i=0}^{n} \frac{p_{i+1}-p_i}{X_0}\,\PutO (X,K^l_i(X_0),T)
- \sum_{j=0}^{m} \frac{f_{j+1}-f_j}{X_0}\,\CallO (X,K^g_j(X_0),T).
\end{equation}
\end{proposition}

The proof of Proposition \ref{pro2.1} hinges on the following elementary equalities, for every $i=0,1,\dots,n$,
\begin{align*}
(l_i-R_T)^+ = \frac{((1+l_i)X_0-X_T)^+}{X_0} = \frac{1}{X_0}\,\PutT (X,(1+l_i)X_0,T)
\end{align*}
and, for every $j=0,1,\dots,m$,
\begin{align*}
(R_T-g_j)^+ =\frac{(X_T-(1+g_j)X_0)^+}{X_0}=\frac{1}{X_0}\,\CallT (X,(1+g_j)X_0,T).
\end{align*}

\begin{remark} \label{rem2.1}
{\rm Let us define $\wh{X}_t:=(X_0)^{-1} X_t$ so that  $\wh{X}_0=1$ and the rates of return on processes $\wh{X}$ and $X$ manifestly coincide. Then we obtain an equivalent representation for equality \eqref{eq1}
\begin{equation} \label{eq3}
\whH_T=N\sum_{i=0}^{n}(p_{i+1}-p_i)\,\PutT (\wh{X},1+l_i,T)-N \sum_{j=0}^{m}(f_{j+1}-f_j)\,\CallT (\wh{X},1+g_j,T),
\end{equation}
which shows that, formally, it suffices to examine options written on the normalized process $\wh{X}$. Since in Section \ref{sec3}, the process $X$ is assumed to represent a market quantity (e.g., the price of a given equity or
a specific domestic or foreign market index) we will use there representation \eqref{eq1}, which is based on traded options written on $X$. In contrast, in Section \ref{sec4} we prefer to employ equality \eqref{eq3} since, firstly, options on a bespoke cross-currency portfolio $X$ are unlikely to be traded and, secondly, equation \eqref{eq3} yields simpler pricing and hedging formulae within the cross-currency market model of Section \ref{sec3.2}. To sum up, equality \eqref{eq1} is used when the emphasis is put on the possibility of static hedging of an EPS with traded options on market indices, whereas representation \eqref{eq3} is convenient when performing computations within a stochastic market model, especially when dealing with a basket option on a bespoke portfolio composed of domestic and foreign assets.}
\end{remark}

\begin{remark} \label{rem2.2}
{\rm Recall from Definition \ref{def2.1a} that an EPS is said to be {\it fair} if its initial premium vanishes, that is, $\whc =0$ in Definition \ref{def2.5}. In the case of a buffer or floor EPS, assuming that all parameters except for the fee rate $f_2$ are fixed, one can find a unique level of $f_2$ for which the fair premium for an EPS is null by solving for $f_2$ the linear equation \eqref{eq2} (or, equivalently, \eqref{eq3}) with initial hedging cost $\whH_0$ set to 0. For instance, the {\it fair fee rate} for the buffer EPS equals
\begin{align*}
 \wh{f}_2 = p_2\,\frac{\PutO (X,K^l_1(X_0),T)}{\CallO (X,K^g_1(X_0),T)}= p_2\,\frac{\PutO (\wh{X},1+l_1,T)}{\CallO (\wh{X},1+g_1,T)}
\end{align*}
and for the floor EPS we have}
\begin{align*}
 \wh{f}_2 =p_1\,\frac{\PutO (X,K^l_1(X_0),T)-\PutO (X,X_0,T)}{\CallO (X,K^g_1(X_0),T)}=
  p_1\, \frac{\PutO (\wh{X},1+l_1,T)-\PutO (\wh{X},1,T)}{\CallO (\wh{X},1+g_1,T)}.
\end{align*}
\end{remark}

\section{Protection Against Separate Domestic and Foreign Losses}   \label{sec3}

For simplicity of notation, we will focus here on the case of two financial markets, the {\it domestic market} in Australia and the foreign market in the U.S., with AUD and USD being the respective currencies, although an extension to several economies does not present any difficulties. Then it suffices to examine two separate hedging portfolios, the first one for the domestic component of holder's portfolio with the notional principal $\Nd$ given in AUD and the second one for the foreign component with the notional principal $\Nf$ given in USD (or, equivalently, with the domestic value $\Nfd = \X_0 \Nf$ in AUD). The exchange rate process $\X$ represents the {\it direct quote}, which means that $\X_t$ gives the price at time $t$ of one unit of the foreign currency (by our convention, USD) in terms of variable number of units of the domestic currency (AUD).

Generally speaking, investors are faced with the choice: either to obtain a protection against losses from their domestic and foreign investments using separate EPSs, one EPS for each component of their portfolio, or to enter into a single EPS furnishing an effective protection against total losses from their combined portfolio of domestic and foreign assets.
In the latter case, the portfolio's value process is expressed in the domestic currency using a particular accounting method.

Assume first that the investor decides to purchase two independent products yielding protection against negative returns on either $\Sd$ or $\Sf$, respectively, where $\Sd_t$ and $\Sf_t$ represent the current values at time $t$ of domestic and foreign reference portfolios (e.g., particular market indices) with the respective values expressed in the domestic and foreign currency.  To be more specific, we first assume that the payoff from domestic EPS with the profile $\ppd$ (resp. the foreign EPS with the profile $\ppf$) is referencing the domestic return $\Rd_T$ (resp. the {\it nominal} foreign return $\Rfd_T$) with the corresponding domestic and foreign EPSs examined in Sections \ref{sec3.3} and \ref{sec3.4}, respectively.  Next, we introduce and study the domestic pricing of foreign EPSs based on the {\it effective} and {\it quanto} foreign returns in Sections \ref{sec3.5} and \ref{sec3.6}, respectively. Subsequently, in Section \ref{sec4}, we study cross-currency EPSs that yield protection against combined effective losses of a cross-currency portfolio.

\subsection{Cross-Currency Portfolios}  \label{sec3.1}

We introduce the following notation for the {\it domestic rate of return} and the (nominal) {\it foreign rate of return}, as well as the return on the foreign currency, for every $t\in [0,T]$,
\begin{equation*}
\Rd_t:=\frac{\Sd_t-\Sd_0}{\Sd_0},\quad \Rf_t=\frac{\Sf_t-\Sf_0}{\Sf_0}, \quad \RQ_t=\frac{\X_t-\X_0}{\X_0}.
\end{equation*}
Another important concept is the {\it effective foreign rate of return,} that is, the rate of return on the foreign portfolio $\Sf$ when its value at any date $t$ is expressed in the domestic currency using the current value $X_t$ of the exchange rate. The effective foreign return satisfies, for every $t\in [0,T]$,
\begin{equation*}
\Rfd_t := \frac{\Sfd_t-\Sfd_0}{\Sfd_0}=\frac{\X_t \Sf_t -\X_0 \Sf_0}{\X_0 \Sf_0} = \Rf_t + \RQ_t + \Rf_t \RQ_t
\end{equation*}
where the quantity $\Sfd_t :=\X_t\Sf_t$ represents the current domestic value of $\Sf_t$.

We also need the notation for the notional principal of the holder's cross-currency portfolio and for its decomposition into the domestic and foreign holdings. We consider an agent who invests at time 0 in a portfolio of domestic and foreign risky assets with a total wealth equal to $\Np$ (in AUD). Then the notional principal $\Np$ can be decomposed
into domestic and foreign holdings as follows $\Np=\Nd+\Nfd$ where $\Nd$ (in AUD) is the notional principal of domestic assets in holder's portfolio and $\Nfd$ (also given in AUD) is the notional principal of foreign assets expressed in the domestic currency. Recall that we denote by $\Nf$ the notional principal of foreign assets in holder's portfolio expressed in the foreign currency, which means that $\Nf =(\X_0)^{-1}\Nfd$ (in USD). This leads to an equivalent representation $\Np=\Nd+\X_0\Nf$.

It is convenient to introduce the weight $w:=\Nd/\Np$ (resp. $1-w:=\Nfd/\Np$), which gives the proportion of the total wealth invested at initial date in domestic assets (resp. foreign assets). Then the number of units of the domestic and foreign reference portfolios $\Sd$ and $\Sf$ held in a cross-currency portfolio satisfy
\begin{equation*}
\Nd= w\Np=\alpha_0 \Sd_0, \quad \Nfd=(1-w)\Np=\beta_0 \Sfd_0
\end{equation*}
where $\alpha_0$ and $\beta_0$ represent holdings in $\Sd$ and $\Sf$, respectively, and satisfy
\begin{equation*}
\alpha_0:=\frac{\Nd}{\Sd_0}=\frac{w\Np}{\Sd_0}, \quad \beta_0:=\frac{\Nfd}{\Sfd_0}=\frac{(1-w)\Np}{\Sfd_0}= \frac{(1-w)\Np}{\X_0\Sf_0}.
\end{equation*}
Then the {\it cross-currency portfolio} in domestic assets represented by $\Sd$ and foreign assets given by $\Sf$ with the notional principal $\Np$ and the weight $w$ can be identified with the pair $(\alpha_0,\beta_0)$.

\subsection{Cross-Currency Market Model}         \label{sec3.2}

Although our main focus is on a model-free static hedging of an EPS, we will also use a particular stochastic model to give an
alternative pricing method, which will be used in assessment of the forward performance of various classes of
cross-currency EPSs. Therefore, we briefly summarize in this section the classical cross-currency model as presented, for instance, in Chapter 13 of Musiela and Rutkowski \cite{MM2002}. The domestic and foreign bank accounts are given by $dB_t^d = r^d B_t^d\,dt$ and $dB_t^f = r^f B_t^f\,dt$ for some constant short-term interest rates $r^d$ and $r^f$. For a fixed horizon date $T>0$, the dynamics of domestic and foreign equities, $\Sd$ and $\Sf$, and the exchange rate, $\X$, under the physical probability measure $\bbP$ are specified by the SDEs
\begin{align} \label{eq4a}
d\Sd_t=\Sd_t\big(\mu_t^d\,dt+\bm{\sigma}^d\,d\textbf{W}_t\big),\nonumber \\
d\Sf_t=\Sf_t\big(\mu_t^f\,dt+\bm{\sigma}^f\,d\textbf{W}_t\big),\\
d\X_t=\X_t\big(\mu_t^q\,dt+\bm{\sigma}^q\,d\textbf{W}_t\big), \nonumber
\end{align}
with strictly positive initial conditions $\Sd_0,\Sf_0$ and $\X_0$ where the column vector $\textbf{W}= [W^1,W^2,W^3]^*$ where $^*$ denotes the transpose is a standard Brownian motion on $(\Omega ,\mathcal{F},\bbP)$ and the volatilities are given by row vectors: $\bm{\sigma}^d=[\sigma^{1d},\sigma^{2d},\sigma^{3d}],\, \bm{\sigma}^f = [\sigma^{1f}, \sigma^{2f}, \sigma^{3f}]$
and $\bm{\sigma}^q=[\sigma^{1q},\sigma^{2q},\sigma^{3q}]$. Then the dynamics of the domestic and foreign equity
and the exchange rate under an equivalent probability measure $\Q^d$, which is the {\it domestic martingale measure}, are given as
\begin{align} \label{eq4}
&d\Sd_t=\Sd_t\big( (r^d-\kappa^d)\,dt+\bm{\sigma}^d\,d\textbf{W}_t^d\big),\nonumber \\
&d\Sf_t=\Sf_t \big((r^f-\kappa^f-\bm{\sigma}^f(\bm{\sigma}^q)^*)\,dt+\bm{\sigma}^f\,d\textbf{W}_t^d\big),\\
&d\X_t=\X_t\big((r^d-r^f)\,dt+\bm{\sigma}^q\,d\textbf{W}_t^d\big), \nonumber
\end{align}
where $\textbf{W}^d$ is a standard Brownian motion on $(\Omega ,\mathcal{F},\Q^d)$. Finally, the process $\textbf{W}_t^f=\textbf{W}_t^d-(\bm{\sigma}^q)^*\,t$ is a standard Brownian motion
on $(\Omega ,\mathcal{F},\Q^f)$ where $\Q^f$ is the {\it foreign martingale measure} and thus the processes
$\Sd,\Sf$ and $\X$ are governed under $\Q^f$ by the following SDEs
\begin{align}  \label{eq5}
&d\Sd_t=\Sd_t\big((r^d-\kappa^d+\bm{\sigma}^d (\bm{\sigma}^q)^*)\,dt+\bm{\sigma}^d\,d\textbf{W}_t^f\big),\nonumber\\
&d\Sf_t=\Sf_t\big((r^f-\kappa^f)\,dt+\bm{\sigma}^f \,d\textbf{W}_t^f\big),\\
&d\X_t=\X_t\big((r^d-r^f+\bm{\sigma}^q (\bm{\sigma}^q)^*)\,dt+\bm{\sigma}^q\,d\textbf{W}_t^f\big).\nonumber
\end{align}
It is well known that if the volatility matrix $\Sigma = [\bm{\sigma}^{d*},\bm{\sigma}^{f*},\bm{\sigma}^{q*}]$ is non-singular, then the cross-currency market model given by \eqref{eq4} is arbitrage-free and complete.


It is worth noting that initial prices of the foreign and domestic equity is immaterial when pricing
equity protection swaps within the setup of a cross-currency market model. This is easy to understand since
the payoffs of equity protection swaps only depend on realized returns on the reference domestic and foreign portfolios,
and respective notional principals so they are independent of the initial prices of equities.
However, when presenting implementations of static hedging for various classes of EPSs, we assume that
the trading strategies are based on actively traded options and thus the initial market prices of equities
and actual strikes for traded options will be used. To sum up, one should note the crucial distinction between the real-world pricing results based on market quotes for traded options, which give {\it market prices} of EPSs, and theoretical computations yielding the {\it model prices} of options obtained within a cross-currency market model.
The former approach relies on the existence of a liquid market for relevant European options, whereas the latter hinges on an implicit assumption that European options can be dynamically hedged using underlying domestic and foreign assets.

\subsection{Protection Against Domestic Losses}     \label{sec3.3}

To protect the domestic component of his portfolio with the initial wealth $\Np$ and weight $w$, the holder may use
the {\it domestic EPS} with the notional principal $\Nd=w\Np$ and the payoff at time $T$ given in the domestic currency
\begin{equation*}
\EPSd:=\Nd \ppd (\Rd_T)=w \Np \ppd (\Rd_T).
\end{equation*}
We write $K^d$ to denote the strike price denominated in the domestic currency (resp., the foreign currency).
It is assumed that call and put options with payoffs $C_T(\Sd,K^d,T)=(\Sd_T-K^d)^+$ and $P_T(\Sd,K^d,T)=(K^d-\Sd_T)^+$ for
all strikes $K^d$ are traded, which is a natural assumption given that the portfolio $\Sd$ typically represents a market index. The case of the domestic EPS does not demand a detailed study since it was already examined in Xu et al. \cite{XLR2024}. We stress that all results on static hedging are model-free and thus they do not rely on the cross-currency market model
introduced in Section \ref{sec3.2}. The following result is an easy consequence of Proposition \ref{pro2.1}.

\begin{corollary} \label{cor3.1}
The static hedging in terms of domestic equity options for the domestic EPS is given by Proposition \ref{pro2.1} with
$\pp=\ppd,N=\Nd$ and $X=\Sd$, that is, $\EPSd=\whHd$ where for every $t\in [0,T]$,
\begin{equation*}
\whHdt=\Nd\sum_{i=0}^{n}\frac{p_{i+1}-p_i}{\Sd_0}\,P_t(\Sd,K^l_i(\Sd_0),T)
 -\Nd\sum_{j=0}^{m}\frac{f_{j+1}-f_j}{\Sd_0}\,C_t(\Sd,K^g_j(\Sd_0),T)
\end{equation*}
where $K^l_i(\Sd_0)=(1+l_i)\Sd_0$ and $K^g_j(\Sd_0)=(1+g_j)\Sd_0$ for every $i=0,1,\dots, n$ and $j=0,1,\dots ,m$.
\end{corollary}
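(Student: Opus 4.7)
The plan is to observe that the corollary is a direct specialization of Proposition \ref{pro2.1} to the domestic setting, combined with the elementary fact that a static portfolio's value at any intermediate time $t$ is the sum of the time-$t$ prices of its constituent options. First I would identify the substitutions: set $\pp \leftarrow \ppd$, $N \leftarrow \Nd = w\Np$, $X \leftarrow \Sd$, and $R_T \leftarrow \Rd_T = (\Sd_T-\Sd_0)/\Sd_0$. Since $\ppd$ is, by assumption, a payoff profile of the form required in Definition \ref{def2.1} and $\Sd$ is a strictly positive reference portfolio expressed in the domestic currency, the domestic EPS with payoff $\EPSd = \Nd\ppd(\Rd_T)$ fits the framework of a standard EPS verbatim.

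Next I would invoke Proposition \ref{pro2.1} directly to obtain the terminal identity
\begin{equation*}
\EPSd = \whHd = \Nd \sum_{i=0}^{n} \frac{p_{i+1}-p_i}{\Sd_0}\,\Put_T(\Sd,K^l_i(\Sd_0),T) - \Nd \sum_{j=0}^{m} \frac{f_{j+1}-f_j}{\Sd_0}\,\Call_T(\Sd,K^g_j(\Sd_0),T),
\end{equation*}
where the strikes are $K^l_i(\Sd_0)=(1+l_i)\Sd_0$ and $K^g_j(\Sd_0)=(1+g_j)\Sd_0$. This uses the elementary put and call decompositions $(l_i-\Rd_T)^+ = (\Sd_0)^{-1}\Put_T(\Sd,(1+l_i)\Sd_0,T)$ and $(\Rd_T-g_j)^+ = (\Sd_0)^{-1}\Call_T(\Sd,(1+g_j)\Sd_0,T)$ noted immediately after Proposition \ref{pro2.1}.

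Finally, to extend the identity from $t=T$ to every $t\in[0,T]$ as claimed in the corollary, I would invoke the static nature of the hedge (Definition \ref{def2.4}): the coefficients $a_i = \Nd(p_{i+1}-p_i)/\Sd_0$ and $b_j = -\Nd(f_{j+1}-f_j)/\Sd_0$ are determined at time $0$ and held constant throughout $[0,T]$, so the portfolio value $\whHdt$ at any intermediate date is simply the linear combination with the same weights of the current market prices $\Put_t(\Sd,K^l_i(\Sd_0),T)$ and $\Call_t(\Sd,K^g_j(\Sd_0),T)$.

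There is essentially no obstacle here; the only mildly non-trivial point is recognizing that Proposition \ref{pro2.1} was stated only for the boundary dates $t=0$ and $t=T$, but the extension to $t\in[0,T]$ is immediate from the static trading convention. The content of the corollary is therefore purely a matter of substitution and notation.
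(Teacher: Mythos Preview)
Your proposal is correct and matches the paper's approach: the paper states that Corollary \ref{cor3.1} is ``an easy consequence of Proposition \ref{pro2.1}'' and gives no further proof, which is precisely what you have spelled out by making the substitutions $\pp=\ppd$, $N=\Nd$, $X=\Sd$ and invoking the static portfolio convention of Definition \ref{def2.4} for intermediate dates.
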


Typically, the domestic and foreign portfolios of investors can be associated with existing market indices, such as the
S\&P 500 Index in the U.S. or the S\&P/ASX 200 Index in Australia, which encompass a large majority of listed equities in respective economy. Furthermore, many financial institutions maintain and offer shares in exchange traded funds (ETFs) containing multiple stocks which are designed to meet a predetermined goal,  most frequently, to approximate the performance of a certain market index. Hence ETFs can be considered as actively traded proxies for market indices and thus the assumption that market indices can be seen as traded assets has some merit.

Furthermore, it is clear from Corollary \ref{cor3.1} that liquidly traded options on market indices, which are available on stock exchanges worldwide, can be used by providers of domestic and foreign EPSs for implementation of static hedging and hence also pricing of various EPS products they offer, not only when they are issued but also for marking-to-market if, for instance, the cancellation of an EPS by its holder is allowed. In this context, it is worth mentioning the CBOE S\&P 500 Index options (SPX), Nanos S\&P 500 Index options (NANOS), and SPDR S\&P 500 ETF options (SPY) offered in the U.S., as well as ASX S\&P/ASX 200 Index options (XJO) actively traded in Australia. The above mentioned index and ETF options are of European style, except for SPY options, which are of American style.

Plain-vanilla options on most typical domestic and foreign portfolios, $\Sd$ and $\Sf$, are likely to be actively traded on options exchanges and thus Corollary \ref{cor3.1}, which is a model-free result, is sufficient for real-life market pricing and static hedging of a domestic EPS. In contrast, Corollary \ref{cor3.2} is useful when studying model pricing of domestic EPSs and using their model prices in the context of more complex cross-currency EPSs, which are studied in the foregoing sections. If we assume that the dynamics of the domestic portfolio $\Sd$ are given by \eqref{eq4}, then we may use the classical Black-Scholes options pricing formula to obtain the following pricing result for the domestic EPS where we write $\Phin$ to denote the cumulative density function of the standard normal distribution.

\begin{corollary} \label{cor3.2}
The model price at time 0 of the domestic EPS with the payoff $\EPSd$ at time $T$ equals, in the domestic currency,
\begin{align*}
\EPSdo&=\Nd\sum_{i=0}^{n}(p_{i+1}-p_i)\Big( e^{-r^d T}(1+l_i) \Phin\big(-d_-(l_i,T)\big)-e^{-\kappa^d T}\Phin\big(-d_+(l_i,T)\big)\Big)\\
&-\Nd\sum_{j=0}^{m}(f_{j+1}-f_j)\Big(e^{-\kappa^d T}\Phin\big(d_+(g_j,T)\big)- e^{-r^d T}(1+g_j)\Phin\big(d_-(g_j,T)\big)\Big)
\end{align*}
where
\begin{equation*}
d_{\pm}(x,T):=\frac{1}{\|\bm{\sigma}^d\|\sqrt{T}}\bigg[-\ln (1+x)+ \Big(r^d -\kappa^d \pm\frac{1}{2}\|\bm{\sigma}^d\|^2\Big) T\bigg].
\end{equation*}
\end{corollary}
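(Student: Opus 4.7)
The plan is to combine Corollary \ref{cor3.1}, which gives the static hedge representation in terms of plain-vanilla puts and calls on $\Sd$, with the standard Black-Scholes pricing formula applied under the domestic martingale measure $\Q^d$ specified by \eqref{eq5.1.1}. Since $\EPSdo = \whH^d_0$ by the static-hedging assertion of Corollary \ref{cor3.1}, it suffices to evaluate each $P_0(\Sd,K^l_i(\Sd_0),T)$ and $C_0(\Sd,K^g_j(\Sd_0),T)$ in closed form and substitute back.

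First I would recall that under $\Q^d$ the domestic equity $\Sd$ is a geometric Brownian motion with drift $r^d$ and total volatility $\|\bm{\sigma}^d\|$, so the classical Black-Scholes formulas apply. For a generic strike $K$, the time-$0$ prices are
\begin{equation*}
C_0(\Sd,K,T)=\Sd_0\Phin(d_+)-Ke^{-r^d T}\Phin(d_-),\qquad P_0(\Sd,K,T)=Ke^{-r^d T}\Phin(-d_-)-\Sd_0\Phin(-d_+),
\end{equation*}
with $d_\pm=\big(\|\bm{\sigma}^d\|\sqrt{T}\big)^{-1}\big[\ln(\Sd_0/K)+(r^d\pm\tfrac{1}{2}\|\bm{\sigma}^d\|^2)T\big]$.

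Next I would specialize these formulas to the moneyness-type strikes $K^l_i(\Sd_0)=(1+l_i)\Sd_0$ and $K^g_j(\Sd_0)=(1+g_j)\Sd_0$ that appear in the hedge. The key simplification is that with $K=(1+x)\Sd_0$ one has $\ln(\Sd_0/K)=-\ln(1+x)$, so the $d_\pm$ reduce exactly to the $d_\pm(x,T)$ defined in the statement. Dividing by the normalizing factor $\Sd_0$ that appears in Corollary \ref{cor3.1} then yields
\begin{equation*}
\frac{P_0(\Sd,K^l_i(\Sd_0),T)}{\Sd_0}=e^{-r^d T}(1+l_i)\Phin\!\big(-d_-(l_i,T)\big)-\Phin\!\big(-d_+(l_i,T)\big),
\end{equation*}
\begin{equation*}
\frac{C_0(\Sd,K^g_j(\Sd_0),T)}{\Sd_0}=\Phin\!\big(d_+(g_j,T)\big)-e^{-r^d T}(1+g_j)\Phin\!\big(d_-(g_j,T)\big).
\end{equation*}

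Finally I would substitute these two identities into the expression for $\whH^d_0$ obtained from Corollary \ref{cor3.1} (evaluated at $t=0$) and collect terms to recover the announced formula for $\EPSdo$. There is no real obstacle: the argument is a direct computation, and the only thing to verify carefully is the bookkeeping in the definition of $d_\pm(x,T)$, in particular the sign of $\ln(1+x)$ after the substitution $K=(1+x)\Sd_0$, which is why the formula in the statement uses $-\ln(1+x)$ rather than $\ln(1+x)$.
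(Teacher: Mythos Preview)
Your proposal is correct and follows exactly the approach indicated in the paper, which simply states that the result is obtained by applying the classical Black-Scholes formula to the static hedge of Corollary \ref{cor3.1}. Your explicit verification of the substitution $K=(1+x)\Sd_0$ leading to $\ln(\Sd_0/K)=-\ln(1+x)$ and the resulting normalization by $\Sd_0$ fills in precisely the details the paper leaves implicit.
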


\subsection{Protection Against Nominal Foreign Losses}    \label{sec3.4}

Analogous pricing results can be formulated for the {\it foreign} EPS with the payoff denominated in the foreign currency and given by $\EPSf:=\Nf\ppf(\Rf_T)$ where the payoff function $\ppf$ is given by parameters $\wh{p}_i,\wh{f}_j,\wh{l}_i$ and $\wh{g}_j$. Then the model-free pricing formula for the foreign EPS with the payoff $\ppf (\Rf_T)$ at maturity $T$ and the foreign notional principal $\Nf$ reads, in the foreign currency,
\begin{equation*}
\EPSf=\Nf \sum_{i=0}^{\wh{n}}\frac{\wh{p}_{i+1}-\wh{p}_i}{\Sf_0}\,P_T(\Sf,K^{\wh{l}}_i(\Sf_0),T)
 - \Nf \sum_{j=0}^{\wh{m}}\frac{\wh{f}_{j+1}-\wh{f}_j}{\Sf_0}\,C_T(\Sf,K^{\wh{g}}_j(\Sf_0),T)
\end{equation*}
where $K^{\wh{l}}_i(\Sd_0)=(1+\wh{l}_i)\Sd_0$ and $K^{\wh{g}}_j(\Sd_0)=(1+\wh{g}_j)\Sd_0$ for every $i=0,1,\dots, \wh{n}$ and $j=0,1,\dots ,\wh{m}$. Using dynamics \eqref{eq5} under the foreign martingale measure $\Q^f$, we obtain the foreign market counterpart of Proposition \ref{pro3.2}.

\begin{corollary} \label{cor3.2a}
The model price at time 0 of the foreign EPS with the foreign notional principal $\Nf$ equals, in the foreign currency,
\begin{align*}
\EPSfo&=\Nf \sum_{i=0}^{\wh{n}}(\wh{p}_{i+1}-\wh{p}_i)\Big( e^{-r^f T}(1+\wh{l}_i) \Phin\big(-d_-(\wh{l}_i,T)\big)
    -e^{-\kappa^f T}\Phin\big(-d_+(\wh{l}_i,T)\big)\Big)\\
&-\Nf \sum_{j=0}^{\wh{m}}(\wh{f}_{j+1}-\wh{f}_j)\Big(e^{-\kappa^f T}\Phin\big(d_+(\wh{g}_j,T)\big)- e^{-r^f T}(1+\wh{g}_j)\Phin\big(d_-(\wh{g}_j,T)\big)\Big)
\end{align*}
where
\begin{equation*}
d_{\pm}(x,T):=\frac{1}{\|\bm{\sigma}^f\|\sqrt{T}}\bigg[-\ln (1+x)+ \Big(r^f -\kappa^f \pm\frac{1}{2}\|\bm{\sigma}^f\|^2\Big) T\bigg].
\end{equation*}
\end{corollary}

\subsection{Protection Against Effective Foreign Losses}    \label{sec3.5}

A more practical protection against losses from the foreign part of a cross-currency portfolio with the  notional principal $\Np$ and the weight $1-w$ for foreign assets can be obtained by entering into the {\it effective foreign EPS} with the notional principal $\Nfd=\X_0 \Nf $ and with the payoff at time $T$ to the provider given by
\begin{equation*}
\EPSfe:=\Nfd\ppf(\Rfd_T)=\X_0 \Nf \ppf(\Rfd_T)=(1-w)\Np\ppf(\Rfd_T).
\end{equation*}
Consider a foreign equity call option struck in the domestic currency where the holder has the right to buy at time $T$ one unit of $\Sf$ by paying $K^d$ units of domestic currency, which yields the payoff equal to $(\Sfd_T-K^d)^+$ in the domestic currency. We set  $\wtC_T(\Sfd,K^d,T)=(\Sfd_T-K^d)^+$ and $\wtP_T(\Sfd,K^d,T)=(K^d-\Sfd_T)^+$ and we denote by
$\wtC(\Sfd,K^d,T)$ and $\wtP(\Sfd,K^d,T)$ the price processes of call and put options written on the
synthetic risky asset $\Sfd = Q \Sf$ representing the domestic value of a reference portfolio of foreign assets.

\begin{corollary} \label{cor3.3}
The static hedging in terms of foreign equity call and put options struck in the domestic currency for an effective
foreign EPS is given by Proposition \ref{pro2.1} with $\pp=\ppf,N=\Nfd$ and $X=\Sfd$, that is,
$\EPSfe=\whHfe$ where, for every $t\in [0,T]$,
\begin{equation*}
\whHfet=\Nfd\sum_{i=0}^{\wh{n}}\frac{\wh{p}_{i+1}-\wh{p}_i}{\Sfd_0}\,\wtP_t(\Sfd,K^{\wh{l}}_i(\Sfd_0),T)
-\Nfd\sum_{j=0}^{\wh{m}}\frac{\wh{f}_{j+1}-\wh{f}_j}{\Sfd_0}\,\wtC_t(\Sfd,K^{\wh{g}}_j(\Sfd_0),T)
 \end{equation*}
where $K^{\wh{l}}_i(\Sfd_0)=(1+\wh{l}_i)\Sfd_0$ and $K^{\wh{g}}_j(\Sfd_0)=(1+\wh{g}_j)\Sfd_0$ for every $i=0,1,\dots,\wh{n}$ and $j=0,1,\dots,\wh{m}$.
\end{corollary}

For the purpose of static hedging of an effective foreign EPS, we assume that the provider can trade in foreign equity options with the foreign equity struck in domestic currency as the underlying asset and strike price given in the domestic currency.
Several ETFs available in Australia allow to track the performance of the S\&P 500 AUD Index, which is the index designed to measure the AUD performance of large capitalisation U.S. equities. Investing in such ETFs can be seen as a proxy for the theoretical concept of the value process $\Sfd$ representing the domestic value of a foreign portfolio, e.g., a particular market index in the foreign economy. Corollary \ref{cor3.3} shows that related options of European style could be used to hedge and price effective foreign EPSs and thus, in principle, their current market values would be given in terms of market quotes for options on such ETFs provided that they are publicly available. 

It is a realistic assumption that the above mentioned options are available on the OTC market, in which case the provider of an effective foreign EPS can establish the static hedging strategy. However, for the purpose of obtaining a better insight into properties of arbitrage-free prices of effective foreign EPSs of various classes and with differing parameters, we will also use the following classical result, which is known to be valid within the setup of Section \ref{sec3.1}. In fact, it is an easy consequence of the dynamics of $\Sfd$ under $\Qd$
\begin{align*}
d\Sfd_t = \Sfd_t \big( (r^d-\kappa^f) \,dt+(\bm{\sigma}^f+\bm{\sigma}^q)\,d\textbf{W}_t^d\big).
\end{align*}

\begin{proposition} \label{pro3.1}
The model price at time $t$ of a foreign call option struck in the domestic currency with the terminal payoff at maturity $T$ given by $\wtC_T(\Sfd,K^d,T)= (\Sfd_T-K^d)^+$ is equal to, for every $t\in [0,T]$,
\begin{equation*}
\wtC_t(\Sfd,K^d,T)=\wt{S}^{f,e}_t\Phin\big(\wtd_+(\Sfd_t,T-t)\big)-\wtK^d_t\Phin\big(\wtd_-(\Sfd_t,T-t)\big)
\end{equation*}
where $\wt{S}^{f,e}_t:=  e^{-\kappa^f(T-t)}\Sfd_t, \wtK^d_t:= e^{-r^d(T-t)}K^d$ and
\begin{equation*}
\wtd_{\pm}(\Sfd_t,T-t):=\frac{1}{\|\bm{\sigma}^f+\bm{\sigma}^q\|\sqrt{T-t}}
\bigg[\ln \frac{\wt{S}^{f,e}_t}{\wtK^d_t}\pm\frac{1}{2}\|\bm{\sigma}^f+\bm{\sigma}^q\|^2(T-t)\bigg].
\end{equation*}
The price of a put option with the payoff $\wtP_T(\Sfd,K^d,T)=(K^d-\Sfd_T)^+$ equals
\begin{equation*}
\wtP_t(\Sfd,K^d,T)=\wtK^d_t\Phin\big(-\wtd_-(\Sfd_t,T-t)\big)-\wt{S}^{f,e}_t\Phin\big(-\wtd_+(\Sfd_t,T-t)\big).
\end{equation*}
\end{proposition}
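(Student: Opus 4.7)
The plan is to reduce the pricing problem to a one-dimensional Black--Scholes problem by identifying $\Sfd$ as a geometric Brownian motion under the domestic martingale measure $\Qd$ with constant interest rate $r^d$ and scalar volatility $\|\bm{\sigma}^f+\bm{\sigma}^q\|$. The dynamics of $\Sfd$ under $\Qd$ stated just before the proposition is the key input; once that dynamics is in hand, the formula follows immediately from the standard Black--Scholes argument.

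First I would verify the dynamics of $\Sfd_t=\X_t \Sf_t$ under $\Qd$ using It\^o's product rule together with the SDEs for $\Sf$ and $\X$ given in \eqref{eq5.1.1}. The cross-variation produces the term $\X_t\Sf_t\bm{\sigma}^f(\bm{\sigma}^q)^{*}\,dt$, which exactly cancels the quanto drift adjustment $-\bm{\sigma}^f(\bm{\sigma}^q)^{*}$ appearing in the SDE for $\Sf$ under $\Qd$. After combining the drifts, one obtains
\begin{equation*}
d\Sfd_t = \Sfd_t\bigl(r^d\,dt+(\bm{\sigma}^f+\bm{\sigma}^q)\,d\textbf{W}^d_t\bigr),
\end{equation*}
so that $\Sfd/B^d$ is a $\Qd$-martingale, as required for an asset tradable in the domestic market.

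Next, since the driving Brownian motion is multidimensional but the volatility is deterministic, $\Sfd_T$ is lognormally distributed under $\Qd$ conditionally on $\cF_t$: more precisely, $\Sfd_T = \Sfd_t \exp\bigl(r^d(T-t) - \tfrac{1}{2}\|\bm{\sigma}^f+\bm{\sigma}^q\|^2(T-t) + (\bm{\sigma}^f+\bm{\sigma}^q)(\textbf{W}^d_T-\textbf{W}^d_t)\bigr)$, and the stochastic integral $(\bm{\sigma}^f+\bm{\sigma}^q)(\textbf{W}^d_T-\textbf{W}^d_t)$ is a centered Gaussian variable with variance $\|\bm{\sigma}^f+\bm{\sigma}^q\|^2(T-t)$. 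Writing the call price as
\begin{equation*}
\wtC_t(\Sfd,K^d,T)=e^{-r^d(T-t)}\EQd\bigl[(\Sfd_T-K^d)^+\mid\cF_t\bigr],
\end{equation*}
the computation reduces to the textbook Black--Scholes evaluation with spot $\Sfd_t$, strike $K^d$, interest rate $r^d$, time to maturity $T-t$, and volatility $\|\bm{\sigma}^f+\bm{\sigma}^q\|$, yielding the stated formula with $\wtd_{\pm}$. The put formula then follows either by the same direct computation or, more economically, from put--call parity $\wtC_t-\wtP_t = \Sfd_t - \wtK^d_t$ combined with the identity $\Phi(x)+\Phi(-x)=1$.

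There is no genuine obstacle in this argument; the only subtle point worth flagging is the justification that $\Sfd$ should be discounted with $B^d$ (not $B^f$) when computing the risk-neutral expectation, i.e.\ that $\Sfd$ behaves as a traded domestic asset in the $(\Qd,B^d)$ world. This is exactly what the martingale property of $\Sfd/B^d$ under $\Qd$ established in the first step confirms, so once the dynamics is verified, the remainder is a routine Black--Scholes calculation.
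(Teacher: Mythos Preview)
Your proposal is correct and follows exactly the approach the paper indicates: the paper states the dynamics $d\Sfd_t=\Sfd_t\big(r^d\,dt+(\bm{\sigma}^f+\bm{\sigma}^q)\,d\textbf{W}_t^d\big)$ under $\Qd$ and declares the proposition an easy consequence, which is precisely the reduction to a one-dimensional Black--Scholes problem that you carry out in detail. Your explicit verification of the dynamics via the It\^o product rule and the cancellation of the quanto drift term is a welcome elaboration of what the paper leaves implicit.
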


Notice that Proposition \ref{pro3.1} allows for a dynamic hedging of an effective foreign EPS using the underlying portfolio
$\Sf$, which is formally represented by the process $\Sfd$, and the domestic money market account $B^d$. Furthermore, by combining Corollary \ref{cor3.3} with Proposition \ref{pro3.1}, we obtain the pricing formula for an effective foreign EPS, which holds within the setup of Section \ref{sec3.2}.

\begin{corollary} \label{cor3.4}
The model price at time 0 of the effective foreign EPS with the payoff $\EPSfe$ at time $T$ equals, in the domestic currency,
\begin{align*}
\EPSfeo&=\Nfd\sum_{i=0}^{\wh{n}}(\wh{p}_{i+1}-\wh{p}_i)\Big(e^{-r^d T}(1+\wh{l}_i) \Phin\big(-k_-(\wh{l}_i,T)\big)-e^{-\kappa^f T}\Phin\big(-k_+(\wh{l}_i,T)\big)\Big)\\
&-\Nfd\sum_{j=0}^{\wh{m}}(\wh{f}_{j+1}-\wh{f}_j)\Big(e^{-\kappa^f T}\Phin\big(k_+(\wh{g}_j,T)\big)
- e^{-r^d T}(1+\wh{g}_j)\Phin\big(k_-(\wh{g}_j,T)\big)\Big)
\end{align*}
where
\begin{equation*}
k_{\pm}(x,T):=\frac{1}{\|\bm{\sigma}^f+\bm{\sigma}^q\|\sqrt{T}}
\bigg[-\ln (1+x)+ \Big(r^d - \kappa^f \pm\frac{1}{2}\|\bm{\sigma}^f+\bm{\sigma}^q\|^2\Big) T\bigg].
\end{equation*}
\end{corollary}

\subsection{Protection Against Quanto Foreign Losses}      \label{sec3.6}

A {\it quantity-adjusting derivative} (or, simply, a {\it quanto derivatinve}) is a cross-currency derivative in which the underlying asset is denominated in one currency but settlement is made in another currency at a predetermined exchange rate, hereafter denoted as $\overline{\X}$. Another name for such a financial product is a {\it guaranteed exchange rate derivative} since they enable investors to fix at the contract's inception the exchange rate between two currencies. By construction, quanto derivatives help to safeguard the investors against unfavourable fluctuations in the exchange rate but, of course, at the cost of no benefiting from a favourable move of the exchange rate.

We henceforth assume that the exchange rate $\overline{\X}$ has been fixed at time 0 and at any date $t>0$ the domestic value of foreign holdings is calculated using the exchange rate $\overline{\X}$, which in fact is an investment opportunity offered to investors by some financial companies. In principle, the exchange rate $\overline{\X}$ in a quanto contract is completely arbitrary but we sometimes set $\overline{\X}=\X_0$ to simplify the presentation.

We first consider a quanto EPS with a reference portfolio composed of foreign assets only and with the foreign notional principal $\Nf$. In the {\it quanto foreign} EPS with the predetermined exchange rate $\overline{\X}$, the provider's cash flow at time $T$ equals, in the domestic currency,
\begin{equation*}
\EPSfq:=\overline{\X}\Nf\ppf(\Rf_T). 
\end{equation*}
Similarly, the terminal payoffs of the quanto call and put option written on $\Sf$ are defined by $\overline{C}_T(\Sf,K^f,T,\overline{\X}):=\overline{\X}(\Sf_T-K^f)^+$ and $\overline{P}_T(\Sf,K^f,T,\overline{\X}):=\overline{\X}(K^f-\Sf_T)^+$ where $\Sf_T$ and $K^f$ are given in the foreign currency and the option's payoff is expressed in the domestic currency. Similarly, let $\overline{C}(\Sf,K^f,T,\overline{\X})$ and $\overline{P}(\Sf,K^f,T,\overline{\X})$ stand for the domestic price processes
of quanto call and put options written on the foreign asset $\Sf$.
We first state a model-free result on static hedging of a quanto foreign EPS and hence also the price in terms of quanto options.

\begin{corollary} \label{cor3.5}
The static hedging in terms of quanto options for the quanto foreign EPS is given by Proposition \ref{pro2.1} with $\pp=\ppf,N = \Nf$ and $X=\Sf$, that is, $\EPSfq=\whHfq$ where, for every $t\in [0,T]$,
\begin{equation*}
\whHfqt=\Nf\sum_{i=0}^{\wh{n}}\frac{\wh{p}_{i+1}-\wh{p}_i}{\Sf_0}\,\overline{P}_t(\Sf,K^{\wh{l}}_i(\Sf_0),T,\overline{\X})
-\Nf\sum_{j=0}^{\wh{m}}\frac{\wh{f}_{j+1}-\wh{f}_j}{\Sf_0}\,\overline{C}_t(\Sf,K^{\wh{g}}_j(\Sf_0),T,\overline{\X})
\end{equation*}
where $K^{\wh{l}}_i(\Sf_0)=(1+\wh{l}_i)\Sf_0$ and $K^{\wh{g}}_j(\Sf_0)=(1+\wh{g}_j)\Sf_0$ for every $i=0,1,\dots, \wh{n}$ and $j=0,1,\dots ,\wh{m}$.
\end{corollary}

As a proxy for the concept of the quanto foreign return, we could mention returns on exchange traded funds aiming to provide investors with the performance of the S\&P 500 AUD Hedged Index, which uses a currency hedged strategy to reduce the risk of currency fluctuations while accepting underlying market risk of major U.S. equities.
As usual, we refer to AUD as the domestic currency but analogous indices are also offered for other major currencies. Then,
as shown in Corollary \ref{cor3.5}, the market value of the quanto foreign EPS would be given in terms of market prices of options on such ETFs, of course, provided that they are traded on the OTC market. 

If the provider of the quanto foreign EPS cannot directly trade in quanto options, then a plausible alternative is to use a stochastic model to compute their prices and obtain a synthetic versions of quanto options by dynamic trading. To this end, one may use the following result, which holds in the model of Section \ref{sec3.1}. Let us denote $\delta^q :=r^f-\kappa^f-r^d -\bm{\sigma}^f(\bm{\sigma^q})^*$. Then we have the following result.

\begin{proposition} \label{pro3.2}
The model price of a quanto call option with the payoff $\overline{C}_T(\Sf,K^f,T,\overline{\X})=\overline{\X}(\Sf_T-K^f)^+$ is given by, in the domestic currency,
\begin{equation*}
\overline{C}_t(\Sf,K^f,T,\overline{\X})=\overline{\X}\Big(\Sf_t e^{\delta^q(T-t)}\Phin\big(\bar{d}_+(\Sf_t,T-t)\big)
-\wtK^f_t \Phin\big(\bar{d}_-(\Sf_t,T-t)\big)\Big)
\end{equation*}
where $\wtK^f_t =e^{-r^d(T-t)}K^f$ and
\begin{equation*}
\bar{d}_{\pm}(\Sf_t,T-t)=\frac{1}{\|\bm{\sigma}^f \|\sqrt{T-t}}
\bigg[\ln \frac{\Sf_t}{\wtK^f_t}+\Big(\delta^q \pm \frac{1}{2}\|\bm{\sigma}^f\|^2\Big)(T-t)\bigg].
\end{equation*}
The price of a quanto put option with the payoff $\overline{P}_T(\Sf,K^f,T,\overline{\X})=\overline{\X}(K^f-\Sf_T)^+$ equals
\begin{equation*}
\overline{P}_t(\Sf,K^f,T,\overline{\X})=\overline{\X} \Big(\wtK^f_t \Phin\big(-\bar{d}_-(\Sf_t,T-t)\big)
-\Sf_t e^{\delta^q(T-t)}\Phin\big(-\bar{d}_+(\Sf_t,T-t)\big)\Big).
\end{equation*}
\end{proposition}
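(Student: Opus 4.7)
The plan is to start from risk-neutral valuation under the domestic martingale measure $\Qd$. Since the quanto payoff $\overline{\X}(\Sf_T-K^f)^+$ is denominated in the domestic currency and $\overline{\X}$ is a deterministic constant, one can factor it out and write $\wbC_t(\Sf,K^f,T,\overline{\X}) = \overline{\X}\,e^{-r^d(T-t)}\EQd\big[(\Sf_T-K^f)^+\,\big|\,\cF_t\big]$. The task thus reduces to a single Black--Scholes style expectation involving $\Sf$ under $\Qd$.

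Next, I would read off the dynamics of $\Sf$ under $\Qd$ from \eqref{eq5.1.1} and combine them with the identity $r^f - \bm{\sigma}^f(\bm{\sigma}^q)^* = r^d + \delta^q$, which is immediate from the definition $\delta^q := r^f - r^d - \bm{\sigma}^f(\bm{\sigma}^q)^*$. This shows that under $\Qd$ the process $\Sf$ is a geometric Brownian motion with constant drift $r^d + \delta^q$ and volatility $\|\bm{\sigma}^f\|$. Equivalently, $\Sf$ behaves under $\Qd$ like a domestic tradable asset that pays a continuous dividend yield equal to $-\delta^q$, which is precisely the setting covered by the Merton/Garman--Kohlhagen extension of the Black--Scholes formula. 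This drift identification is the only conceptual step; the rest is bookkeeping.

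From here the proof is essentially computational. Writing $\Sf_T = \Sf_t\exp\big((r^d+\delta^q-\tfrac12\|\bm{\sigma}^f\|^2)(T-t) + \bm{\sigma}^f(\textbf{W}^d_T-\textbf{W}^d_t)\big)$, so that $\ln(\Sf_T/\Sf_t)$ is Gaussian under $\Qd$ with mean $(r^d+\delta^q-\tfrac12\|\bm{\sigma}^f\|^2)(T-t)$ and variance $\|\bm{\sigma}^f\|^2(T-t)$, I would split the conditional expectation on the event $\{\Sf_T>K^f\}$ and complete the square in the standard way to get $\EQd\big[(\Sf_T-K^f)^+\,\big|\,\cF_t\big] = \Sf_t\,e^{(r^d+\delta^q)(T-t)}\Phin(\bar{d}_+(\Sf_t,T-t)) - K^f\,\Phin(\bar{d}_-(\Sf_t,T-t))$. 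Multiplying by $\overline{\X}\,e^{-r^d(T-t)}$ and substituting $\wtK^f_t = e^{-r^d(T-t)}K^f$ reproduces the announced call formula; the put formula then follows either by the symmetric decomposition on $\{\Sf_T<K^f\}$ or, more efficiently, from put--call parity applied to the quanto payoffs, namely $\wbC_t-\wbP_t = \overline{\X}(\Sf_t e^{\delta^q(T-t)} - \wtK^f_t)$. The one point requiring care is the bookkeeping of the two interest rates: discounting is performed at the domestic rate $r^d$, while the forward growth rate of $\Sf$ is $r^d+\delta^q$, and this mismatch is exactly what produces the $e^{\delta^q(T-t)}$ factor in front of $\Sf_t$ rather than a plain forward factor, which is the distinguishing feature of quanto pricing.
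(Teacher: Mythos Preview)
Your proposal is correct and matches the paper's intended approach: the paper does not give an explicit proof of this proposition but presents it as a classical result following from the $\Qd$-dynamics of $\Sf$ in \eqref{eq5.1.1} (cf.\ Chapter~13 of Musiela and Rutkowski \cite{MM2002}), which is precisely the route you take. Your identification of the drift $r^d+\delta^q$ and the dividend-yield interpretation, the Black--Scholes computation, and the put--call parity argument are all exactly as expected.
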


From Corollary \ref{cor3.5} and Proposition \ref{pro3.2}, we obtain the pricing formula for the quanto
foreign EPS in the model of Section \ref{sec3.2}. Notice that in Corollary \ref{cor3.6} we set $\overline{\X}=\X_0$ and thus the equality $\overline{\X}\Nf=\Nfd$ holds.

\begin{corollary} \label{cor3.6}
The model price at time 0 of the quanto foreign EPS with the payoff $\EPSfq$ at time $T$ with $\overline{\X}=\X_0$ equals,
in the domestic currency,
\begin{align*}
\EPSfqo&=\Nfd\sum_{i=0}^{\wh{n}}(\wh{p}_{i+1}-\wh{p}_i)\Big(e^{-r^d T}(1+\wh{l}_i)\Phin\big(-\bar{k}_-(\wh{l}_i,T)\big)
-e^{\delta^q T}\Phin\big(-\bar{k}_+(\wh{l}_i,T)\big)\Big)\\
&-\Nfd \sum_{j=0}^{\wh{m}}(\wh{f}_{j+1}-\wh{f}_j)\Big(e^{\delta^q T}\Phin\big(\bar{k}_+(\wh{g}_j,T)\big)
-e^{-r^d T}(1+\wh{g}_j)\Phin\big(\bar{k}_-(\wh{g}_j,T)\big)\Big)
 \end{align*}
 where
 \begin{equation*}
\bar{k}_{\pm}(x,T)=\frac{1}{\|\bm{\sigma}^f \|\sqrt{T}}
\bigg[-\ln(1+x)+\Big(r^f-\bm{\sigma}^f(\bm{\sigma^q})^*\pm\frac{1}{2}\|\bm{\sigma}^f\|^2\Big)T\bigg].
\end{equation*}
\end{corollary}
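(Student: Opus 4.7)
The plan is to obtain Corollary \ref{cor3.6} by plugging the closed-form quanto option prices from Proposition \ref{pro3.2} into the static hedging decomposition of Corollary \ref{cor3.5}, evaluated at $t=0$. Since a static hedge reproduces the terminal payoff almost surely, no-arbitrage immediately gives $\EPSfqo = \whH^{f,q}_0$, so the entire argument reduces to an algebraic simplification. I would first write out
\begin{equation*}
\EPSfqo = \Nf\sum_{i=0}^{n}\frac{p_{i+1}-p_i}{\Sf_0}\,\overline{P}_0(\Sf,(1+l_i)\Sf_0,T,\overline{\X})
- \Nf\sum_{j=0}^{m}\frac{f_{j+1}-f_j}{\Sf_0}\,\overline{C}_0(\Sf,(1+g_j)\Sf_0,T,\overline{\X}),
\end{equation*}
and then substitute the formulas from Proposition \ref{pro3.2} term-by-term.

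Next, I would simplify the Black-Scholes-style arguments. For a strike of the form $K^f = (1+x)\Sf_0$ with $x \in \{l_i\}$ or $x \in \{g_j\}$, evaluating Proposition \ref{pro3.2} at $t=0$ gives $\wtK^f_0 = e^{-r^d T}(1+x)\Sf_0$, so that
\begin{equation*}
\bar{d}_{\pm}(\Sf_0,T) = \frac{1}{\|\bm{\sigma}^f\|\sqrt{T}}\Big[-\ln(1+x) + \big(r^d + \delta^q \pm \tfrac{1}{2}\|\bm{\sigma}^f\|^2\big)T\Big].
\end{equation*}
Using the definition $\delta^q = r^f - r^d - \bm{\sigma}^f(\bm{\sigma}^q)^*$, we have $r^d + \delta^q = r^f - \bm{\sigma}^f(\bm{\sigma}^q)^*$, which matches the drift appearing inside $\bar{k}_{\pm}(x,T)$ in the statement of the corollary. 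Hence $\bar{d}_{\pm}(\Sf_0,T) = \bar{k}_{\pm}(x,T)$ for $x = l_i$ or $x = g_j$.

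Finally, I would factor out $\Sf_0$ inside each option price of Proposition \ref{pro3.2} and use the prescribed convention $\overline{\X} = \X_0$, which together with $\Nfd = \X_0 \Nf$ yields the identity $\Nf \cdot \overline{\X}/\Sf_0 \cdot \Sf_0 = \Nfd$ for every put term, and analogously for the call terms. Collecting everything gives exactly the expression claimed in the corollary, with $e^{-r^d T}(1+l_i)\Phin(-\bar{k}_-(l_i,T)) - e^{\delta^q T}\Phin(-\bar{k}_+(l_i,T))$ for each put and the analogous call expression with opposite sign ordering.

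There is really no hard step here — the work is purely bookkeeping. The only mild pitfall is keeping track of which constants sit inside the normal distribution arguments versus outside, and in particular confirming that the $r^d T$ term produced by $\ln(\Sf_0/\wtK^f_0) = -\ln(1+x) + r^d T$ combines correctly with $\delta^q T$ so that the $r^d$ contribution cancels against the one hidden in $\delta^q$, leaving the clean expression inside $\bar{k}_{\pm}$. Once this identification is made, the corollary follows immediately from Corollary \ref{cor3.5} and Proposition \ref{pro3.2}.
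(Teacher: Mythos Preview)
Your proposal is correct and follows exactly the approach indicated in the paper, which simply states that the result is obtained ``from Corollary \ref{cor3.5} and Proposition \ref{pro3.2}'' together with the identity $\overline{\X}\Nf=\Nfd$ arising from the choice $\overline{\X}=\X_0$. You have supplied the algebraic details that the paper omits, including the key identification $\bar{d}_{\pm}(\Sf_0,T)=\bar{k}_{\pm}(x,T)$ via $r^d+\delta^q=r^f-\bm{\sigma}^f(\bm{\sigma}^q)^*$.
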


\section{Pricing and Hedging of Cross-Currency Basket EPSs}     \label{sec4}

It is natural to expect that holders of superannuation accounts will focus on the value of their foreign holdings expressed in the domestic currency and pooled with their domestic holdings. Therefore, after discussing separate EPSs protecting against either domestic or foreign losses, we move on to examination of the case of a cross-currency basket EPS, which refers to aggregated returns of a basket of equity indices. As usual, $\Np$ is the nominal value in the domestic currency of the holder's portfolio of domestic and foreign assets and also the domestic notional principal of a cross-currency basket
EPS.

\subsection{Cross-Currency Basket EPS}         \label{sec4.1}

By the {\it effective wealth} of a cross-currency portfolio we mean its total value expressed in the domestic currency.
Specifically, for any notional principal $\Np>0$ and weight $w\in [0,1]$, the {\it effective wealth} $\Vwe$ of a cross-currency portfolio satisfies $\Vwe_0=\Np=\alpha_0\Sd_0+\beta_0\Sfd_0$ and, for every $t\in [0,T]$,
\begin{equation*}
\Vwe_t=\alpha_0\Sd_t+\beta_0\Sfd_t=\Np(w\Sd_t(S^d_0)^{-1}+(1-w)\Sfd_t(\Sfd_0)^{-1})=\Np\Swe_t
\end{equation*}
where we denote
\begin{equation*}
\Swe_t:=\Np^{-1}\Vwe_t=w\Sd_t(S^d_0)^{-1}+(1-w)\Sfd_t(\Sfd_0)^{-1}=w(1+\Rd_t)+(1-w)(1+\Rfd_t)
\end{equation*}
so that, in particular, $\Swe_0=1$. We now consider the {\it basket process} $\Swe$ as a reference portfolio for an EPS
and thus also as the underlying process for related cross-currency call and put options.
For any notional principal $\Np>0$ and weight $w \in [0,1]$, the {\it effective rate of return} $\Rwe$ on a cross-currency portfolio equals, and $t\in [0,T]$,
\begin{equation*}
\Rwe_t:=\frac{\Vwe_t-\Vwe_0}{\Vwe_0}=\Swe_t-1= w\Rd_t+(1-w)\Rfd_t.
\end{equation*}

\begin{definition} \label{def4.1}
{\rm In a generic {\it cross-currency basket} EPS, the provider's cash flow at time $T$ is given in the domestic currency and equals
\begin{equation*}
\EPSwe:=\Np\ppa(\Rwe_T)=\Np\ppa\big(w\Rd_T+(1-w)\Rfd_T\big).
\end{equation*}
for some adjusted return function $\ppa :(-1,\infty ) \to\mathbb{R}$ with parameters $p_i,f_j,l_i$ and $g_j$.}
\end{definition}

To derive the static hedging strategy for a generic cross-currency basket EPS, we consider
cross-currency basket call and put options written on $\Swe$ with the payoffs $\whC_T(\Swe,K^d,T):=(\Swe_T-K^d)^+$ and $\whP_T(\Swe,K^d,T):=(K^d-\Swe_T)^+$ where the payoffs are given in the domestic currency.

\begin{corollary} \label{cor4.1}
The static hedging in terms of options written on $\Swe$ for a generic cross-currency basket EPS is given by Proposition \ref{pro2.1} with $\pp = \ppa$ and $X=\Swe $, that is, $\EPSwe=\whHwe$ where, for every $t\in [0,T]$,
\begin{equation*}
\whHwet=\Np\sum_{i=0}^{n}(p_{i+1}-p_i)\whP_t(\Swe,k^l_i,T)-\Np\sum_{j=0}^{m}(f_{j+1}-f_j)\whC_t(\Swe,k^g_j,T)
\end{equation*}
where $k^l_i=1+l_i$ and $k^g_j=1+g_j$ for every $i=0,1,\dots, n$ and $j=0,1,\dots ,m$.
\end{corollary}

Using this static hedging strategy, the provider can perfectly hedge the effective return on the whole reference portfolio, which includes both domestic equities and foreign equities struck in the domestic currency. Although the provider's hedging strategy is fairly simple, it hinges on a strong postulate that the relevant call and put options on the value process $\Swe$ are actively traded, which is a rather unrealistic postulate, in particular, due to the fact that $\Swe$ depends on the choice of the weight $w$. Therefore, we will also examine the theoretical pricing and hedging for a generic cross-currency EPS
within the framework of the market model introduced in Section \ref{sec3.2}, as well as its superhedging.

To this end, we start by noticing that
\begin{equation*}
\Swe_t=w\Sd_t(S^d_0)^{-1}+(1-w)\Sfd_t(\Sfd_0)^{-1}=w \whSd_t+(1-w)\whSfd_t
\end{equation*}
where, for computational convenience, we introduce normalized processes $\whSd_t:=\Sd_t(S^d_0)^{-1}$ and $\whSfd_t:=\Sfd_t(\Sfd_0)^{-1}$ so that $\whSd_0=\whSfd_0=1$. Of course, that normalization does not affect the dynamics of price processes $\Sd$ and $\Sfd$, which were specified in Section \ref{sec3.2}, but only their initial values and thus we have that
\begin{align*}
&d\whSd_t=\whSd_t\big( (r^d-\kappa^d)\,dt+\bm{\sigma}^d\,d\textbf{W}_t^d\big),\nonumber \\
&d\whSfd_t=\whSfd_t\big((r^d-\kappa^f)\,dt+(\bm{\sigma}^f+\bm{\sigma}^q)\,d\textbf{W}_t^d\big).
\end{align*}
The cross-currency basket call option written on $\Swe$ has the terminal payoff at maturity $T$ given by $\whC_T(\Swe,K^d,T)=(\Swe_T-K^d)^+$ in the domestic currency. Since $\Swe_T$ is a sum of lognormally distributed random variables it does not follow a lognormal distribution and thus the Black-Scholes-like pricing formula for a basket option is not available. Therefore, we analyze other ways of computing, at least approximately, the initial prices of cross-currency basket options within the model from Section \ref{sec3.2}:\\
(i) numerical pricing using the Monte Carlo method, \\
(i) approximate analytical pricing using the geometric averaging method,\\
(ii) approximate analytical pricing using the moment matching technique.

\subsection{Approximate Pricing Through Geometric Averaging Method}    \label{sec4.2}

For the analytical approximation of the price of the cross-currency basket call option, it is convenient to rewrite
the pricing equality as follows
\begin{align*}
\whC_0(\Swe,K^d,T)=e^{-r^d T}\,\EQd\big[(\Swe_T-K^d)^+\big]
=\EQd\big[\big(w\wtSd_T+(1-w)\wtSfd_T-\wt{K}^d\big)^+\big]
=\EQd\big[\big(\wtSwe_T-\wt{K}^d\big)^+\big]
\end{align*}
where we introduce the discounted values $\wtSd_t:=e^{-r^d t}\whSd_t$ and $\wtSfd_t:=e^{-r^d t}\whSfd_t$ and
$\wt{K}^d:=e^{-r^d T}K^d$ so that
\begin{align*}
&d\wtSd_t=\wtSd_t \big(-\kappa^d \, dt +\bm{\sigma}^d\,d\textbf{W}_t^d \big), \\
&d\wtSfd_t=\wtSfd_t\big(-\kappa^f \, dt+(\bm{\sigma}^f+\bm{\sigma}^q)\,d\textbf{W}_t^d \big),
\end{align*}
with $\wtSd_0=\wtSfd_0=1$.  To get an approximation for the price of a classical basket option on equities, Gentle \cite{G1993} proposed to replace the weighted arithmetic mean $\wtSwe_T:= w\wtSd_T +(1-w)\wtSfd_T$ by a similarly weighted geometric mean, which is henceforth denoted as $G_T$ and, in the present setup, is given by
\begin{align*}
G_T:=\big(\wtSd_T\big)^{w}\big(\wtSfd_T\big)^{(1-w)}
=e^{(w\bm{\sigma}^d+(1-w)(\bm{\sigma}^f+\bm{\sigma}^q))\textbf{W}_T^d-\frac{1}{2}\gamma T}=e^{\xi_T-\frac{1}{2}\gamma T}
\end{align*}
where
\begin{align*}
\gamma:= w \|\sigma^d\|^2+(1-w)\|\sigma^f+\sigma^q\|^2 + 2(w \kappa^d +(1-w) \kappa^f)T.
\end{align*}
Notice that $\xi_T$ is a Gaussian random variable under $\Qd$ with null expected value and the variance $v^2 T$ where
\begin{align*}
v^2:=\|w\bm{\sigma}^d+(1-w)(\bm{\sigma}^f+\bm{\sigma}^q)\|^2
=w^2\|\sigma^d\|^2+2w(1-w)\sigma^d(\sigma^f+\sigma^q)^*+(1-w)^2\|\sigma^f+\sigma^q\|^2.
\end{align*}
Consequently, the random variable $G_T$ has the lognormal distribution under $\Qd$ with the expected value
\begin{align*}
\EQd (G_T)=e^{\frac{1}{2}(v^2-\gamma )T}
=\exp \Big(-\frac{1}{2} w(1-w)\|\sigma^d-\sigma^f-\sigma^q\|^2T - (w \kappa^d +(1-w) \kappa^f)T\Big)=:\lambda.
\end{align*}
Then the proposed approximation for the initial price of the cross-currency basket call option is given by the expected value
\begin{equation*}
C^G_0(\wh{K}^d,T):=\EQd\big[\big(G_T-\wh{K}^d\big)^+\big]
\end{equation*}
where $\wh{K}^d$ equals
\begin{equation*}
\wh{K}^d:=\wt{K}^d+\EQd\big[G_T-\big(w\wtSd_T+(1-w)\wtSfd_T\big)\big]=\widetilde{K}^d+\lambda-\zeta
\end{equation*}
with
\begin{equation*}
\zeta :=\EQd\big[w\wtSd_T+(1-w)\wtSfd_T\big]=w\wtSd_0 e^{-\kappa^d T} +(1-w)\wtSfd_0 e^{-\kappa^f T}=
we^{-\kappa^d T} +(1-w) e^{-\kappa^f T},
\end{equation*}
which is a consequence of the dynamics of $\wtSd$ and $\wtSfd$ under $\Qd$. Elementary computations now lead to the
following result, which can be seen as a variant of a more general pricing formula for an option written on a basket on $n$ domestic equities in the multi-dimensional Black-Scholes model (see Proposition 6.9.1 in Musiela and Rutkowski \cite{MM2002}).

\begin{proposition} \label{pro4.1}
The model price $\whC_0(\Swe,K^d,T)$ of the cross-currency basket call option can be approximated by
\begin{equation*}
C^G_0(\wh{K}^d,T)=\lambda \Phin\big(\whd_+(\wh{K}^d,T)\big)-\wh{K}^d\Phin\big(\whd_-(\wh{K}^d,T)\big)
\end{equation*}
where
\begin{equation*}
\whd_{\pm}(\wh{K}^d,T)=\frac{1}{v\sqrt{T}}\bigg[\ln\frac{\lambda }{\wh{K}^d}\pm \frac{1}{2}v^2 T\bigg].
\end{equation*}
The model price $\whP_0(\Swe,K^d,T)$ of the cross-currency basket put option with the payoff
$\whP_T(\Swe,K^d,T)=(K^d-\Swe_T)^+$ can be approximated by
\begin{equation*}
P^G_0(\wh{K}^d,T)=\wh{K}^d\Phin\big(-\whd_-(\wh{K}^d,T)\big)-\lambda \Phin\big(-\whd_+(\wh{K}^d,T)\big).
\end{equation*}
\end{proposition}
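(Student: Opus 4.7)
The plan is to reduce the expected value $C^G_0(\wh{K}^d,T) = \EQd[(G_T - \wh{K}^d)^+]$ to a standard Black--Scholes type lognormal integral. First I would observe that, under $\Qd$, the random variable $\xi_T := (w\bm{\sigma}^d + (1-w)(\bm{\sigma}^f+\bm{\sigma}^q))\textbf{W}^d_T$ is Gaussian with mean zero and variance $v^2 T$, so $G_T = e^{\xi_T - \frac{1}{2}\gamma T}$ is lognormally distributed with $\log G_T \sim N(-\frac{1}{2}\gamma T,\, v^2 T)$. A direct application of the Gaussian moment generating function then confirms $\EQd[G_T] = e^{\frac{1}{2}(v^2-\gamma)T} = \lambda$, which is consistent with the definition of $\lambda$ preceding the statement.

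Next I would split the call price in the usual way:
\begin{equation*}
\EQd\big[(G_T - \wh{K}^d)^+\big] = \EQd\big[G_T\,\I_{\{G_T > \wh{K}^d\}}\big] - \wh{K}^d\,\Qd\big(G_T > \wh{K}^d\big).
\end{equation*}
The event $\{G_T > \wh{K}^d\}$ is equivalent to $\{\xi_T > \log \wh{K}^d + \tfrac{1}{2}\gamma T\}$; after standardising $\xi_T$ by $v\sqrt{T}$ one checks that this probability equals $\Phin(\whd_-(\wh{K}^d,T))$, where $\whd_-$ is the quantity defined in the statement (the constant $\lambda$ appearing in $\whd_-$ arises because $\frac{1}{2}(v^2-\gamma) = \frac{1}{v\sqrt{T}}\cdot\frac{1}{2}(v^2-\gamma)T \cdot \frac{1}{v\sqrt{T}}$ combines with $-\frac{1}{2}\gamma T$ to produce $\log\lambda$). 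For the first term I would rewrite $G_T = \lambda\,\cE_T$ with $\cE_T := e^{\xi_T - \frac{1}{2}v^2 T}$, which is a strictly positive $\Qd$-martingale of unit expectation and hence defines an equivalent measure; under that measure $\xi_T$ becomes Gaussian with mean $v^2 T$, so the probability that $\xi_T$ exceeds the same threshold is $\Phin(\whd_+(\wh{K}^d,T))$. Pulling $\lambda$ out of the expectation yields the first term $\lambda\,\Phin(\whd_+(\wh{K}^d,T))$.

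For the put formula I would avoid repeating the calculation and instead invoke put--call parity for the synthetic underlying $G_T$: since $(K-G_T)^+ - (G_T - K)^+ = K - G_T$ for any $K$, taking $\Qd$-expectations gives
\begin{equation*}
P^G_0(\wh{K}^d,T) - C^G_0(\wh{K}^d,T) = \wh{K}^d - \EQd[G_T] = \wh{K}^d - \lambda.
\end{equation*}
Substituting the call formula and using the identity $\Phin(x) + \Phin(-x) = 1$ gives the stated put expression.

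The substantive content of the approximation -- replacing the arithmetic mean $\wtSwe_T$ by the geometric mean $G_T$ and absorbing the resulting bias into the shifted strike $\wh{K}^d = \wt{K}^d + \lambda - 1$ -- has already been carried out in the paragraph preceding the statement, so the only remaining work is the routine Gaussian integration outlined above. The main point to watch during the execution is purely bookkeeping: verifying that the particular combination of $\gamma$, $v^2$, and the discount factor $e^{-r^d T}$ hidden inside $\wt{K}^d$ and $\wh{K}^d$ indeed reproduces exactly the arguments $\whd_\pm(\wh{K}^d,T)$ stated in the proposition. I do not anticipate any genuine analytical obstacle.
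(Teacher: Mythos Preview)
Your proposal is correct and is precisely the kind of ``elementary computation'' the paper alludes to; the paper itself does not spell out a proof but simply states that the formula follows by elementary computations and refers to Lemma~6.9.1 in Musiela and Rutkowski~\cite{MM2002}. Your decomposition into $\EQd[G_T\I_{\{G_T>\wh{K}^d\}}]$ and $\wh{K}^d\,\Qd(G_T>\wh{K}^d)$, followed by a Girsanov-type change of measure for the first term and put--call parity for the put, is the standard route and matches what the paper has in mind. (One cosmetic point: the parenthetical identity you wrote to explain how $\lambda$ appears in $\whd_-$ is garbled; the correct bookkeeping is simply $\ln\lambda=\tfrac{1}{2}(v^2-\gamma)T$, so $-\ln\wh{K}^d-\tfrac{1}{2}\gamma T = \ln(\lambda/\wh{K}^d)-\tfrac{1}{2}v^2T$.)
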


\begin{remark} \label{rem4.1}
{\rm The put-call parity for cross-currency basket options has the form $\whC_0(\Swe,K^d,T)-\whP_0(\Swe,K^d,T)=1-\wt{K}^d$ (recall that $\wtSwe$ is a martingale under $\Q^d$ and $\wtSwe_0=1$) whereas for the approximated values it is given by $C^G_0(\wh{K}^d,T)-P^G_0(\wh{K}^d,T)=\lambda-\wh{K}^d$. Hence it suffices to focus on the pricing formula for a cross-currency basket call option.}
\end{remark}

\subsection{Approximate Pricing Through Moment Matching Method} \label{sec4.3}

Borovkova et al. \cite{BPW2007} considered the problem of pricing basket options in a futures market and proposed a numerical approach based on matching three moments to approximate the distribution of baskets with possibly negative weights, which in turn yields a fairly accurate approximation for the price of a basket option. To be more specific, their method uses the first three moments to match the real-valued parameters $s,m,\tau,c$ of the approximating shifted lognormal distribution of the form
\begin{equation*}
c\big(e^{sZ+m}+\tau\big)
\end{equation*}
where $Z$ is a standard normal random variable under a probability measure $\Q$. The quantity $e^{m}$ can be interpreted as a scale parameter, the parameter $s$ represents a shape parameter, and $\tau$ is a location parameter. The parameter $c = \pm 1$ is included to adjust the overall sign of the basket price when weights are not all of the same sign and it is chosen to coincide with the sign of the skewness of the basket. For the pricing algorithm proposed in \cite{BPW2007}, a real-valued closed-form solution for the set of nonlinear equations was recently developed by Hu et al. \cite{HSV2023}
who obtained the closed-form approximation for the price of a basket option in terms of the mean, variance, and skewness
of the basket's distribution at maturity $T$. We are going to adapt their results to the present context.

Recall from Section \ref{sec4.1} that our goal is to compute
\begin{align*}
\whC_0(\Swe,K^d,T)=\EQd\big[\big(w\wtSd_T+(1-w)\wtSfd_T-\wt{K}^d\big)^+\big]=\EQd\big[\big(\wtSwe_T-\wt{K}^d\big)^+\big]
\end{align*}
where $\wt{K}^d:=e^{-r^d T}K^d$ and $\wtSwe_t:=w\wtSd_t+(1-w)\wtSfd_t$ for every $t\in [0,T]$. We have
\begin{align*}
&d\wtSd_t=\wtSd_t \big(-\kappa^d \, dt +\bm{\sigma}^d\,d\textbf{W}_t^d \big), \\
&d\wtSfd_t=\wtSfd_t\big(-\kappa^f \, dt+(\bm{\sigma}^f+\bm{\sigma}^q)\,d\textbf{W}_t^d \big),
\end{align*}
with $\wtSd_0=\wtSd_0=1$ and thus, in particular,
\begin{equation*}
\wtSwe_T=w e^{\bm{\sigma}^d\textbf{W}_T^d-\frac{1}{2}\| \bm{\sigma}^d \|^2T - \kappa^d T}
+(1-w)e^{(\bm{\sigma}^f+\bm{\sigma}^q)\textbf{W}_T^d-\frac{1}{2}\|\bm{\sigma}^f+\bm{\sigma}^q\|^2T-\kappa^f T}.
\end{equation*}

The first step hinges on computation of the first three moments of the random variable $\wtSwe_T$.
The proof of Lemma \ref{lem4.1} is elementary and thus it is omitted.

\begin{lemma} \label{lem4.1}
Let $w_1=w e^{-\kappa^d T},\,w_2=(1-w) e^{-\kappa^f T},\,\bm{\sigma}_1=\bm{\sigma}^d$ and $\bm{\sigma}_2=\bm{\sigma}^f+\bm{\sigma}^q$.
Then the first three moments of the random variable $\wtSwe_T$ under the martingale measure $\Q^d$ are
\begin{align*}
\wt{M}_1&=w_1+w_2, \\
\wt{M}_2&=w_1^2e^{\|\bm{\sigma}_1\|^2T}+2w_1w_2e^{\bm{\sigma}_1\bm{\sigma}_2^*T}
+w_2^2e^{\|\bm{\sigma}_2\|^2 T}, \\
\wt{M}_2&=w_1^3e^{3\|\bm{\sigma}_1\|^2 T}
+3w_1^2w_2e^{\big(\|\bm{\sigma}_1\|^2+2\bm{\sigma}_1\bm{\sigma}_2^*\big)T}+3w_1 w_2^2 e^{\big(\|\bm{\sigma}_2\|^2+2\bm{\sigma}_1\bm{\sigma}_2^*\big)T}+w_2^3 e^{3\|\bm{\sigma}_2\|^2T}.
\end{align*}
\end{lemma}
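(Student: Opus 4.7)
The plan is to exploit the fact that under $\Q^d$ the pair $(\wtSd_T, \wtSfd_T)$ consists of exponentials of jointly Gaussian random variables driven by the same vector Brownian motion $\textbf{W}^d_T$, so all mixed moments $\EQd\big[(\wtSd_T)^a (\wtSfd_T)^b\big]$ are available in closed form from the Laplace transform of a Gaussian vector. Once these are in hand, $\wt{M}_k = \EQd[(w_1 \wtSd_T + w_2 \wtSfd_T)^k]$ for $k=1,2,3$ follows from the binomial expansion and linearity of expectation.

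First I would establish the master identity, valid for every pair of non-negative integers $(a,b)$:
\begin{equation*}
\EQd\big[(\wtSd_T)^a (\wtSfd_T)^b\big] = \exp\Big(\tfrac{1}{2}\big[a(a-1)\|\bm{\sigma}_1\|^2 + b(b-1)\|\bm{\sigma}_2\|^2 + 2ab\,\bm{\sigma}_1\bm{\sigma}_2^*\big]T\Big).
\end{equation*}
This is derived by writing
\begin{equation*}
(\wtSd_T)^a (\wtSfd_T)^b = \exp\Big((a\bm{\sigma}_1+b\bm{\sigma}_2)\textbf{W}^d_T - \tfrac{1}{2}\big(a\|\bm{\sigma}_1\|^2 + b\|\bm{\sigma}_2\|^2\big)T\Big)
\end{equation*}
and applying $\E[e^Z] = e^{\mathrm{Var}(Z)/2}$ to the centered Gaussian $(a\bm{\sigma}_1+b\bm{\sigma}_2)\textbf{W}^d_T$, whose variance equals $\|a\bm{\sigma}_1+b\bm{\sigma}_2\|^2 T = (a^2\|\bm{\sigma}_1\|^2 + 2ab\,\bm{\sigma}_1\bm{\sigma}_2^* + b^2\|\bm{\sigma}_2\|^2)T$. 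The choices $(a,b)=(1,0)$ and $(0,1)$ recover the martingale property $\EQd[\wtSd_T] = \EQd[\wtSfd_T] = 1$.

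The three moments then follow mechanically. For $\wt{M}_1$, linearity gives $w_1 + w_2 = 1$. For $\wt{M}_2$, expanding $(w_1 \wtSd_T + w_2 \wtSfd_T)^2$ and applying the master identity with $(a,b)\in\{(2,0),(1,1),(0,2)\}$ yields the three exponents $\|\bm{\sigma}_1\|^2 T$, $\bm{\sigma}_1\bm{\sigma}_2^* T$, $\|\bm{\sigma}_2\|^2 T$. For $\wt{M}_3$, expanding the cube produces four terms with $(a,b)\in\{(3,0),(2,1),(1,2),(0,3)\}$; the master identity produces exponents $3\|\bm{\sigma}_1\|^2 T$, $(\|\bm{\sigma}_1\|^2 + 2\bm{\sigma}_1\bm{\sigma}_2^*)T$, $(\|\bm{\sigma}_2\|^2 + 2\bm{\sigma}_1\bm{\sigma}_2^*)T$, $3\|\bm{\sigma}_2\|^2 T$, matching the claim.

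There is no genuine obstacle; the argument is pure bookkeeping of Gaussian exponents. The only step where care is needed is simplifying $\tfrac{1}{2}\|a\bm{\sigma}_1+b\bm{\sigma}_2\|^2 - \tfrac{1}{2}(a\|\bm{\sigma}_1\|^2+b\|\bm{\sigma}_2\|^2)$ into the compact form $\tfrac{1}{2}[a(a-1)\|\bm{\sigma}_1\|^2 + b(b-1)\|\bm{\sigma}_2\|^2 + 2ab\,\bm{\sigma}_1\bm{\sigma}_2^*]$; doing this once in the master identity avoids repeated arithmetic across the seven $(a,b)$ pairs that appear in the binomial expansions. No stochastic calculus beyond the scalar Gaussian MGF is required.
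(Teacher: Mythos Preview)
Your proposal is correct and is exactly the natural computation the paper has in mind; the paper itself does not spell out a proof of Lemma~\ref{lem4.1} but merely presents the moment formulae as elementary computations, and your Gaussian-MGF master identity together with the binomial expansion is precisely how those computations are carried out.
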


Let us denote the mean of $\wtSwe_T$ by $\mu_T:=\wt{M}_1$, the standard deviation by $\sigma_T^2 :=\wt{M}_2-\mu_T^2$,
and the skewness by
\begin{equation*}
\eta_T:=\frac{\wt{M}_3-3\mu_T\sigma_T^2-\mu_T^3}{\sigma_T^3}.
\end{equation*}
It can be checked that first three moments of the random variable $c (e^{sZ+m}+\tau)$ are
\begin{align*}
M_1(c,s,m,\tau)&:=c \big(e^{\frac{1}{2}s^2+m}+\tau\big), \nonumber \\
M_2(c,s,m,\tau)&:=c^2\big(e^{2s^2+2m}+2\tau e^{\frac{1}{2}s^2+m}+\tau^2\big), \\
M_3(c,s,m,\tau)&:=c^3\big(e^{\frac{9}{2}s^2+3m}+3\tau e^{2s^2+2m}+3\tau^2 e^{\frac{1}{2}s^2+m}+\tau^3\big).\nonumber
\end{align*}
Then the parameters $c,s,m,\tau,$ can be obtained by solving the moment matching conditions
$\wt{M}_i=M_i(c,s,m,\tau)$ for $i=1,2,3$. The following result was proven in Hu et al. \cite{HSV2023}.

\begin{lemma} \label{lem4.2}
A particular set of real-valued solution $c,s,m,\tau$ that satisfy the moment matching conditions $\wt{M}_i=M_i(c,s,m,\tau)$ for $i=1,2,3$ is given by the following functions of $(\mu_T,\sigma_T,\eta_T)$
\begin{equation*}
c=\sgn(\eta_T),\quad s=(\ln(x))^{\frac{1}{2}},\quad m=\frac{1}{2}\ln{\bigg(\frac{\sigma_T^2}{x(x-1)}\bigg)}, \quad \tau=\sgn(\eta_T)\mu_T-\frac{\sigma_T}{(x-1)^{\frac{1}{2}}},
\end{equation*}
where $\sgn$ is the sign function and
\begin{equation*}
x =\sqrt[3]{1+\frac{1}{2}\eta_T^2+\eta_T\sqrt{1+\frac{1}{4}\eta_T^2}}
   +\sqrt[3]{1+\frac{1}{2}\eta_T^2-\eta_T\sqrt{1+\frac{1}{4}\eta_T^2}} -1.
\end{equation*}
\end{lemma}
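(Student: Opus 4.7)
My approach exploits the structural role of each parameter in the approximating random variable $Y:=c(e^{sZ+m}+\tau)$: $\tau$ is a pure location parameter, $e^m$ a pure scale factor, $s$ controls the shape, and $c=\pm 1$ flips sign. Accordingly, the skewness of $Y$ depends only on $c$ and $s$, the variance only on $c$, $m$, $s$, and the mean involves all four parameters. Matching the three target moments $\wt{M}_i$ in the order \emph{skewness $\to$ variance $\to$ mean} therefore decouples the nonlinear system \eqref{momentsM} into a single cubic in $x:=e^{s^2}$ followed by two scalar equations solved in closed form.

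\textbf{Step 1 (skewness determines $c$ and $x$).} Using the standard lognormal moments of $L:=e^{sZ+m}$, namely $\E[L]=e^{m+s^2/2}$, variance $e^{2m+s^2}(e^{s^2}-1)$ and skewness $(e^{s^2}+2)\sqrt{e^{s^2}-1}$, and noting that the map $L\mapsto c(L+\tau)$ multiplies skewness by $\sgn(c^3)=c$, the skewness of $Y$ equals $c(x+2)\sqrt{x-1}$. Setting this equal to $\eta_T$ and using $(x+2)\sqrt{x-1}\ge 0$ forces $c=\sgn(\eta_T)$ and
\begin{equation*}
\eta_T^2=(x+2)^2(x-1)=x^3+3x^2-4.
\end{equation*}
The substitution $y:=x+1$ depresses this cubic to $y^3-3y=2+\eta_T^2$. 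Cardano's formula with $p=-3$, $q=-(2+\eta_T^2)$ gives the discriminant term $q^2/4+p^3/27=\eta_T^2(4+\eta_T^2)/4$, so that
\begin{equation*}
x+1=\sqrt[3]{1+\tfrac12\eta_T^2+\eta_T\sqrt{1+\tfrac14\eta_T^2}}+\sqrt[3]{1+\tfrac12\eta_T^2-\eta_T\sqrt{1+\tfrac14\eta_T^2}},
\end{equation*}
which matches the stated $x$; the sign of $\eta_T$ inside the cube roots is immaterial since swapping the two summands leaves the sum unchanged. Since $x\ge 1$ (with equality iff $\eta_T=0$), one takes $s=(\ln x)^{1/2}$.

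\textbf{Step 2 (variance determines $m$; mean determines $\tau$).} With $c^2=1$, matching the variance gives $e^{2m}x(x-1)=\sigma_T^2$, hence $m=\tfrac12\ln\!\big(\sigma_T^2/(x(x-1))\big)$. Substituting into $\E[Y]=c(e^{m+s^2/2}+\tau)=\mu_T$ and using $1/c=c$ for $c=\pm 1$ together with $e^{m+s^2/2}=\sqrt{x\,e^{2m}}=\sigma_T/\sqrt{x-1}$ yields $\tau=\sgn(\eta_T)\mu_T-\sigma_T/(x-1)^{1/2}$, matching the claim.

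\textbf{Main obstacle.} The crux is the algebraic simplification in Step 1: one must recognise that $(2+\eta_T^2)^2-4$ factors cleanly as $\eta_T^2(4+\eta_T^2)$, so that the outer radical in Cardano's formula reduces to the elementary expression $|\eta_T|\sqrt{1+\eta_T^2/4}/2$ and the two nested cube roots combine into a real-valued $x$. Once this is observed, every other step is bookkeeping with standard lognormal moment formulas and one-line elementary algebra.
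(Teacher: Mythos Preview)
Your proof is correct and complete. The paper does not actually prove Lemma~\ref{lem4.2}; it merely attributes the result to Hu et al.~\cite{Hu2023} and states the formulae. Your argument therefore supplies what the paper omits: the standard decoupling via skewness $\to$ variance $\to$ mean, the reduction to the depressed cubic $y^3-3y=2+\eta_T^2$, and the Cardano solution with the clean factorisation $(1+\tfrac12\eta_T^2)^2-1=\eta_T^2(1+\tfrac14\eta_T^2)$. All steps check out, including the observation that replacing $|\eta_T|$ by $\eta_T$ inside the cube roots merely permutes the two summands.
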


In view of Lemma \ref{lem4.2}, it is clear that  the parameters  $c,s,m,\tau$ are given by explicitly known functions of moments $\wt{M}_i,\, i=1,2,3$ and hence also by functions of market parameters $\kappa^d,\kappa^f,\bm{\sigma}^d,\bm{\sigma}^f,\bm{\sigma}^q$  and portfolio's weights $(w,1-w)$.
Furthermore, using the values of parameters $c,s,m,\tau$ from Lemma \ref{lem4.2}, the initial price of the basket call option can be approximated by
\begin{equation*}
\wh{C}(\wt{K}^d,T):=\EQ \big[\big(c(e^{sZ+m}+\tau) -\wt{K}^d\big)^+\big]
\end{equation*}
or, more explicitly, if $\eta_T>0$
\begin{equation*}
\wh{C}(\wt{K}^d,T):=\EQ \big[\big(e^{sZ+m}+\tau-\wt{K}^d\big)^+\big]
\end{equation*}
and, if $\eta_T<0$
\begin{equation*}
\wh{C}(\wt{K}^d,T):=\EQ \big[\big(-e^{sZ+m}-\tau-\wt{K}^d\big)^+\big]
\end{equation*}
where $Z$ is a standard normal random variable under the probability measure $\Q$.
Then the approximation for the price of the cross-currency basket call option is given by the following
result, which is an immediate consequence of Theorem 2.3 in Hu et al. \cite{HSV2023}.

\begin{lemma} \label{lem4.3}
The following equalities hold
\begin{equation*}
\wh{C}(\wt{K}^d,T)=
\begin{cases}
e^{m+\frac{1}{2}s^2}+\tau-\wt{K}^d,\quad &c=1,\,\wt{K}^d \leq \tau, \\
e^{m+\frac{1}{2}s^2}\Phin(d_{11})-(\wt{K}^d-\tau)\Phin(d_{12}),\quad &c=1,\,\wt{K}^d>\tau, \\
0, \quad &c=-1,\,\wt{K}^d>-\tau, \\
-e^{m+\frac{1}{2}s^2}\Phin(d_{21})+(-\wt{K}^d-\tau)\Phin(d_{22}),\quad &c=-1,\,\wt{K}^d\leq-\tau,
\end{cases}
\end{equation*}
where the parameters $c,s,m,\tau$ are given by Lemma \ref{lem4.2} and
\begin{align*}
d_{11}&=\frac{-\ln(\wt{K}^d-\tau)+m+s^2}{s},\quad d_{12}=\frac{-\ln(\wt{K}^d-\tau)+m}{s}, \\
d_{21}&=\frac{\ln(-\wt{K}^d-\tau)-m-s^2}{s},\quad d_{22}=\frac{\ln(-\wt{K}^d-\tau)-m}{s}.
\end{align*}
\end{lemma}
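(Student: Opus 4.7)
The plan is to split into the four cases indexed by $(c, \wt{K}^d)$ appearing in the statement and, in each case, reduce $\wh{C}(\wt{K}^d,T)$ to a standard one-dimensional Gaussian integral of Black--Scholes type. The two trivial cases (no event / sure event) come from sign arguments on the integrand, and the two nontrivial cases both follow from the same elementary identity for a lognormal random variable.

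First I would record the auxiliary computation that powers everything. For $Z\sim \mathcal{N}(0,1)$ under $\Q$ and any real constants $s>0,\,m$, and any $K^\ast>0$, a change of variable $z\mapsto z-s$ gives
\begin{equation*}
\EQ\big[e^{sZ+m}\I_{\{Z\geq z^\ast\}}\big]=e^{m+\tfrac12 s^2}\,\Phin(s-z^\ast),
\end{equation*}
which, combined with $\EQ\big[\I_{\{Z\geq z^\ast\}}\big]=\Phin(-z^\ast)$, yields the two prototype identities
\begin{align*}
\EQ\big[(e^{sZ+m}-K^\ast)^+\big]&=e^{m+\tfrac12 s^2}\,\Phin\!\Big(\tfrac{-\ln K^\ast+m+s^2}{s}\Big)-K^\ast\,\Phin\!\Big(\tfrac{-\ln K^\ast+m}{s}\Big),\\
\EQ\big[(K^\ast-e^{sZ+m})^+\big]&=K^\ast\,\Phin\!\Big(\tfrac{\ln K^\ast-m}{s}\Big)-e^{m+\tfrac12 s^2}\,\Phin\!\Big(\tfrac{\ln K^\ast-m-s^2}{s}\Big),
\end{align*}
obtained by using that $e^{sZ+m}\ge K^\ast$ iff $Z\ge s^{-1}(\ln K^\ast-m)$ and the analogous reverse inequality for the second identity.

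Next I would treat the four cases. When $c=1$ and $\wt{K}^d\le\tau$, the integrand $e^{sZ+m}+\tau-\wt{K}^d$ is $\Q$-a.s.\ non-negative, so the positive part is irrelevant and one simply computes $\EQ[e^{sZ+m}]+\tau-\wt{K}^d=e^{m+\tfrac12 s^2}+\tau-\wt{K}^d$, giving the first line. When $c=1$ and $\wt{K}^d>\tau$, set $K^\ast:=\wt{K}^d-\tau>0$ and apply the first prototype identity with $z^\ast=s^{-1}(\ln K^\ast-m)$; the arguments of $\Phin$ become exactly $d_{11}$ and $d_{12}$. When $c=-1$ and $\wt{K}^d>-\tau$, the integrand $-e^{sZ+m}-\tau-\wt{K}^d$ is $\Q$-a.s.\ strictly negative (the positive term $e^{sZ+m}>0$ is subtracted and $\tau+\wt{K}^d>0$ is also subtracted), so the positive part vanishes. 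Finally, when $c=-1$ and $\wt{K}^d\le -\tau$, set $K^\ast:=-\tau-\wt{K}^d\ge 0$ and apply the second prototype identity; the arguments of $\Phin$ become $d_{22}$ and $d_{21}$, yielding the last line.

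No real obstacle is anticipated: the lemma is a bookkeeping exercise whose only substance is the Black--Scholes-style evaluation of a Gaussian integral. The only mild care needed is in the sign cases, where one must verify that, in the "vanishing" case $c=-1,\,\wt{K}^d>-\tau$, the integrand is indeed almost surely non-positive (which follows from $e^{sZ+m}>0$ combined with $-\tau-\wt{K}^d<0$), and, in the "trivial" case $c=1,\,\wt{K}^d\le\tau$, that it is almost surely non-negative. Everything else reduces to matching the four arguments $d_{11},d_{12},d_{21},d_{22}$ to the closed-form expressions produced by the two prototype identities above, which is a direct algebraic check.
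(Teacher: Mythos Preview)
Your proposal is correct. The paper itself does not supply a proof of this lemma but simply records it as ``an immediate consequence of Theorem~2.3 in Hu et al.~\cite{Hu2023},'' so your self-contained Black--Scholes-style derivation actually goes further than the paper does. The route you take---recording the two prototype identities for $\EQ[(e^{sZ+m}-K^\ast)^+]$ and $\EQ[(K^\ast-e^{sZ+m})^+]$ via the shift $z\mapsto z-s$, then matching each of the four $(c,\wt K^d)$ cases to one of them---is exactly the standard computation underlying the cited result, so there is no methodological divergence, only a difference in presentation: you write out what Hu et al.\ prove and the paper merely cites.
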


To approximate the price of the cross-currency basket put option, one can employ the classical put-call parity
and hence obtain also an approximation for the price of a cross-currency EPS.

\subsection{Superhedging Cost of the Buffer Cross-Currency Basket EPS} \label{sec4.4}

It should be acknowledged there are still some weaknesses in the concept of the static hedging strategy for a
cross-currency EPS since it may be not easily available in practice. Market makers are more likely to use either a particular equity or a market index as an underlying asset for liquidly traded options, whereas bespoke options with the holder's portfolio as the underlying asset can only be offered on OTC market. Although Corollary \ref{cor4.1} shows that, in theory, a perfect static hedging of a generic EPS on a cross-currency reference portfolio is feasible, it may be difficult to implement the static hedging even for most basic cross-currency EPSs. For that reason, we will also study some superhedging strategies for cross-currency EPSs. Obviously, another viable option for the provider is to choose a stochastic model for the cross-currency market and use it to implement a dynamic hedging strategy based on actively traded assets and futures contracts on currencies.

\begin{lemma} \label{lem4.4}
The following inequalities hold, for every $(x,y)\in \bbR^2$,
\begin{align*}
&(x+y)^+\leq x^+ + y^+ , \\
&(x+y)^+\geq (x+y)\I_{\{x\geq 0,\,y\geq 0\}}=x^+\I_{\{y\geq 0\}}+y^+\I_{\{x\geq 0\}},
\end{align*}
where the first inequality is strict on the set $A:=\{x<0,\,y>0\}\cup\{x>0,\,y<0\}$ and the second one is strict
on the set $B:=\{x<0,\,x+y>0\}\cup\{y <0,\,x+y >0\}.$
\end{lemma}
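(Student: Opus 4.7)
The proof is a straightforward case analysis based on the signs of $x$, $y$, and $x+y$, so I would organise it around the four sign configurations $\{x\ge0,y\ge0\}$, $\{x\ge0,y<0\}$, $\{x<0,y\ge0\}$, $\{x<0,y<0\}$.

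For the first inequality $(x+y)^+\le x^++y^+$, the plan is to split on the sign of $x+y$. If $x+y\le 0$, the left-hand side vanishes and the bound is immediate from $x^+,y^+\ge 0$. If $x+y>0$, use $x\le x^+$ and $y\le y^+$ (which hold for every real number) to get $x+y\le x^++y^+$. For strictness on $A=\{x<0,y>0\}\cup\{x>0,y<0\}$, I would treat the two pieces symmetrically: on $\{x<0,y>0\}$ one has $x^++y^+=y$, while $(x+y)^+=\max(x+y,0)$ equals either $x+y<y$ (if $x+y\ge 0$, since $x<0$) or $0<y$ (if $x+y<0$), so the inequality is strict in both subcases; the case $\{x>0,y<0\}$ is identical after swapping the roles of $x$ and $y$.

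For the second inequality, I would first verify the claimed identity $(x+y)\mathds{1}_{\{x\ge 0,y\ge 0\}}=x^+\mathds{1}_{\{y\ge 0\}}+y^+\mathds{1}_{\{x\ge 0\}}$ by checking the four sign configurations above: on $\{x\ge 0,y\ge 0\}$ both sides equal $x+y$, and in each of the three remaining configurations both sides vanish (either because one of $x^+,y^+$ is zero or the companion indicator is zero). Then the bound $(x+y)^+\ge (x+y)\mathds{1}_{\{x\ge 0,y\ge 0\}}$ follows because on $\{x\ge 0,y\ge 0\}$ both sides equal $x+y$, and off this set the right-hand side is $0\le (x+y)^+$. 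For strictness on $B=\{x<0,x+y>0\}\cup\{y<0,x+y>0\}$, note that on either piece at least one of $x,y$ is strictly negative, so the indicator $\mathds{1}_{\{x\ge 0,y\ge 0\}}$ vanishes and the right-hand side equals $0$, whereas $(x+y)^+=x+y>0$ by assumption.

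There is no real obstacle here: the only minor subtlety is recognising that the equality in the second line of the statement is an algebraic identity that needs its own (quick) verification, rather than being derivable solely from the subadditivity argument used for the first inequality. I would therefore present the proof as two short blocks — one for each inequality, each followed by its strictness analysis — and explicitly dispose of the identity as an intermediate step.
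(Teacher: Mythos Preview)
Your argument is correct and complete: the subadditivity of the positive part, the verification of the identity on the four sign configurations, and the strictness analyses on $A$ and $B$ all go through exactly as you describe. Note that the paper does not supply a proof of this lemma at all --- it is stated and then used immediately in Lemma~4.5 --- so your case-by-case verification is in fact more detailed than what the paper provides.
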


Our goal is to construct a particular superhedging strategy for the cross-currency buffer EPS by using options with the underlying assets $\Sd$ and $\Sfd$ introduced in Sections \ref{sec3.3} and \ref{sec3.4}. Recall that the aggregate effective return satisfies $\Swe_T=w \whSd_T+(1-w)\whSfd_T$. Hence Lemma \ref{lem4.4} leads to the following result.

\begin{lemma} \label{lem4.5}
The following inequalities hold, for every $w \in [0,1]$,
\begin{align*}  
&\big(k^l_{i}-\Swe_T\big)^+ \leq w\big(k^l_{i}-\whSd_T\big)^+ + (1-w)\big(k^l_{i}-\whSfd_T\big)^+, \nonumber \\
&\big(k^l_{i}-\Swe_T\big)^+ \geq w\big(k^l_{i}-\whSd_T\big)^+ \I_{\{\whSfd_T\leq k^l_{i}\}}
   +(1-w)\big(k^l_{i}-\whSfd_T\big)^+\I_{\{\whSd_T\leq k^l_{i}\}}, \nonumber \\
&\big(\Swe_T- k^g_{j}\big)^+ \leq w\big(\whSd_T-k^g_{j}\big)^+ + (1-w)\big(\whSfd_T-k^g_{j}\big)^+, \nonumber \\
&\big(\Swe_T- k^g_{j}\big)^+ \geq w\big(\whSd_T-k^g_{j}\big)^+ \I_{\{\whSfd_T\geq k^g_{j}\}}
   +(1-w)\big(\whSfd_T-k^g_{j}\big)^+\I_{\{\whSd_T\geq k^g_{j}\}}. \nonumber
\end{align*}
\end{lemma}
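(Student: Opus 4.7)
The plan is to reduce each of the four inequalities to a direct application of Lemma \ref{lem4.4} by exploiting the convex combination structure built into $\Swe_T = w\whSd_T + (1-w)\whSfd_T$. Since $w+(1-w)=1$, one has the algebraic identities
\begin{align*}
k^l_i - \Swe_T &= w(k^l_i - \whSd_T) + (1-w)(k^l_i - \whSfd_T), \\
\Swe_T - k^g_j &= w(\whSd_T - k^g_j) + (1-w)(\whSfd_T - k^g_j),
\end{align*}
which let us write each basket payoff in the form $(x+y)^+$ with $x,y$ being non-negative scalar multiples of single-asset payoffs.

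For the upper bounds (first and third inequalities), I would set $x := w(k^l_i-\whSd_T)$ and $y := (1-w)(k^l_i-\whSfd_T)$ and apply the first inequality in Lemma \ref{lem4.4}. Since $w\geq 0$ and $1-w\geq 0$, the positive part distributes over the scalar multiplication, i.e.\ $x^+ = w(k^l_i-\whSd_T)^+$ and $y^+ = (1-w)(k^l_i-\whSfd_T)^+$. The third inequality follows by the same argument applied to $\whSd_T - k^g_j$ and $\whSfd_T - k^g_j$.

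For the lower bounds (second and fourth inequalities), I would apply the second inequality in Lemma \ref{lem4.4} to the same $x,y$. The only bookkeeping step is to rewrite the event $\{y\geq 0\} = \{(1-w)(k^l_i-\whSfd_T)\geq 0\}$ as $\{\whSfd_T \leq k^l_i\}$ (which is exact when $w<1$, and trivially valid when $w=1$ because the prefactor $1-w$ then annihilates the corresponding term), and similarly $\{x\geq 0\}=\{\whSd_T\leq k^l_i\}$. The call-side analogue uses $\{\whSd_T\geq k^g_j\}$ and $\{\whSfd_T\geq k^g_j\}$.

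I do not anticipate any serious obstacle: the argument is a two-line manipulation once Lemma \ref{lem4.4} is in hand. The only mild subtlety is handling the degenerate boundary weights $w\in\{0,1\}$, where one factor collapses; in both cases the inequalities become trivial equalities (or hold vacuously), so the statement remains valid for all $w\in[0,1]$ without change.
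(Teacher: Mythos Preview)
Your proposal is correct and matches the paper's own proof essentially line for line: the paper also writes $k^l_i-\Swe_T=w(k^l_i-\whSd_T)+(1-w)(k^l_i-\whSfd_T)$ and $\Swe_T-k^g_j=w(\whSd_T-k^g_j)+(1-w)(\whSfd_T-k^g_j)$ and then invokes Lemma~\ref{lem4.4}. Your version is slightly more explicit about the scalar factoring and the boundary weights $w\in\{0,1\}$, but the argument is the same.
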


\begin{proof}
For the first two inequalities, we observe that
\begin{equation*}
k^l_{i}-\Swe_T=w(k^l_{i}-\whSd_T)+(1-w)(k^l_{i}-\whSfd_T)
\end{equation*}
and we apply Lemma \ref{lem4.4}. For the last two, we use the equality
\begin{equation*}
\Swe_T- k^g_{j}= w(\whSd_T- k^g_{j})+(1-w)(\whSfd_T-k^g_{j})
\end{equation*}
and, once again, we make use of Lemma \ref{lem4.4}.
\end{proof}

It is worth noting that the first inequality in Lemma \ref{lem4.5} is strict on the event
\begin{equation*}
A^l_{i}:=\{\whSd_T> k^l_{i},\,\whSfd_T<k^l_{i}\}\cup\{\Sd_T<k^l_{i},\,\whSfd_T>k^l_{i}\}
\end{equation*}
and the second one is strict on
\begin{equation*}
B^l_{i}:=\{\whSd_T>k^l_{i},\,\Swe_T>k^l_{i}\}\cup\{\whSfd_T>k^l_{i},\,\Swe_T>k^l_{i}\}.
\end{equation*}
The third one is strict on the event
\begin{equation*}
A^g_{j}:=\{\whSd_T<k^g_{j},\,\whSfd_T>k^g_{j}\}\cup\{\whSd_T> k^g_{j},\,\whSfd_T<k^g_{j}\}.
\end{equation*}
and the last one is strict on
\begin{equation*}
B^g_{j}:=\{\whSd_T<k^g_{j},\,\Swe_T>k^g_{j}\}\cup\{\whSfd_T<k^g_{j},\,\Swe_T>k^g_{j}\}.
\end{equation*}

Lemma \ref{lem4.5} can be used to construct a particular superhedging strategy for the buffer cross-currency basket EPS.
From Corollary \ref{cor4.1}, we obtain
\begin{equation*}
\EPSwe =\Np \big(p_2\whP_T(\Swe,k^l_1,T)-f_2\whC_T(\Swe,k^g_1,T)\big)
\end{equation*}
where $k^l_1=1+l_1$ and $k^g_1=1+g_1$. Hence it suffices to set $\Np=1$ and consider the contingent claim
\begin{align*}
Y:=\pp_B (\Swe_T)=p_2\big(k^l_1-\Swe_T\big)^+ - f_2\big(\Swe_T- k^g_1\big)^+
\end{align*}
where $p_2$ and $f_2$ are strictly positive numbers. Our goal is to find a contingent claim $\whY$ such that $\whY \geq Y$ and static hedging of $\whY$ is feasible with relatively simple options. By combining the first inequality with the last in Lemma \ref{lem4.5}, we obtain
\begin{align*}
Y&=p_2\big(k^l_1-\Swe_T\big)^+ - f_2\big(\Swe_T- k^g_1\big)^+ \leq  p_2 w\big(k^l_1-\whSd_T\big)^+ - f_2 w\big(\whSd_T-k^g_1\big)^+\I_{\{\whSfd_T>k^g_1\}}  \\
&+p_2(1-w)\big(k^l_1-\whSfd_T\big)^+ -f_2(1-w)\big(\whSfd_T-k^g_1\big)^+\I_{\{\whSd_T>k^g_1\}}=:\whY.
\end{align*}
Since $\whY$ dominates $Y$ it is clear that a replicating strategy for $\whY$ is also a superhedging strategy for $Y$
and, in fact, it is a strict superhedging of $Y$ on the event $A^l_1 \cup B^g_1$. Let us denote such a strategy
by $\Hwesb$ so that $\HwesbT=\whY$. Since our goal is to obtain a static trading strategy, we introduce
two instances of cross-currency European call options with the payoff in the domestic currency given by
\begin{equation*}
C^*_T(\whSd,K,T,\whSfd):=\big(\whSd_T-K\big)^+\I_{\{\whSfd_T\geq K\}}
\end{equation*}
and
\begin{equation*}
C^*_T(\whSfd,K,T,\whSd):=\big(\whSfd_T-K\big)^+\I_{\{\whSd_T\geq K\}}.
\end{equation*}
The following corollary to Lemma \ref{lem4.5} gives a superhedging cost for the buffer cross-currency basket EPS

\begin{corollary} \label{cor4.2}
Let $\Hwesb$ be a static superhedging hedging strategy for the buffer cross-currency basket EPS such that $\HwesbT=\whY$.
Then the initial cost of $\Hwesb$ equals
\begin{align*}
\Hwesb_0&= w \Big( p_2 P_0(\whSd,k^l_1,T) - f_2 C^*_0(\whSd,k^g_1,T,\whSfd) \Big) \\
 &+(1-w) \Big( p_2\wtP_0(\whSfd,k^l_1,T) - f_2 C^*_0(\whSfd,k^g_1,T,\whSd) \Big).
\end{align*}
\end{corollary}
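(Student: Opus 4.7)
The plan is to treat Corollary \ref{cor4.2} as an immediate pricing consequence of the decomposition of $\whY$ that was written out just above the statement. Concretely, $\whY$ is already presented as a sum of four European-style payoffs, and each of those four payoffs has been named (either directly or in the two paragraphs preceding the corollary) as the terminal value of a specific European contingent claim, so the only content left is to assemble the corresponding time-0 prices and to verify that the assembled static portfolio indeed replicates $\whY$.

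First I would match each of the four pieces of $\whY$ with a pricing symbol that has already been introduced. The term $p_2 w (k^l_1-\whSd_T)^+$ is $w p_2$ times the terminal payoff of a plain put on $\whSd$ with strike $k^l_1$, whose time-0 price is $P_0(\whSd,k^l_1,T)$. The term $-f_2 w (\whSd_T-k^g_1)^+\I_{\{\whSfd_T\geq k^g_1\}}$ is, by the very definition of $C^*_T(\whSd,K,T,\whSfd)$ given just before the corollary, $-w f_2$ times the terminal payoff of an exotic European call whose time-0 price is denoted $C^*_0(\whSd,k^g_1,T,\whSfd)$. The remaining two terms are identified in exactly the same way with $\wtP_0(\whSfd,k^l_1,T)$ and $C^*_0(\whSfd,k^g_1,T,\whSd)$, using that $\wtP$ was the symbol already attached to European puts on $\whSfd$ in Section \ref{sec3.4}.

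Second, I would define the static strategy $\Hwesb$ as holding $w p_2$ domestic puts, shorting $w f_2$ exotic calls $C^*(\whSd,\cdot,\cdot,\whSfd)$, holding $(1-w) p_2$ effective-foreign puts, and shorting $(1-w) f_2$ exotic calls $C^*(\whSfd,\cdot,\cdot,\whSd)$. By the identifications of the previous step, this static portfolio has terminal value exactly equal to $\whY$, as required. Applying linearity of the time-0 arbitrage-free price (i.e., of the discounted conditional expectation under $\Qd$) to the four summands of $\HwesbT$ and factoring out $w$ and $1-w$ produces the formula in the statement of the corollary.

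There is no real analytic obstacle here; the substantive work for the superhedging construction has already been carried out in Lemma \ref{lem4.5} and in the paragraph that derives $\whY$ from it. The only step that requires a moment of care is the bookkeeping in matching signs and weights between the decomposition of $\whY$ and the definitions of the exotic payoffs $C^*_T(\whSd,K,T,\whSfd)$ and $C^*_T(\whSfd,K,T,\whSd)$, so that the indicators ($\I_{\{\whSfd_T\geq k^g_1\}}$ versus $\I_{\{\whSfd_T>k^g_1\}}$, and the symmetric pair) line up; this discrepancy on a $\Qd$-null set does not affect prices, and once that is noted the conclusion is immediate.
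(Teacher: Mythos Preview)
Your proposal is correct and follows exactly the approach the paper takes: the paper states Corollary \ref{cor4.2} as an immediate consequence of Lemma \ref{lem4.5} and the displayed decomposition of $\whY$, without giving a separate proof, and what you have written simply spells out the matching of each summand of $\whY$ with the corresponding option payoff together with linearity of the time-0 price under $\Qd$. Your remark about the harmless discrepancy between the strict and non-strict inequalities in the indicators is a nice touch and is the only point requiring any care.
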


To find the closed-form expression for the superhedging cost $\Hwesb_0$ within the cross-currency market model introduced in Section \ref{sec3.2}, it suffices to compute the initial model prices $C^*_0(\whSd,k^g_1,T,\whSfd)$ and $C^*_0(\whSfd,k^g_1,T,\whSd)$ of cross-currency call options. Let $\Phin_2(x_1,x_2,\rho)$ denote the c.d.f. of the two-dimensional standard Gaussian distribution with the correlation coefficient $\rho$.
We set  
\begin{align*}
\rho:=\frac{(\bm{\sigma}^f+\bm{\sigma}^q)(\bm{\sigma}^d)^*}{\|\bm{\sigma}^d\|\|\bm{\sigma}^f+\bm{\sigma}^q\|}.
\end{align*}

\begin{proposition} \label{pro4.2}
The following equality holds, for every $K>0$,
\begin{align*}
C^*_0(\whSd,K,T,\whSfd)=\Phin_2(g_1,g_2,\rho)-\wtK \Phin_2(d_1,d_2,\rho)
\end{align*}
where $\wtK = e^{-r^d T}K$ and
\begin{align*}
d_1&=\frac{-\ln\wtK-\kappa^d T-\frac{1}{2}\|\bm{\sigma}^d\|^2T}{\|\bm{\sigma}^d\|\sqrt{T}},\quad d_2=\frac{-\ln\wtK-\kappa^f T-\frac{1}{2}\|\bm{\sigma}^f+\bm{\sigma}^q\|^2T}{\|\bm{\sigma}^f+\bm{\sigma}^q\|\sqrt{T}}, \\
g_1&=\frac{-\ln\wtK-\kappa^d T +\frac{1}{2}\|\bm{\sigma}^d\|^2T}{\|\bm{\sigma}^d\|\sqrt{T}}, \quad
g_2=\frac{-\ln\wtK-\kappa^f T+(\bm{\sigma}^f+\bm{\sigma}^q)(\bm{\sigma}^d)^*T     -\frac{1}{2}\|\bm{\sigma}^f+\bm{\sigma}^q\|^2T}{\|\bm{\sigma}^f+\bm{\sigma}^q\|\sqrt{T}}.
\end{align*}
\end{proposition}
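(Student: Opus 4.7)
The plan is to compute the risk-neutral expectation
\[
C^*_0(\whSd,K,T,\whSfd)=e^{-r^d T}\EQd\!\left[(\whSd_T-K)^+\I_{\{\whSfd_T\geq K\}}\right]
\]
directly, exploiting the joint lognormality of $(\whSd_T,\whSfd_T)$ under $\Qd$. First I would split the indicator $(\whSd_T-K)^+=(\whSd_T-K)\I_{\{\whSd_T\geq K\}}$ to obtain
\[
C^*_0(\whSd,K,T,\whSfd)=e^{-r^d T}\EQd\!\left[\whSd_T\I_{\{\whSd_T\geq K,\,\whSfd_T\geq K\}}\right]-e^{-r^d T}K\,\Qd\!\left(\whSd_T\geq K,\,\whSfd_T\geq K\right).
\]
Since, under $\Qd$, the normalized increments $X_1:=\bm{\sigma}^d\textbf{W}^d_T/(\|\bm{\sigma}^d\|\sqrt{T})$ and $X_2:=(\bm{\sigma}^f+\bm{\sigma}^q)\textbf{W}^d_T/(\|\bm{\sigma}^f+\bm{\sigma}^q\|\sqrt{T})$ are standard normal with correlation $\rho$, routine log-SDE algebra reveals that the event $\{\whSd_T\geq K,\,\whSfd_T\geq K\}$ coincides with $\{X_1\geq -d_1,\,X_2\geq -d_2\}$ for the constants $d_1,d_2$ in the statement. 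Using the symmetry of the bivariate normal under $(X_1,X_2)\mapsto(-X_1,-X_2)$, the second probability equals $\Phin_2(d_1,d_2,\rho)$, which handles the cash term.

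For the first expectation I would perform the standard change of measure to the ``domestic-equity numeraire'' $\Qd^*$ via the Radon--Nikod\'ym derivative
\[
\frac{d\Qd^*}{d\Qd}=\frac{\whSd_T}{\EQd[\whSd_T]}=\frac{\whSd_T}{e^{r^d T}}=\exp\!\Big(\bm{\sigma}^d\textbf{W}^d_T-\tfrac{1}{2}\|\bm{\sigma}^d\|^2 T\Big).
\]
This produces
\[
\EQd\!\left[\whSd_T\I_{\{\whSd_T\geq K,\,\whSfd_T\geq K\}}\right]=e^{r^d T}\,\Qd^*\!\left(\whSd_T\geq K,\,\whSfd_T\geq K\right).
\]
Girsanov's theorem tells us that $\widetilde{\textbf{W}}^d_t:=\textbf{W}^d_t-(\bm{\sigma}^d)^*t$ is a standard Brownian motion under $\Qd^*$, so the vector $(X_1,X_2)$ acquires means $\|\bm{\sigma}^d\|\sqrt{T}$ and $\rho\|\bm{\sigma}^d\|\sqrt{T}$ under $\Qd^*$ while keeping covariance structure intact. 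Translating back to standard normal variables, the thresholds $-d_1$ and $-d_2$ become
\[
-d_1-\|\bm{\sigma}^d\|\sqrt{T}=-g_1,\qquad -d_2-\rho\|\bm{\sigma}^d\|\sqrt{T}=-g_2,
\]
where the identity $\rho\|\bm{\sigma}^d\|\|\bm{\sigma}^f+\bm{\sigma}^q\|=(\bm{\sigma}^f+\bm{\sigma}^q)(\bm{\sigma}^d)^*$ is used to match the definition of $g_2$. Hence the first probability under $\Qd^*$ equals $\Phin_2(g_1,g_2,\rho)$, and assembling the two pieces with the prefactor $e^{-r^d T}$ and the notation $\wtK=e^{-r^d T}K$ yields the claimed formula.

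The only delicate point is arithmetic bookkeeping: verifying that the drift corrections introduced by Girsanov precisely transform $(d_1,d_2)$ into $(g_1,g_2)$, and confirming that under both $\Qd$ and $\Qd^*$ the pair $(X_1,X_2)$ retains correlation $\rho$ (this follows because the change of measure shifts the mean of a jointly Gaussian vector without altering its covariance matrix). Everything else is mechanical substitution, and the analogous put-type statement would follow either by put--call parity or by repeating the argument with the complementary half-planes.
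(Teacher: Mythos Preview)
Your proposal is correct and follows essentially the same route as the paper: split the payoff into $I_1-\wtK I_2$, evaluate $I_2$ directly as a bivariate normal probability under $\Qd$, and handle $I_1$ by the Girsanov change of measure with density $\exp(\bm{\sigma}^d\textbf{W}^d_T-\tfrac12\|\bm{\sigma}^d\|^2T)$, which shifts the Brownian drift by $(\bm{\sigma}^d)^*$ and converts the thresholds $(d_1,d_2)$ into $(g_1,g_2)$. The only cosmetic difference is that you phrase the computation in terms of the standardized pair $(X_1,X_2)$ with explicit mean shifts, whereas the paper re-derives the SDE dynamics of $\whSd,\whSfd$ under the new measure; the arithmetic is identical.
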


\begin{proof}
Recall that $\whSd$ and $\whSfd$ satisfy under the domestic martingale measure $\Qd$
\begin{align*}
&d\whSd_t=\whSd_t\big( (r^d-\kappa^d)\,dt+\bm{\sigma}^d\,d\textbf{W}_t^d\big),\nonumber \\
&d\whSfd_t=\whSfd_t\big((r^d-\kappa^f)\,dt+(\bm{\sigma}^f+\bm{\sigma}^q)\,d\textbf{W}_t^d\big).
\end{align*}
with initial conditions $\whSd_0=\whSfd_0=1$. We observe that
\begin{align*}
&C^*_0(\whSd,K,T,\whSfd)
=e^{-r^d T}\,\EQd\big[\big(\whSd_T-K\big)^+\I_{\{\whSfd_T\geq K\}}\big]
=\EQd\big[\big(\wtSd_T-\wtK\big)^+\I_{\{\whSfd_T\geq K\}}\big]\\
&=\EQd\big[\big(\wtSd_T-\wtK\big)\I_{\{\whSd_T>K,\,\whSfd_T>K\}}\big]
=\EQd\big[\wtSd_T\I_{\{\whSd_T>K,\,\whSfd_T>K\}}\big]
 - \wtK \Qd \big[\whSd_T>K,\,\whSfd_T>K\big] =: I_1 - \wtK I_2
\end{align*}
where $\wtSd_t:=e^{-r^d t}\whSd_t$ so that $d\wtSd_t=-\wtSd_t\kappa^d\,dt+\wtSd_t\bm{\sigma}^d\,d\textbf{W}_t^d$ and $\wtK:=e^{-r^d T}K$. Let us introduce the probability measure $\wtQd$ equivalent to $\Qd$ on $(\Omega,\cF_T)$ by the Radon-Nikodym density
\begin{align*}
\frac{d\wtQd}{d\Qd}:=e^{\bm{\sigma}^d\,\textbf{W}_T^d -\frac{1}{2}\|\bm{\sigma}^d\|^2 T}.
\end{align*}
Then the process $\wt{\textbf{W}}_t^d :=\textbf{W}_t^d -(\bm{\sigma}^d)^* t,\, t \in [0,T]$ is a standard Brownian motion under $\wtQd$ and the dynamics of $\whSd$ and $\whSfd$ under $\wtQd$ are
\begin{align*}
&d\whSd_t=\whSd_t\big( (r^d-\kappa^d+\|\bm{\sigma}^d\|^2)\,dt+\bm{\sigma}^d \,d\wt{\textbf{W}}_t^d \big),\\
&d\whSfd_t=\whSfd_t\big(\big(r^d-\kappa^f+(\bm{\sigma}^f+\bm{\sigma}^q)(\bm{\sigma}^d)^*\big)\,dt+(\bm{\sigma}^f+\bm{\sigma}^q)\,d\wt{\textbf{W}}_t^d\big).
\end{align*}
It is now easy to check that
\begin{align*}
&I_1=\wtQd \big[\whSd_T>K,\,\whSfd_T>K\big]=\Phin_2(g_1,g_2,\rho),\\
&I_2=\Qd \big[\whSd_T>K,\,\whSfd_T>K\big]=\Phin_2(d_1,d_2,\rho),
\end{align*}
with the parameters $g_1,g_2,d_1$ and $d_2$ are given in the statement of the lemma.
\end{proof}

The result for $C^*_0(\whSfd,K,T,\whSd)$ is analogous Proposition \ref{pro4.2}, although it requires minor adjustments to the notation.

\begin{proposition} \label{pro4.3}
The following equality holds, for every $K>0$,
\begin{align*}
C^*_0(\whSfd,K,T,\whSd)=\Phin_2(h_1,h_2,\rho)-\wtK\Phin_2(d_1,d_2,\rho)
\end{align*}
where $d_1$ and $d_2$ are given in Proposition \ref{pro4.2} and
\begin{align*}
h_1=\frac{-\ln\wtK-\kappa^d T+(\bm{\sigma}^f+\bm{\sigma}^q)(\bm{\sigma}^d)^*T-\frac{1}{2}\|\bm{\sigma}^d\|^2T}{\|\bm{\sigma}^d\|\sqrt{T}},\quad
h_2=\frac{-\ln\wtK-\kappa^f T+\frac{1}{2}\|\bm{\sigma}^f+\bm{\sigma}^q\|^2T}{\|\bm{\sigma}^f+\bm{\sigma}^q\|\sqrt{T}}.
\end{align*}
\end{proposition}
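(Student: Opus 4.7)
The plan is to mirror the proof of Proposition \ref{pro4.2}, with the roles of $\whSd$ and $\whSfd$ interchanged. First I would rewrite the price as a discounted expectation under the domestic martingale measure $\Qd$:
\begin{align*}
C^*_0(\whSfd,K,T,\whSd) &= e^{-r^dT}\,\EQd\big[(\whSfd_T-K)^+\I_{\{\whSd_T\geq K\}}\big] \\
&= \EQd\big[(\wtSfd_T-\wtK)\I_{\{\whSd_T>K,\,\whSfd_T>K\}}\big] =: J_1 - \wtK J_2,
\end{align*}
where $\wtSfd_t := e^{-r^d t}\whSfd_t$, $\wtK = e^{-r^d T}K$, and $d\wtSfd_t = \wtSfd_t(\bm{\sigma}^f+\bm{\sigma}^q)\,d\textbf{W}_t^d$.

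The second piece $J_2 = \Qd[\whSd_T>K,\,\whSfd_T>K]$ is literally the same quantity that appeared in Proposition \ref{pro4.2}, so it equals $\Phin_2(d_1,d_2,\rho)$ with no new computation required. For $J_1 = \EQd[\wtSfd_T\,\I_{\{\whSd_T>K,\,\whSfd_T>K\}}]$, the key move is to change the numeraire using $\wtSfd$ rather than $\wtSd$. Concretely, I introduce the equivalent measure $\wbQd$ on $(\Omega,\cF_T)$ via
\begin{equation*}
\frac{d\wbQd}{d\Qd} := e^{(\bm{\sigma}^f+\bm{\sigma}^q)\textbf{W}_T^d - \frac{1}{2}\|\bm{\sigma}^f+\bm{\sigma}^q\|^2 T} = \wtSfd_T,
\end{equation*}
and by Girsanov the process $\wb{\textbf{W}}_t^d := \textbf{W}_t^d - (\bm{\sigma}^f+\bm{\sigma}^q)^* t$ is a standard Brownian motion under $\wbQd$. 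Then
\begin{equation*}
J_1 = \wbQd\big[\whSd_T>K,\,\whSfd_T>K\big].
\end{equation*}

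Under $\wbQd$, the drift of $\whSd$ picks up the cross term $(\bm{\sigma}^f+\bm{\sigma}^q)(\bm{\sigma}^d)^*$, whereas the drift of $\whSfd$ picks up $\|\bm{\sigma}^f+\bm{\sigma}^q\|^2$; this is precisely the asymmetry that produces the formulae for $h_1$ and $h_2$ in the statement (note that $h_1$ involves the cross term while $h_2$ involves $\|\bm{\sigma}^f+\bm{\sigma}^q\|^2$, in contrast to the symmetric positions in Proposition \ref{pro4.2}). Writing $\log \whSd_T$ and $\log\whSfd_T$ in closed form under $\wbQd$ and standardising, the event $\{\whSd_T>K,\,\whSfd_T>K\}$ becomes a two-dimensional half-plane event for a centered bivariate Gaussian vector whose correlation is still $\rho$, since changing measure only shifts the drift and leaves the covariance structure of $\textbf{W}^d$ invariant. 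Collecting terms yields $J_1 = \Phin_2(h_1,h_2,\rho)$ and hence the claimed identity.

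The only non-routine point is bookkeeping: keeping track of the fact that the $\whSd$-coordinate now has the asymmetric standardising drift $(\bm{\sigma}^f+\bm{\sigma}^q)(\bm{\sigma}^d)^*T$ while the $\whSfd$-coordinate picks up its own quadratic variation $\tfrac{1}{2}\|\bm{\sigma}^f+\bm{\sigma}^q\|^2 T$ (with opposite sign to what appears for the analogous term in the proof of Proposition \ref{pro4.2}, precisely because we used a different density). Once the signs and which norm multiplies $\sqrt{T}$ in each denominator are tracked carefully, the formulae for $h_1$ and $h_2$ emerge, and the invariance of $\rho$ under change of measure ensures the correlation input to $\Phin_2$ is unchanged.
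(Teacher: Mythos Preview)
Your proposal is correct and follows essentially the same approach as the paper's own proof: the paper also decomposes $C^*_0(\whSfd,K,T,\whSd)$ into $\wh{I}_1-\wtK I_2$, identifies $I_2$ as the same bivariate probability computed in Proposition~\ref{pro4.2}, and introduces the equivalent measure $\whQd$ with Radon--Nikodym density $e^{(\bm{\sigma}^f+\bm{\sigma}^q)\textbf{W}_T^d-\frac{1}{2}\|\bm{\sigma}^f+\bm{\sigma}^q\|^2T}$ (your $\wtSfd_T$) to rewrite $\wh{I}_1$ as a probability under the shifted Brownian motion. The only difference is notational, and your added remark that the correlation $\rho$ is unchanged because Girsanov shifts drifts but preserves the covariance structure is a useful clarification the paper leaves implicit.
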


\begin{proof}
We have
\begin{align*}
&C^*_0(\whSfd,K,T,\whSd)=e^{-r^d T}\,\EQd\big[\big(\whSfd_T-K\big)^+\I_{\{\whSd_T\geq K\}}\big]
=\EQd\big[\big(\wtSfd_T-\wtK\big)^+\I_{\{\whSd_T\geq K\}}\big]\\
&=\EQd\big[\big(\wtSfd_T-\wtK\big)\I_{\{\whSd_T>K,\,\whSfd_T>K\}}\big]
=\EQd\big[\wtSfd_T\I_{\{\whSd_T>K,\,\whSfd_T>K\}}\big]-\wtK\Qd\big[\whSd_T>K,\,\whSfd_T>K\big] =: \wh{I}_1 - \wtK I_2.
\end{align*}
We introduce an equivalent probability measure $\whQd$ on $(\Omega,\cF_T)$ by the Radon-Nikodym density
\begin{align*}
\frac{d\whQd}{d\Qd}:=e^{(\bm{\sigma}^f+\bm{\sigma}^q)\,\textbf{W}_T^d -\frac{1}{2}\|\bm{\sigma}^f+\bm{\sigma}^q\|^2 T}.
\end{align*}
Then the process $\wh{\textbf{W}}_t^d :=\textbf{W}_t^d -(\bm{\sigma}^f+\bm{\sigma}^q)^* t,\, t \in [0,T]$ is a standard Brownian motion under $\whQd$ and the dynamics of $\whSd$ and $\whSfd$ under $\whQd$ are
\begin{align*}
&d\whSd_t=\whSd_t\big( (r^d-\kappa^d) \, dt + (\bm{\sigma}^f+\bm{\sigma}^q)(\bm{\sigma}^d)^*\,dt+\bm{\sigma}^d \,d\wh{\textbf{W}}_t^d \big),\\
&d\whSfd_t=\whSfd_t\big( (r^d-\kappa^f+\|\bm{\sigma}^f+\bm{\sigma}^q\|^2)\,dt+(\bm{\sigma}^f+\bm{\sigma}^q)\,d\wh{\textbf{W}}_t^d\big),
\end{align*}
and thus the stated expressions are valid.
\end{proof}

\subsection{Superhedging Cost of the Floor Cross-Currency Basket EPS} \label{sec4.5}

Our next goal is to study a static superhedging strategy for the floor cross-currency basket EPS. Setting $\Np=1$, we now have that
\begin{equation*}
\EPSwe = \psi_{F}(\Swe_T)= -p_1\,\whP_T(\Swe,k^l_1,T)+p_1\whP_T(\Swe,1,T)-f_2\,\whC_t(\Swe,k^g_1,T)
\end{equation*}
and we denote
\begin{equation*}
P^*_T(\whSd,K,T,\whSfd):=\big(K-\whSd_T\big)^+\I_{\{\whSfd_T \leq K\}}
\end{equation*}
and
\begin{equation*}
P^*_T(\whSfd,K,T,\whSd):=\big(K-\whSfd_T\big)^+\I_{\{\whSd_T\leq K\}}.
\end{equation*}
The following corollary to Lemma \ref{lem4.5} gives the initial cost of a static superhedging strategy for
the cross-currency floor EPS.

\begin{corollary} \label{cor4.3}
The initial cost of a static superhedging strategy $\Hwesf$ for the floor cross-currency basket EPS with $\Np=1$ satisfies
\begin{align*}
\Hwesf_0&= w \Big( p_1P_0(\whSd,1,T)-p_1 P^*_0(\whSd,k^l_1,T)-f_2C^*_0(\whSd,k^g_1,T,\whSfd) \Big) \\
 &+(1-w) \Big( p_1\wtP_0(\whSfd,1,T)-f_2 C^*_0(\whSfd,k^g_1,T,\whSd)
 -p_1 P^*_0(\whSfd,k^l_1,T,\whSd) \Big).
\end{align*}
\end{corollary}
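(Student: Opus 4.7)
The plan is to mirror the derivation of Corollary \ref{cor4.2}, but with careful attention to the fact that the floor EPS payoff, unlike the buffer EPS, contains \emph{two} put terms with opposite signs as well as one negatively-weighted call. Thus three of the four inequalities from Lemma \ref{lem4.5} are required. Starting from
\begin{equation*}
Y := \psi_F(\Swe_T)=-p_1\bigl(k^l_1-\Swe_T\bigr)^+ + p_1\bigl(1-\Swe_T\bigr)^+ - f_2\bigl(\Swe_T-k^g_1\bigr)^+,
\end{equation*}
I would bound $Y$ from above by $\whY$, term by term, according to the sign of the coefficient multiplying each option payoff.

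First, the positively-weighted put $p_1(1-\Swe_T)^+$ needs an upper bound, so I would apply the first inequality of Lemma \ref{lem4.5} with $k^l_i=1$, giving
\begin{equation*}
p_1\bigl(1-\Swe_T\bigr)^+ \leq p_1 w\bigl(1-\whSd_T\bigr)^+ + p_1(1-w)\bigl(1-\whSfd_T\bigr)^+.
\end{equation*}
Next, the negatively-weighted put $-p_1(k^l_1-\Swe_T)^+$ requires a lower bound on the put itself; here the second inequality of Lemma \ref{lem4.5} is the right tool, and after multiplying by $-p_1<0$ it yields
\begin{equation*}
-p_1\bigl(k^l_1-\Swe_T\bigr)^+ \leq -p_1 w\, P^*_T(\whSd,k^l_1,T,\whSfd) -p_1(1-w)\,P^*_T(\whSfd,k^l_1,T,\whSd),
\end{equation*}
which is precisely what introduces the digital-style puts $P^*$. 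Finally, for $-f_2(\Swe_T-k^g_1)^+$ I would use the fourth inequality of Lemma \ref{lem4.5}, which after multiplying by $-f_2<0$ produces $-f_2 w\,C^*_T(\whSd,k^g_1,T,\whSfd)-f_2(1-w)\,C^*_T(\whSfd,k^g_1,T,\whSd)$, exactly as in the buffer case.

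Summing the three upper bounds gives a dominating contingent claim $\whY \geq Y$ that is a static linear combination of plain puts $P_T(\whSd,1,T)$, $\wtP_T(\whSfd,1,T)$, conditional puts $P^*$, and conditional calls $C^*$. A replicating portfolio for $\whY$ is therefore a static superhedging strategy $\Hwesf$ for the floor EPS, so $\Hwesf_0 = \EQd[e^{-r^d T}\whY]$ by no-arbitrage pricing under $\Qd$. Taking the expectation termwise and grouping by the weights $w$ and $1-w$ yields the asserted formula for $\Hwesf_0$.

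The main obstacle is purely bookkeeping: one must apply the correct inequality to each of the three terms, since reversing any of them would destroy the $\whY\geq Y$ direction and produce a sub-hedge rather than a super-hedge. The two put inequalities in particular (one upper, one lower) pull in opposite directions and are easy to swap by accident. Once the inequalities are lined up, the rest reduces to the linearity of pricing and the definitions of $P^*_0$ and $C^*_0$; no new calculation beyond what was already performed for the buffer EPS in Corollary \ref{cor4.2} is needed.
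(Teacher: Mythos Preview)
Your proposal is correct and follows exactly the approach the paper intends: Corollary \ref{cor4.3} is stated as an immediate consequence of Lemma \ref{lem4.5}, and your term-by-term application of the first, second, and fourth inequalities (according to the sign of each coefficient) is precisely the bookkeeping the paper leaves implicit. Your remark that the two put terms require opposite-direction bounds is the one point of care beyond the buffer case, and you have handled it correctly.
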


In view of Corollary \ref{cor4.3}, in order to find an explicit representation for the superhedging cost $\Hwesf_0$, it suffices to combine Propositions \ref{pro4.2} and \ref{pro4.3} with the following result.  

\begin{proposition} \label{pro4.4}
The following equalities are valid, for every $K>0$,
\begin{align*}
&P^*_0(\whSd,K,T,\whSfd)=\wtK\Phin_2(-d_1,-d_2,\rho)-\Phin_2(-g_1,-g_2,\rho), \\
&P^*_0(\whSfd,K,T,\whSd)=\wtK\Phin_2(-d_1,-d_2,\rho)-\Phin_2(-h_1,-h_2,\rho).
\end{align*}
\end{proposition}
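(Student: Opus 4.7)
The plan is to repeat the change-of-measure argument already carried out in Propositions \ref{pro4.2} and \ref{pro4.3}, with a single algebraic rewriting at the start to handle the new payoff. First I would observe that on the event $\{\whSd_T \leq K\}$ the positive part collapses, giving
$$(K-\whSd_T)^+ \I_{\{\whSfd_T \leq K\}} = K\,\I_{\{\whSd_T \leq K,\,\whSfd_T \leq K\}} - \whSd_T\,\I_{\{\whSd_T \leq K,\,\whSfd_T \leq K\}},$$
so that taking discounted $\Qd$-expectation yields
$$P^*_0(\whSd,K,T,\whSfd) = \wtK\,\Qd\big[\whSd_T \leq K,\,\whSfd_T \leq K\big] - \EQd\big[\wtSd_T\,\I_{\{\whSd_T \leq K,\,\whSfd_T \leq K\}}\big],$$
with $\wtK := e^{-r^d T}K$ and $\wtSd_t := e^{-r^d t}\whSd_t$. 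This is exactly the analogue of the decomposition $\wh{I}_1 - \wtK I_2$ in the proof of Proposition \ref{pro4.2}, with both inequalities reversed.

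Next I would reduce each term to a bivariate Gaussian probability by reusing the normalizations already computed. For the first term, under $\Qd$ the pair of normalized Brownian integrals $\big(\bm{\sigma}^d\textbf{W}_T^d / (\|\bm{\sigma}^d\|\sqrt{T}),\,(\bm{\sigma}^f+\bm{\sigma}^q)\textbf{W}_T^d / (\|\bm{\sigma}^f+\bm{\sigma}^q\|\sqrt{T})\big)$ is jointly standard Gaussian with correlation $\rho$, and the event $\{\whSd_T \leq K,\,\whSfd_T \leq K\}$ corresponds to $\{Z_1 \leq -d_1,\,Z_2 \leq -d_2\}$, with $d_1, d_2$ as in Proposition \ref{pro4.2}; this gives $\wtK\,\Phin_2(-d_1,-d_2,\rho)$. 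For the second term I would perform the change of measure to $\wtQd$ defined by \eqref{cEq}, under which $\wtSd_T$ is the Radon–Nikodym density, so that the expectation reduces to $\wtQd\big[\whSd_T \leq K,\,\whSfd_T \leq K\big]$. Feeding in the $\wtQd$-dynamics of $\whSd$ and $\whSfd$ already recorded in the proof of Proposition \ref{pro4.2} translates these events into $\{Z_1 \leq -g_1,\,Z_2 \leq -g_2\}$, and hence this term equals $\Phin_2(-g_1,-g_2,\rho)$. Subtracting gives the first claimed identity. The second identity follows by the identical template with the roles of $\whSd$ and $\whSfd$ interchanged and $\wtQd$ replaced by $\whQd$ from \eqref{cEq+}, under which the $\whQd$-dynamics of the pair $(\whSd,\whSfd)$ recorded in the proof of Proposition \ref{pro4.3} produce the thresholds $(-h_1,-h_2)$ instead of $(-g_1,-g_2)$.

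I do not anticipate any genuine obstacle; the content is essentially a sign-flipped transcription of the previous two proofs. The one point that warrants a careful sanity check is the correlation bookkeeping in the bivariate Gaussian: using the symmetry $(Z_1,Z_2)\sim(-Z_1,-Z_2)$ for a centred bivariate normal, one must confirm that $\Qd(Z_1 \leq a,\,Z_2 \leq b) = \Phin_2(a,b,\rho)$ with the \emph{same} $\rho$ (not $-\rho$), which is why the conclusions come out with $\Phin_2(-d_1,-d_2,\rho)$ etc. An alternative route would be to derive the result from a put–call-style parity relating $C^*$ and $P^*$ through pairs of digital options on $(\whSd,\whSfd)$, but the direct calculation outlined above is shortest since the necessary Gaussian probabilities are already available from Propositions \ref{pro4.2} and \ref{pro4.3}.
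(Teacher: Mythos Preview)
Your proposal is correct and follows essentially the same approach as the paper: decompose the discounted payoff into $\wtK\,\Qd[\whSd_T<K,\,\whSfd_T<K]-\EQd[\wtSd_T\I_{\{\whSd_T<K,\,\whSfd_T<K\}}]$ and then invoke the changes of measure $\wtQd$ and $\whQd$ from Propositions \ref{pro4.2} and \ref{pro4.3}. Your write-up is in fact more detailed than the paper's, which simply points back to those earlier proofs; the correlation sanity check you mention is a nice addition.
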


\begin{proof}
We first compute
\begin{align*}
&P^*_0(\whSd,K,T,\whSfd)
=e^{-r^d T}\,\EQd\big[\big(K-\whSd_T\big)^+\I_{\{\whSfd_T\leq K\}}\big]=\EQd\big[\big(\wtK-\wtSd_T\big)^+\I_{\{\whSfd_T\leq K\}}\big]\\
&=\wtK \Qd \big[\whSd_T<K,\,\whSfd_T<K\big]-\EQd\big[\wtSd_T\I_{\{\whSd_T<K,\,\whSfd_T<K\}}\big].
\end{align*}
To obtain the first equality we use the change of a probability measure from the proof of Proposition \ref{pro4.2}.
For the second equality, we observe that
\begin{align*}
&P^*_0(\whSfd,K,T,\whSd)=e^{-r^d T}\,\EQd\big[\big(K-\whSfd_T\big)^+\I_{\{\whSd_T\leq K\}}\big]\\
&=\wtK\Qd\big[\whSd_T<K,\,\whSfd_T<K\big]-\EQd\big[\wtSfd_T\I_{\{\whSd_T<K,\,\whSfd_T<K\}}\big]. 
\end{align*}
and we use the change of a probability measure introduced in the proof of Proposition \ref{pro4.3}.
\end{proof}

\subsection{Approximate Hedging of the Buffer Cross-Currency Basket EPS} \label{sec4.6}

The superhedging strategy $\Hwes$ combined with the premium $\wh{c}$ ensures that the provider will never lose money
and thus it can be seen as a good choice for the provider. However, the value $\wh{c}$ is not necessarily competitive and may be too high for the buyer of the EPS since it does not represent an arbitrage-free price. It thus make sense to search for a less expensive hedging strategy even if it may generate a loss in some scenarios. For lack of a better term, we will refer to that situation as an {\it approximate hedging}. Of course, we would like to find an approximate hedging with some desirable properties, such as closeness to superhedging and convenience of implementation in practice.

If the indicator functions in definition of $\whY$ are omitted, then we obtain the claim
\begin{align*}
\wtY=w[p_2\big(k^l_1-\Sd_T\big)^+ - f_2\big(k^g_1-\Sd_T\big)^+]+(1-w)[p_2\big(k^l_1-\Sfd_T\big)^+ -f_2\big(k^g_1-\Sfd_T\big)^+]
\end{align*}
which coincides with $\whY$ on the complement of the event $B^g_1$. On the one hand, it should be acknowledged that the claim $\wtY$ does not dominate $Y$, in general. On the other hand, to replicate $\wtY$ it suffices to use European options with the underlying assets $\Sd$ and $\Sfd$ to build the hedging portfolios from Propositions \ref{cor3.1} and \ref{cor3.3}. Then
an approximate replication is obtained by combining the domestic and effective foreign static hedging strategies with weights $w$ and $1-w$, respectively. Formally, the wealth $\Hwea$ of the approximate hedging strategy for
the cross-currency buffer EPS satisfies
\begin{equation*}
\Hwea_T=w\whHd+(1-w)\whHfe
\end{equation*}
where $\whHd$ and $\whHfe$ are given by Corollaries \ref{cor3.1} and \ref{cor3.3}, respectively. The strategy $\Hwea$ does
not satisfy the superhedging condition $\Hwea_T \geq \whHwe$ and thus it may result in provider's losses in some adverse market scenarios. However, it only requires to use on simple options on reference portfolios $\Sd$ and $\Sf$.

\begin{corollary} \label{cor4.4}
The initial cost of the approximate hedging strategy $\Hwea$ for the buffer cross-currency basket EPS with $\Np=1$ equals
\begin{equation*}
\Hwea_0=w\whH^d_0+(1-w)\whH^{f,e}_0.
\end{equation*}
\end{corollary}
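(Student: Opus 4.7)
The plan is to exploit the fact that the approximate hedging strategy $\Hwea$ is defined as a linear combination, with the constant weights $w$ and $1-w$, of the two static hedging strategies $\whHd$ and $\whHfe$ whose initial costs have already been identified in Corollaries \ref{cor3.1} and \ref{cor3.3}. Since no other ingredients (e.g., cash holdings, dynamic rebalancing, or additional instruments) enter the definition of $\Hwea$, the whole argument reduces to linearity of the initial cost functional on static portfolios of European options.

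First, I would observe that both $\whHd$ and $\whHfe$ are static portfolios of plain-vanilla European options: the former consists of put and call options written on $\Sd$ with strikes $K^l_i(\Sd_0)$ and $K^g_j(\Sd_0)$, while the latter consists of options on the synthetic asset $\Sfd$ with strikes $K^l_i(\Sfd_0)$ and $K^g_j(\Sfd_0)$. Consequently, the portfolio whose time-$T$ value is $\Hwea_T = w\whHd_T + (1-w)\whHfe_T$ is itself static, with the positions in each constituent option scaled by $w$ or $1-w$, as appropriate.

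Next, I would invoke the absence of arbitrage in the cross-currency market model of Section \ref{sec3.2}, which guarantees that the initial cost of any static portfolio coincides with the corresponding linear combination of the individual option prices. In particular, the value process of the combined static strategy satisfies $\Hwea_t = w\whH^d_t + (1-w)\whH^{f,e}_t$ for every $t \in [0,T]$. Specializing to $t = 0$ yields the claimed identity
\[
\Hwea_0 = w\whH^d_0 + (1-w)\whH^{f,e}_0.
\]

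There is no genuine obstacle here: the statement is an immediate consequence of the additivity of arbitrage-free prices applied to a static trading strategy with constant weights. The only point that might be worth highlighting for the reader is that, although $\Hwea$ was introduced via its terminal payoff, the staticity of the two constituent hedging portfolios ensures that the same linear relation propagates back to $t=0$ without any further assumption on the underlying dynamics.
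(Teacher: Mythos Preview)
Your proposal is correct and matches the paper's approach: the paper gives no explicit proof of Corollary~\ref{cor4.4}, treating it as an immediate consequence of the definition $\Hwea_T = w\whHd + (1-w)\whHfe$ together with linearity of the initial cost of a static option portfolio. Your write-up simply spells out this trivial step.
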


Notice that the static hedging strategy from Corollary \ref{cor4.2} has a similar structure to two separate hedging strategies from Propositions \ref{cor3.1} and \ref{cor3.3}. The main differences between the strategy from Corollary \ref{cor4.2}
and separate strategies are the underlying assets and the strikes of European options used in hedging portfolios. In separate hedging strategies, the provider considers domestic equity and foreign equity independently and thus the strikes of options are only related to each economy and, as a consequence, this also applies to the combined superhedging strategy. In contrast, a static hedging strategy given by $\whHwe$ hinges on the cross-currency reference portfolio as an underlying asset and thus strikes are based on the whole portfolio.

\section{Numerical Studies and Backtesting} \label{sec5}

In numerical studies based on the cross-currency market model given by \eqref{eq4}, we use the model parameters from Xu et al. \cite{XLR2024}. In particular, the volatility of the domestic portfolio is set to be $\|\bm{\sigma}^d \|=10\%$, which is consistent with the level of implied volatility represented by the ASX~200 VIX index in 2024, and for the foreign portfolio the volatility is chosen to be $\|\bm{\sigma}^f\|=15\%$, which is the medium value of the implied volatility of the U.S. equity options market during the same period, as measured by S\&P~500 VIX index. The dividend yields are taken to be $\kappa^d=4\%$ and $\kappa^f=2\%$ and thus they are in accord with the dividend yields for stock indices ASX~200 and S\&P~500, respectively.  The exchange rate USD/AUD is set to be 1.58 (i.e., USD 1.00 costs AUD 1.58), which represent the approximate mean value of the exchange rate in April 2025, and the volatility of the exchange rate is assumed to be $\|\bm{\sigma}^q\|=9\%$, which is based on the recent data for USD/AUD currency options.

Regarding the choice of correlations, it's important to note that the relationship between the Australian and U.S. equity markets is highly sensitive to the time period and the return interval chosen. As reported in \cite{L2018}, from 2000 to 2018, the correlation between the daily returns on ASX~200 and S\&P~500 was just 0.14. However, our numerical study analyzing historical data from 2014 to 2024 found that the correlation between the one-year trailing returns on ASX~200 and S\&P~500 was 0.77. Therefore, we consider several values for the correlation $\rhodf$ between domestic and foreign equities: -0.4 (negative correlation), 0.1 (low correlation), and 0.7 (high correlation), as these provide a useful range for analysis.
The correlation between the exchange rate and equity market is also a notable phenomenon, although that relationship can be complex and thus its analysis requires a strong expertise in FX market. For concreteness, we set the correlation between the domestic equity $\Sd$ (resp., the foreign equity $\Sf$) and the exchange rate to be $\rhod=0.1$ (resp., $\rhof=-0.3$). This means, for example, that a falling AUD (i.e., an increasing exchange rate $\X$) is likely to be linked with in a rising ASX~200 quote (resp., a falling S\&P 500 quote), and vice versa.

To obtain the vectors of volatilities in dynamics \eqref{eq4} we introduce the correlated Brownian motions $Z^1,Z^2$ and $Z^3$ by
\begin{equation} \label{eq_Wiener}
Z^1_t= W^{d1}_t,\ \,
Z^2_t=\rhodf W^{d1}_t+\sqrt{1-\rhodf^2}\,W^{d2}_t,\ \,
Z^3_t=\alone W^{d1}_t+\altwo W^{d2}_t+\althree W^{d3}_t
\end{equation}
where $\bm{W}^d=(W^{d1},W^{d2},W^{d3})$ is a standard Brownian motion  under $\Q^d$ and where we denote
\begin{align} \label{eq_alphas}
\alone := \rhod,\quad \altwo:=\frac{\rhof-\rhodf\rhod}{\sqrt{1-\rhodf^2}},\quad \althree:= \sqrt{1-\rhod^2-\altwo^2}.
\end{align}
Then the parameters used in what follows are summarised in Table \ref{table1} and they are consistent with the set of volatilities and correlations chosen above.

\subsection{Model Pricing of Domestic and Foreign EPSs}   \label{sec5.1}

\begin{table} [h!]
\centering
\begin{tabular}{|l|c|}
\hline
Parameter & Value\\
\hline
Domestic short-term interest rate $r^d$ & 4.1\% \\
Foreign short-term interest rate $r^f$ & 4.5\% \\
Dividend yields of the domestic asset $\kappa^d$ & $4\%$ \\
Dividend yields of the foreign asset $\kappa^f$ & $2\%$ \\
Initial exchange rate $\X_0$ & 1.58 \\
Volatility of the domestic asset $\bm{\sigma}^d$ & $\|\bm{\sigma}^d\|=10\%$ \\
Volatility of the foreign asset $\bm{\sigma}^f$ & $\|\bm{\sigma}^f\|=15\%$ \\
Volatility of the exchange rate $\bm{\sigma}^q$ & $\|\bm{\sigma}^q\|=9\%$ \\
Correlation between two assets $\rhodf$ & $-0.4, 0.1, 0.7$ \\
Correlation between the domestic asset and exchange rate $\rhod$ & $0.1$ \\
Correlation between the foreign asset and exchange rate $\rhof$ & $-0.3$ \\
\hline
\end{tabular}
\caption{Parameters for cross-currency market model}
\label{table1}
\end{table}


After defining the parameters of domestic and foreign markets, we first examine the pricing of some standard EPSs on separate domestic and foreign portfolios. We focus here on two particular instances of a standard EPS, the buffer EPS and the floor EPS, and we analyze the weighted cost for a portfolio of EPSs associated with separate hedging strategies, which were discussed in Section \ref{sec3}. We only consider here the case of one year maturity since the influence of maturity is likely to be similar as in the case of EPSs with a single currency, which was extensively studied by Xu et al. \cite{XLR2024}.

Table \ref{table2} presents the hedging cost and key parameters for several variants of domestic and foreign EPSs.
For brevity, we use the shorthand notation for various portfolios of domestic and foreign EPSs with all prices
expressed in the domestic currency for the notional principal equal to AUD 100: \\
(i)   Domestic -- the price $\EPSdo$ of the domestic EPS (Corollary \ref{cor3.2}); \\
(ii)  Nominal -- the weighted cost $w\,\EPSdo + (1-w)\X_0\EPSfo$ where $\EPSfo$ is the price of the foreign EPS
     (Corollary \ref{cor3.2a});\\
(iii) Effective -- the weighted cost $w\,\EPSdo + (1-w)\EPSfeo$ where $\EPSfeo$ is the price of the effective foreign EPS
      (Corollary \ref{cor3.4});\\
(iv)  Quanto foreign -- the weighted cost $w\,\EPSdo + (1-w)\EPSfqo$ where $\EPSfqo$ is the price of the quanto EPS (Corollary \ref{cor3.5}).

\begin{table}[h!]
\centering
\begin{tabular}{|c|c|cccc|c|c|c|c|c|}
\hline
Buffer &$w$&$l_1$&$g_1$&$p_2$&$f_2$& Domestic  & Nominal & Effective & Quanto  \\
\hline
[1] & 50\% & -5\% & 5\% & 0.5 & 0.5 & -0.114& -0.874& -0.503& -0.995 \\

[2] & 50\% & -5\% & 5\% & 0.8 & 0.5 & 0.421& 0.048& 0.197& -0.097 \\

[3] & 20\% & -5\% & 10\% & 0.5 & 0.5 & 0.426& -0.167& 0.023& -0.327 \\

[4] & 50\% & -5\% & 10\% & 0.5 & 0.5 & 0.426& 0.055& 0.174& -0.044 \\

[5] & 80\% & -5\% & 10\% & 0.5 & 0.5 & 0.426& 0.277& 0.325& 0.238 \\

[6] & 20\% & -5\% & 10\% & 0.8 & 0.5 & 0.961& 0.986& 0.822& 0.789\\

[7] & 50\% & -5\% & 10\% & 0.8 & 0.5 & 0.961& 0.976& 0.874& 0.853 \\

[8] & 80\% & -5\% & 10\% & 0.8 & 0.5 & 0.961& 0.967& 0.926& 0.918 \\

[9] & 50\% & -5\% & 10\% & 0.8 & 0.8 & 0.681& 0.088& 0.279& -0.071 \\

[10] & 20\% & -10\% & 10\% & 0.8 & 0.5 & 0.070& -0.504& -0.217& -0.663 \\

[11] & 50\% & -10\% & 10\% & 0.8 & 0.5 & 0.070& -0.289& -0.109& -0.388 \\

[12] & 80\% & -10\% & 10\% & 0.8 & 0.5 & 0.070& -0.073& -0.002& -0.113 \\

\hline
\hline

Floor &$w$&$l_1$&$g_1$&$p_1$&$f_2$& Domestic &Nominal &Effective &Quanto \\

\hline

[1] & 50\% & -5\% & 5\% & 0.8 & 0.5 & 0.591& -0.410& -0.088& -0.518 \\

[2] & 50\% & -5\% & 10\% & 0.5 & 0.5 & 0.532& -0.231& -0.004& -0.308 \\

[3] & 20\% & -5\% & 10\% & 0.8 & 0.5 & 1.131& 0.152& 0.264& 0.013 \\

[4] & 50\% & -5\% & 10\% & 0.8 & 0.5 & 1.131& 0.519& 0.589& 0.432 \\

[5] & 80\% & -5\% & 10\% & 0.8 & 0.5 & 1.131& 0.886& 0.914& 0.851 \\

[6] & 50\% & -5\% & 10\% & 0.8 & 0.8 & 0.851& -0.369& -0.006& -0.493 \\

[7] & 50\% & -10\% & 10\% & 0.8 & 0.5 & 2.022& 1.784& 1.573& 1.674 \\

[8] & 20\% & -15\% & 10\% & 0.5 & 0.5 & 1.330& 0.787& 0.700& 0.622 \\

[9] & 50\% & -15\% & 10\% & 0.5 & 0.5 & 1.330& 0.991& 0.936& 0.888 \\

[10] & 80\% & -15\% & 10\% & 0.5 & 0.5 & 1.330& 1.194& 1.172& 1.153 \\

[11] & 20\% & -15\% & 10\% & 0.8 & 0.5 & 2.407& 2.513& 1.905& 2.307 \\

[12] & 50\% & -15\% & 10\% & 0.8 & 0.5 & 2.407& 2.473& 2.093& 2.345 \\

[13] & 80\% & -15\% & 10\% & 0.8 & 0.5 & 2.407& 2.434& 2.282& 2.382 \\

[14] & 50\% & -15\% & 10\% & 0.8 & 0.8 & 2.127& 1.585& 1.498& 1.420 \\

\hline
\end{tabular}
\caption{Prices of EPSs on separate domestic and foreign returns}
\label{table2}
\end{table}

From Table \ref{table2}, the influence of protection and fee thresholds, as well as the associated participation rates,
are the same for separate hedging strategies as for single currency underlying asset, which has been discussed by Xu et al. \cite{XLR2024}.  With the setting of the predetermined exchange rate equal to the initial exchange rate, we can see that the price of the quanto EPS is usually close to the price of the corresponding foreign EPS with the same parameters, for both buffer and floor EPSs. We observe that the level of the volatility for the domestic and foreign assets has a profound effect on the price of the quanto EPS.

Moreover, the prices of effective EPS has the highest values among the three weighted hedging strategies. This was
expected since only the effective EPS involves the stochastic exchange rate and thus covers also the currency risk.
If the weight $w$ for the domestic asset increases, then the influence of hedging costs for foreign equities decreases
and thus the difference between costs of three variants of EPSs become smaller.

\begin{example} \label{ex6.1} {\rm
As an illustration of static hedging, we consider the buffer EPS No. 3 from Table \ref{table2} with the notional principal equal to one million AUD (i.e., AUD 1M) with 20\% invested in domestic assets and 80\% held in foreign assets. We assume that the domestic EPS and the effective foreign EPS mature in one year, have the buffer threshold $l_1=-5\%$ with the protection rate $p_2=50\%$ and the fee threshold $g_1=10\%$ with the fee rate $f_2=50\%$. It is clear that the initial prices of domestic and foreign assets are irrelevant in all pricing results (see Corollaries \ref{cor3.2}, \ref{cor3.2a}, \ref{cor3.4} and \ref{cor3.5}) but they are important when constructing the static hedging with European options. Here we assume that $\Sd_0$ and $\Sf_0$ correspond to market quotes of ASX 200 and S\&P 500, $\Sd_0=100$ AUD and $\Sf_0=70$ USD, and the exchange rate $\X_0=1.58$}.

{\rm For the domestic EPS, we find from Corollary \ref{cor3.2} that the fair premium equals $20\% \times 4260=852$, which is used to establish a static hedging strategy for the domestic component of the EPS using domestic call and put options written on $\Sd$, as explained in Corollary \ref{cor3.1}. For the foreign component, we use Corollary \ref{cor3.2a} to find that the provider can obtain AUD 216 by building the hedging portfolio using European options written on the process $\Sfd$ with the initial value $\Sfd_0=110.6$ AUD. Specifically, the static hedge is composed of:
\begin{itemize}
\item $\Nfd p_2/\Sfd_0$ long put options with strike $K^l_1(\Sfd_0):=(1+l_1)\Sfd_0=105.07$ AUD,
\item $\Nfd f_2/\Sfd_0$ short call options with strike $K^g_1(\Sfd_0):=(1+g_1)\Sfd_0=121.66$ AUD.
\end{itemize}
Hence to statically hedge the foreign component the provider needs to long $(0.8M) \times 0.5/110.6=3617$ units of put options with strike AUD 105.07 and short $(0.8M) \times 0.5/110.6=3617$ units of call options with strike AUD 121.66.
The fair premium for the combined position in two EPSs equals $20\% \times 4210 + 80\% \times (-777.5) =852-622=230$ AUD per one million of the notional principal.}
\end{example}

\subsection{Model Pricing of Basket Options}  \label{sec5.2.0}

Before applying the pricing formula of cross-currency EPSs, we analyzed the two approximation methods of basket options - geometric averaging (reported as GM) and moment matching (reported as MM). Table \ref{table_basket} reports basket option prices for different strikes 
$K=1.1,0.9, 1.0$, obtained using Monte Carlo simulation as the benchmark alongside the two approximation methods. The results show that both geometric averaging and moment matching methods deliver prices very close to the simulation values, with deviations typically below $0.01\%$ across all parameter settings. This indicates that both methods are accurate and reliable for basket option valuation.

\begin{table}[h!]
\centering
\begin{tabular}{|c|ccc|ccc|ccc|}
\hline
\multicolumn{1}{|c|}{Options} & \multicolumn{3}{|c|}{Parameters} & \multicolumn{3}{|c|}{Call($K=1.1$)} & \multicolumn{3}{|c|}{Put($K=0.9$)} \\
\hline
No. &$w$ &$\rho$ &T &Simulation & GM & MM &Simulation & GM & MM \\
\hline
[1]& 20\% & -0.4 & 1 & 0.01745 & 0.01661 & 0.01742 & 0.09695 & 0.09611 & 0.09691 \\

[2]& 20\% & 0.1 & 1 & 0.02041 & 0.01993 & 0.02040 & 0.09995 & 0.09943 & 0.09990 \\

[3]& 20\% & 0.7 & 1 & 0.02390 & 0.02373 & 0.02393 & 0.10350 & 0.10323 & 0.10343 \\

[4]& 50\% & -0.4 & 1 & 0.00468 & 0.00418 & 0.00468 & 0.09003 & 0.0895 & 0.0900 \\

[5]& 50\% & 0.1 & 1 & 0.00991 & 0.00958 & 0.00993 & 0.09529 & 0.09490 & 0.09525 \\

[6]& 50\% & 0.7 & 1 & 0.01603 & 0.01587 & 0.01609 & 0.10145 & 0.10118 & 0.10140  \\

[7]& 80\% & -0.4 & 1 & 0.00402 & 0.00385 & 0.00403 & 0.09523 & 0.09499 & 0.09517 \\

[8]& 80\% & 0.1 & 1 & 0.00707 & 0.00704 & 0.00709 & 0.09829 & 0.09818 & 0.09824 \\

[9]& 80\% & 0.7 & 1 & 0.01096 & 0.01093 & 0.01099 & 0.10219 & 0.10207 & 0.10214 \\

[10]& 20\% & -0.4 & 2 & 0.03855 & 0.03692 & 0.03852 & 0.09870 & 0.09706 & 0.09867 \\

[11]& 20\% & 0.1 & 2 & 0.04327 & 0.04226 & 0.04326 & 0.10347 & 0.10240 & 0.10341 \\

[12]& 20\% & 0.7 & 2 & 0.04865 & 0.04824 & 0.04871 & 0.10895 & 0.10839 & 0.10885 \\

[13]& 80\% & -0.4 & 2 & 0.01179 & 0.01146 & 0.01186 & 0.09463 & 0.09420 & 0.09461 \\

[14]& 80\% & 0.1 & 2 & 0.01774 & 0.01762 & 0.01780 & 0.10059 & 0.10037 & 0.10055 \\

[15]& 80\% & 0.7 & 2 & 0.02458 & 0.02445 & 0.02465 & 0.10745 & 0.10719 & 0.10740 \\

\hline
\multicolumn{1}{|c|}{Options} & \multicolumn{3}{|c|}{Parameters} & \multicolumn{3}{|c|}{Call($K=1$)} & \multicolumn{3}{|c|}{Put($K=1$)} \\
\hline
No. &$w$ &$\rho$ &T &Simulation & GM & MM &Simulation & GM & MM \\
\hline
[1]& 20\% & -0.4 & 1 & 0.05265 & 0.05239 & 0.05262 & 0.03616 & 0.03590 & 0.03613 \\

[2]& 20\% & 0.1 & 1 & 0.05650 & 0.05629 & 0.05649 & 0.04006 & 0.03980 & 0.04000 \\

[3]& 20\% & 0.7 & 1 & 0.06069 & 0.06061 & 0.06073 & 0.04431 & 0.04413 & 0.04424 \\

[4]& 50\% & -0.4 & 1 & 0.03344 & 0.03307 & 0.03347 & 0.02281 & 0.02240 & 0.02281 \\

[5]& 50\% & 0.1 & 1 & 0.04197 & 0.04171 & 0.04199 & 0.03137 & 0.03104 & 0.03133 \\

[6]& 50\% & 0.7 & 1 & 0.05008 & 0.04997 & 0.05013 & 0.03951 & 0.03931 & 0.03947  \\

[7]& 80\% & -0.4 & 1 & 0.03051 & 0.03054 & 0.03056 & 0.02573 & 0.02570 & 0.02572 \\

[8]& 80\% & 0.1 & 1 & 0.03627 & 0.03624 & 0.03631 & 0.03150 & 0.03140 & 0.03147 \\

[9]& 80\% & 0.7 & 1 & 0.04206 & 0.04205 & 0.04212 & 0.03731 & 0.03721 & 0.03728 \\

[10]& 20\% & -0.4 & 2 & 0.07752 & 0.07686 & 0.07749 & 0.04554 & 0.04487 & 0.04551 \\

[11]& 20\% & 0.1 & 2 & 0.08275 & 0.08220 & 0.08274 & 0.05082 & 0.05021 & 0.05076 \\

[12]& 20\% & 0.7 & 2 & 0.08838 & 0.08814 & 0.08845 & 0.05655 & 0.05615 & 0.05647 \\

[13]& 80\% & -0.4 & 2 & 0.04288 & 0.04290 & 0.04297 & 0.03359 & 0.03352 & 0.03359 \\

[14]& 80\% & 0.1 & 2 & 0.0508 & 0.05067 & 0.05086 & 0.04152 & 0.04129 & 0.04148 \\

[15]& 80\% & 0.7 & 2 & 0.05871 & 0.05860 & 0.05881 & 0.04946 & 0.04922 & 0.04943 \\

\hline
\end{tabular}
\caption{Prices of Basket options under alternative valuation methods}
\label{table_basket}
\end{table}

When comparing the two approaches, the moment matching method consistently produces prices that are slightly closer to the Monte Carlo benchmark than geometric averaging method, for both call and put options across all strikes. In contrast, geometric averaging method tends to slightly underestimate call prices and, in some cases, overestimate puts relative to the benchmark. Although the deviations are small, the systematic pattern indicates that MM more accurately captures the distributional characteristics of the basket and better reflects the hedging costs of these options.

The superior performance of moment matching method can be attributed to its use of higher-order moments to approximate the distribution of the basket, which allows it to more precisely replicate both the variance and skewness of the underlying assets. geometric averaging method, while computationally simpler, relies on a geometric mean approximation that may slightly distort the distribution, particularly when asset correlations and weights create asymmetry in the basket payoff. Despite this, both methods remain efficient and produce results very close to the benchmark simulation, making them suitable for practical applications.

Overall, although both geometric averaging and moment matching methods are accurate and computationally efficient, the moment matching method demonstrates a clear advantage in replicating Monte Carlo prices. This systematic edge makes moment matching method the preferred choice for basket option approximation in subsequent cross-currency EPS pricing, ensuring greater precision in both valuation and hedging analysis.

\subsection{Model Pricing of Cross-Currency EPSs}  \label{sec5.2}

Using once again the data reported in Table \ref{table1} we now analyze model prices of cross-currency and quanto EPSs, as given by Corollary \ref{cor4.1}. Recall that in both cases we deal with European basket options on domestic returns combined with either effective or quanto foreign returns. Since closed-form solutions for basket options with arithmetic averaging are not available in a lognormal model, we use instead four numerical approaches, which were studied in Section \ref{sec4}: \\
(i)   Simulation -- the Monte Carlo method applied to payoffs of various kinds of EPSs; \\
(ii)  Geometric  -- the closed-form solutions for basket options with geometric averaging, which were obtained in Propositions \ref{pro4.1} for effective;\\
(iii) Moments -- the approximate prices obtained by the moment matching method for the cross-currency EPS from Lemma \ref{lem4.3};\\
(iv) Super -- the superhedging costs $\Hwesb_0$ and $\Hwesf_0$ from Corollaries \ref{cor4.2} and \ref{cor4.3}.

\begin{table}[h!]
\centering
\begin{tabular}{|c|c|c|cc|c|c|c|c|c|}
\hline
Buffer &$w$&$\rho_{12}$&$l_1$&$g_1$ &Simulation & Geometric & Moments & Super  \\
\hline
[1] & 50\% & -0.4 & -5\% & 5\% & -0.045& -0.091& -0.112& 1.474 \\

[2] & 80\% & -0.4 & -5\% & 5\% & 0.111& 0.096& 0.081& 1.228 \\

[3] & 50\% & 0.7 & -5\% & 5\% & 0.249& 0.163& 0.159& 0.439  \\

[4] & 80\% & 0.7 & -5\% & 5\% &  0.341& 0.304& 0.304& 0.361 \\

[5] & 50\% & -0.4 & -5\% & 10\% & 0.416& 0.385& 0.369& 1.767 \\

[6] & 80\% & -0.4 & -5\% & 10\% & 0.533& 0.526& 0.511& 1.508 \\

[7] & 50\% & 0.1 & -5\% & 10\% & 0.674& 0.617& 0.610& 1.496 \\

[8] & 80\% & 0.1 & -5\% & 10\% & 0.738& 0.710& 0.712& 1.263  \\

[9] & 50\% & 0.7 & -5\% & 10\% & 0.910& 0.840& 0.836& 0.973  \\

[10] & 80\% & 0.7 & -5\% & 10\% & 0.923& 0.893& 0.893& 0.927  \\

[11] & 50\% & 0.1 & -10\% & 10\% & -0.098& -0.124& -0.140& 0.513 \\

[12] & 80\% & 0.1 & -10\% & 10\% & -0.013& -0.028& -0.029& 0.335  \\

[13] & 50\% & 0.7 & -10\% & 10\% & -0.051 & -0.096 & -0.105 & 0.210 \\

[14] & 80\% & 0.7 & -10\% & 10\% & 0.023 & 0.005 & 0.003 & 0.101  \\

\hline
\hline

Floor &$w$&$\rho_{12}$&$l_1$&$g_1$& Simulation & Geometric & Moments & Super \\

\hline

[1] & 50\% & -0.4 & -5\% & 5\% & 0.608& 0.543& 0.538& 2.893 \\

[2] & 50\% & 0.7 & -5\% & 5\% & 0.107& 0.040& 0.036& 0.679 \\

[3] & 50\% & -0.4 & -5\% & 10\% & 1.068& 1.019& 1.019& 3.187 \\

[4] & 80\% & -0.4 & -5\% & 10\% & 1.190& 1.167& 1.169& 3.001 \\

[5] & 50\% & 0.7 & -5\% & 10\% & 0.769& 0.717& 0.713& 1.513 \\

[6] & 80\% & 0.7 & -5\% & 10\% & 1.016& 0.995& 0.993& 1.668 \\

[7] & 50\% & -0.4 & -10\% & 10\% & 1.589& 1.509& 1.519& 3.328 \\

[8] & 50\% & 0.7 & -10\% & 10\% & 1.730& 1.653& 1.654& 2.156 \\

[9] & 20\% & -0.4 & -15\% & 10\% & 1.997& 1.857& 1.859& 3.596 \\

[10] & 50\% & -0.4 & -15\% & 10\% & 1.715& 1.628& 1.637& 3.348 \\

[11] & 80\% & -0.4 & -15\% & 10\% & 1.914& 1.880& 1.878& 3.099 \\

[12] & 20\% & 0.1 & -15\% & 10\% & 2.05& 1.908& 1.912& 3.265 \\

[13] & 50\% & 0.1 & -15\% & 10\% & 2.042& 1.948& 1.954& 3.917 \\

[14] & 80\% & 0.1 & -15\% & 10\% & 2.155& 2.115& 2.117& 3.198 \\

[15] & 50\% & 0.7 & -15\% & 10\% & 2.216& 2.124& 2.127& 2.437 \\

[16] & 80\% & 0.7 & -15\% & 10\% & 2.323& 2.282& 2.284& 2.359 \\

\hline
\end{tabular}
\caption{Prices of cross-currency EPSs under alternative valuation methods}
\label{table3}
\end{table}

\begin{table}[h!]
\centering
\begin{tabular}{|c|c|c|cc|c|c|c|c|c|}
\hline
Buffer &$w$&$\rho_{12}$&$l_1$&$g_1$ &Simulation & Geometric & Moments & Super  \\
\hline
[1] & 50\% & -0.4 & -5\% & 5\% & -0.182& -0.221& -0.247& 1.974 \\

[2] & 80\% & -0.4 & -5\% & 5\% & 0.045& 0.035& 0.016 & 1.442 \\

[3] & 50\% & 0.7 & -5\% & 5\% & 0.068& -0.016& -0.020 & 0.307 \\

[4] & 80\% & 0.7 & -5\% & 5\% & 0.267& 0.230& 0.231& 0.149 \\

[5] & 50\% & -0.4 & -5\% & 10\% & 0.295& 0.272& 0.251& 2.299 \\

[6] & 80\% & -0.4 & -5\% & 10\% & 0.465& 0.466& 0.445& 1.727 \\

[7] & 50\% & 0.1 & -5\% & 10\% & 0.531& 0.478& 0.470 & 1.979 \\

[8] & 80\% & 0.1 & -5\% & 10\% & 0.672& 0.646& 0.646 & 1.466 \\

[9] & 50\% & 0.7 & -5\% & 10\% & 0.752& 0.683& 0.679 & 1.333 \\

[10] & 80\% & 0.7 & -5\% & 10\% & 0.856& 0.825& 0.826 & 0.986 \\

[11] & 50\% & 0.1 & -10\% & 10\% & -0.174& -0.195& -0.213& 0.737 \\

[12] & 80\% & 0.1 & -10\% & 10\% & -0.042 & -0.055& -0.057 & 0.435 \\

[13] & 50\% & 0.7 & -10\% & 10\% & -0.153 & -0.195& -0.205 & 0.092 \\

[14] & 80\% & 0.7 & -10\% & 10\% & -0.014 & -0.032& -0.033 & 0.045 \\

\hline
\hline

Floor &$w$&$\rho_{12}$&$l_1$&$g_1$& Simulation & Geometric & Moments & Super \\

\hline

[1] & 50\% & -0.4 & -5\% & 5\% & 0.492& 0.427& 0.421& 3.716 \\

[2] & 50\% & 0.7 & -5\% & 5\% & 0.002& -0.064& -0.069& 0.743 \\

[3] & 50\% & -0.4 & -5\% & 10\% & 0.969& 0.919& 0.919& 4.041 \\

[4] & 80\% & -0.4 & -5\% & 10\% & 1.152& 1.129& 1.132& 3.363 \\

[5] & 50\% & 0.7 & -5\% & 10\% & 0.686& 0.636& 0.630& 1.769 \\

[6] & 80\% & 0.7 & -5\% & 10\% & 0.988& 0.967& 0.965 & 1.805 \\

[7] & 50\% & -0.4 & -10\% & 10\% & 1.409& 1.329& 1.340 & 4.242 \\

[8] & 50\% & 0.7 & -10\% & 10\% & 1.592& 1.515& 1.515 & 2.707 \\

[9] & 20\% & -0.4 & -15\% & 10\% & 1.715& 1.580& 1.579& 5.065 \\

[10] & 50\% & -0.4 & -15\% & 10\% & 1.502& 1.416& 1.425& 4.270 \\

[11] & 80\% & -0.4 & -15\% & 10\% & 1.807& 1.778& 1.771& 3.475 \\

[12] & 20\% & 0.1 & -15\% & 10\% & 1.775& 1.634& 1.636 & 4.636 \\

[13] & 50\% & 0.1 & -15\% & 10\% & 1.848& 1.754& 1.761& 3.917 \\

[14] & 80\% & 0.1 & -15\% & 10\% & 2.065& 2.024& 2.026& 3.198 \\

[15] & 50\% & 0.7 & -15\% & 10\% & 2.037& 1.943& 1.946 & 3.127 \\

[16] & 80\% & 0.7 & -15\% & 10\% & 2.242& 2.201& 2.203& 2.650 \\

\hline
\end{tabular}
\caption{Prices of quanto EPSs under alternative valuation methods}
\label{table4}
\end{table}

In Table \ref{table3} (resp., Table \ref{table4}) we report prices of cross-currency EPSs (resp., quanto EPSs) for AUD 100 notional principal computed using four alternative numerical approaches. The Monte Carlo estimates are based on one million simulations and thus the prices of cross-currency basket options, and hence also the prices of cross-currency EPSs, obtained through simulation are considered to be exact values. It is clear that the three numerical approaches for the hedging costs - simulation, geometric averaging, and three moment matching - provide very close results, with no more than 0.01\% difference in most cases. We also note that, when compared with the three moment matching method, the geometric average approximation provides prices closer to the exact prices obtained by the Monte Carlo method in most cases. Moreover, the hedging costs under three moment matching method are usually lower than the exact costs, for both the buffer and floor EPS.
As was expected, the superhedging prices are always higher than the exact prices since the terminal value of a superhedging strategy dominates the payoff of an EPS and the market model is arbitrage-free.

It should be noted that the prices of cross-currency and quanto EPSs with identical protection and fee parameters are very close. This is due to the fact that we use a quanto exchange rate $\overline{\X}=\X_0$ and the volatility of the exchange rate is relatively small (as compared to the volatilities of equities), so that the influence of the exchange rate is also relatively small. Of course, if a holder chooses a predetermined exchange rate markedly different from its current value
then the prices of cross-currency and quanto EPSs would be markedly different. It is worth noting that in Tables \ref{table3} and \ref{table4} the prices of cross-currency EPSs are higher than that of quanto EPSs when dealing with both buffer EPSs and floor EPSs, because the protection of a cross-currency EPS is more comprehensive than that of a quanto EPS with identical
parameters.

Comparing hedging and superhedging strategies, it is clear that a superhedging cost, which is reported as Super in Tables \ref{table3} and \ref{table4}, dominates the cost of a respective hedging and quanto hedging strategy in all cases.
This result is consistent with the property that the superhedging strategy covers more than the overall provider's
gains or losses and thus it should cost more than the respective hedging strategy.

Moreover, the structure of EPS products influences the difference between hedging and superhedging costs. We can see that the difference between hedging and superhedging costs for the floor EPS is higher than for the buffer EPS for both cross-currency
and quanto EPS. Since the floor EPS has a more complex structure than the buffer EPS it needs more long and short positions
in European options to hedge the respective final payoff and this increases the costs rapidly when applying superhedging strategies.

\begin{example} \label{ex6.2}
{\rm  As an example of a cross-currency EPS, we consider the floor EPS No. 6 from Table \ref{table3} with the notional principal one million AUD (i.e., AUD 1M), in which 80\% invested in domestic assets and 20\% held in foreign assets. The buffer threshold equals $l_1=-5\%$ with participation rate $p_2=80\%$ and the fee threshold equals $g_1=10\%$ with participation rate $f_2=50\%$. As usual, the maturity date is taken to be $T=1$. As shown in Section \ref{sec5}, assume the correlation between two assets is $\rhodf=0.7$, the arbitrage-free premium for the EPS equals AUD 10160.}

{\rm We first note that $\Swe_0=0.8 \times 100 + 0.2 \times 1.58 \times 70 =102.12$ AUD.
Hence a static hedging portfolio with European options written on $\Swe$ involves:
\begin{itemize}
\item $\frac{p_2}{\Swe_0}$ short put options with strike $K^l_1(\Swe_0):=(1+l_1)\Swe_0 = 97.014$,
\item $\frac{p_2}{\Swe_0}$ long put options with strike $\Swe_0= 102.12$,
\item $\frac{f_2}{\Swe_0}$ short call options with strike $K^g_1(\Swe_0):=(1+g_1)\Swe_0 =112.332$.
\end{itemize}
The provider needs to short $(0.8M/102.12)=7834$ units of put options with strike 97.014, long $(0.8M/102.12)=7834$ units of put options with strike 102.12, and short $(0.5M/102.12)=4896$ units of call options with strike 112.332.}
\end{example}


\subsection{Backtesting of Aggregated Cross-Currency EPS}       \label{sec6}

We now present a backtesting framework for the cross-currency EPS product. From UniSuper, we obtain daily cumulative historical returns for the Balanced, Growth, and High Growth portfolios from May 1, 2015, to April 30, 2025, along with daily returns for Australian shares, international shares, and cash. Additionally, we utilize daily closing prices of the S\&P~500 and ASX~200 indices to align with the daily returns of international and Australian shares, respectively.
We compute the correlations between the weekly returns of Australian shares and the ASX~200 index, and between international shares and the S\&P~500 index. Both correlations exceed 0.9, suggesting that these indices are appropriate proxies for simulating the historical returns of Australian and international shares in the UniSuper portfolios.

Our analysis assumes that a cohort of investors enters into one-year EPS contracts of a specific type on each trading day, with all contracts initiated on the same day having the same notional principal. Considering one-year EPS contracts initiated from May 1, 2015, to April 30, 2024, we generate 3288 one-year trailing returns for the Balanced, Growth, and High Growth portfolios. We apply the cross-currency basket EPS contracts to the High Growth portfolio to assess the effect of loss protection. The EPS-adjusted one-year trailing returns are then compared to the original realized returns of the Balanced, Growth, and High Growth portfolios. 

According to UniSuper \cite{Unisuper}, the High Growth portfolio consists of $43\%$ Australian shares, $49\%$ international shares, $5\%$ infrastructure and private equity, and $3\%$ property. For simplicity, we assume that $8\%$ of the portfolio is allocated to cash, with the remaining allocation invested in Australian and international shares. 

Unlike single-currency EPS products, the hedging strategy for cross-currency basket EPS contracts requires over-the-counter (OTC) basket options, for which market prices are not directly observable. Therefore, we employ a geometric averaging method to approximate the prices of these basket options, allowing us to estimate the fair fee rate for the cross-currency basket EPS.  
Under this dataset, we obtain the real interest rates in both Australia and the U.S. as the risk-neural interest rate for Australian and internationl shares. We also derive implied volatilities for the market indices using their respective VIX indices and incorporate historical exchange rate data for each trading date. However, as direct exchange rate volatility data is unavailable, we approximate it using the standard deviation of historical exchange rate changes, measured as $\frac{Q_T}{Q_0}$. Unlike index volatility, which varies over time, we assume exchange rate volatility remains constant. Additionally, we compute the correlations between weekly index returns and exchange rate returns over the holding period to model the dependence structure of the Wiener processes, in accordance with Eq.~\eqref{eq_Wiener}.
The parameters used in this analysis are presented in the following table:

\begin{table}[h]
\centering
\begin{tabular}{|l|c|}
\hline
Parameters & Value \\
\hline
Domestic short-term interest rate $r^d$ & Australian interest rate \\
Foreign short-term interest rate $r^f$ & U.S. interest rate \\
Weights in domestic equity $w_1$ & 0.43 \\
Weights in foreign equity $w_2$ & 0.49 \\
Weights in cash $1-w_1-w2$ & 0.08 \\
Volatility of the domestic equity $\|\bm{\sigma}^d \|$ & ASX VIX \\
Volatility of the foreign equity $\|\bm{\sigma}^f \|$ & SP VIX \\
Volatility of the exchange rate $\|\bm{\sigma}^q \|$ & 6.74\% \\
\hline
\end{tabular}
\caption{Parameters of backtesting}
\label{tab:back_paras}
\end{table}

In our empirical study of the real-world benefits of an EPS for its holder, we will focus on the following six specifications for the buffer and floor EPSs:
\begin{itemize}
    \item \textbf{Buffer 1}: $p_2=0.5,\ l_1=-5\%,\ g_1=5\%$;
    \item \textbf{Buffer 2}: $p_2=0.7,\ l_1=-5\%,\ g_1=5\%$;
    \item \textbf{Buffer 3}: $p_2=0.7,\ l_1=-10\%,\ g_1=10\%$;
    \item \textbf{Floor 1}:  $p_1=0.5,\ l_1=-10\%,\ g_1=5\%$;
    \item \textbf{Floor 2}:  $p_1=0.7,\ l_1=-10\%,\ g_1=5\%$;
    \item \textbf{Floor 3}:  $p_1=0.5,\ l_1=-10\%,\ g_1=10\%$.
\end{itemize}
For each cross-currency basket EPS, the fair fee rate is calculated consistently with the current market data (using the geometric averaging method for basket options) so it depends on their inception date and the structure of an EPS. We first make a selection of the protection rate ($p_1$ or $p_2$) and the protection and fee thresholds ($l_1$ and $g_1$), and then we use the market data for basket option approximation (according to Proposition \ref{pro4.1}) to identify a unique level of the fair fee rate $f_2$.

We have the following figures to show the empirical densities of the protected trailing returns of the High Growth portfolio with cross-currency basket EPS. The original realised one-year trailing returns of the High Growth portfolio are also summarised by empirical density in the figures as benchmarks.
We also provide the mean value of relative gain, loss and net outcome (comparing with original returns) from each EPS product, and the Sharpe ratio of the original and protected trailing returns. All values reported in tables, except for the Sharpe ratio, are given as percentages.

\begin{figure}[h]
    \centering
    \includegraphics[width=0.9\linewidth]{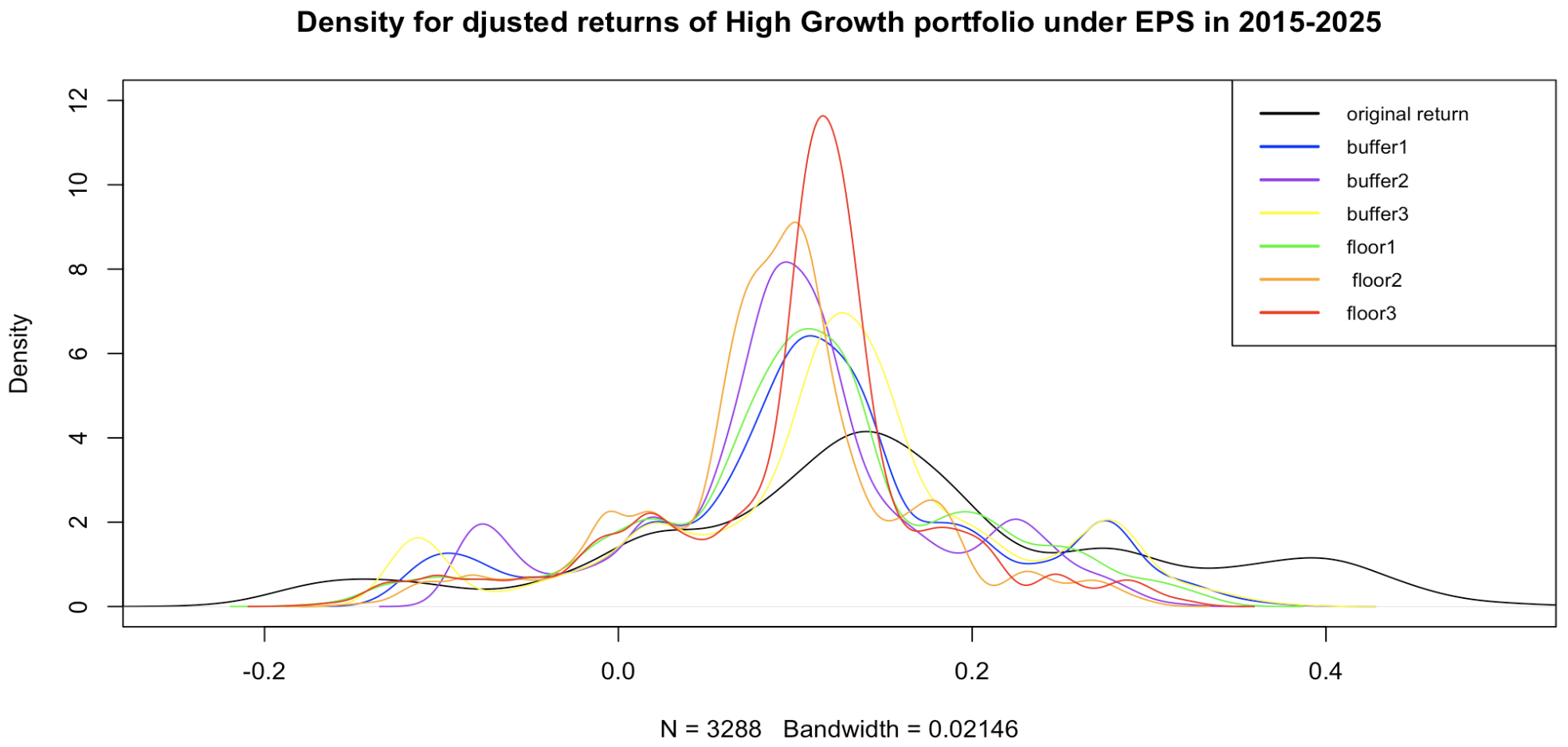}
    \caption{Empirical densities of returns on High Growth portfolio in 2015–2025}
    \label{fig:CC_backtest}
\end{figure}

\begin{table}[h]
\centering
\begin{tabular}{@{}c|ccccccc|cccc@{}}
\hline
\multicolumn{1}{c|} {CC} & \multicolumn{7}{c|}{Empirical quantiles} & \multicolumn{4}{c}{Empirical means and SR} \\
\hline
Case      & Min    & 10\%   & 25\%   & 50\%   & 75\%   & 90\%   & Max  & Gain  & Loss  & Net & Sharpe   \\
\hline
\hline

Original & -23.35 &-2.26 &6.94 &14.33 &23.08 &36.30 &55.41 &- &- &- & 0.9495 \\
\hline
Buffer1 & -14.17 &-2.26 &6.35 &11.08 &15.81 &26.38 &39.03 &4.12 &-5.32 &-3.85 & 0.6808 \\
\hline
Buffer2 & -10.50 &-2.26 &6.02 &9.72 &13.51 &21.71 &32.47 &3.44 &-5.03 &-3.16 & 0.5732 \\
\hline
Buffer3 & -14.00 &-2.26 &6.94 &12.55 &16.88 &26.64 &38.86 &5.76 &-7.45 &-5.38 & 0.7289 \\
\hline
Floor1 & -18.35 &-1.13 &6.20 &10.70 &15.18 &22.67 &35.09 &3.51 &-8.21 & -5.08 & 0.6402 \\
\hline
Floor2 & -16.35 &-0.68 &5.91 &9.20 &12.15 &17.80 &30.92 &4.91 &-8.65 &-6.20 & 0.5164 \\
\hline
Floor3 & -18.35 &-1.13 &6.92 &11.24 &13.38 &18.90 &33.36 &3.51 &-6.18 & -4.43 & 0.5947 \\
\hline
\end{tabular}
\caption{Performance of EPSs on High Growth portfolio in 2015-2025}
\label{quan_CC_backtest}
\end{table}

Figure \ref{fig:CC_backtest} and Table \ref{quan_CC_backtest} analyze the performance of High Growth portfolio with different cross-currency basket EPS products from 2015 to 2025, using empirical quantiles, empirical gains/losses, and Sharpe ratios. The density plot in Figure \ref{fig:CC_backtest} illustrates the empirical return distributions for the original portfolio and with various EPS products. The original return distribution appears more spread out, while the buffer and floor strategies alter the shape, reducing extreme fluctuations. Table \ref{quan_CC_backtest} provides empirical quantiles and performance metrics, it shows that the original portfolio exhibits the highest Sharpe ratio (0.9495), indicating superior risk-adjusted performance, but it also experiences the most extreme minimum (-23.35\%) and maximum (55.41\%) returns. Buffer strategies, especially Buffer2 and Buffer3, significantly limit extreme losses (e.g., Buffer2 minimum return is -10.50\%) while maintaining reasonable upside, though at the cost of lower Sharpe ratios. Among the floor strategies, Floor3 offers a balanced performance with improved lower quantiles and reduced losses relative to other floor EPSs. Overall, EPS strategies trade off some upside for better downside protection and smoother risk-adjusted returns, with Buffer3 achieving a relatively favorable balance between protection and performance.

While all EPS structures effectively reduce extreme downside risk—evident in improved minimum returns compared to the original portfolio—the Sharpe ratios and empirical gains/losses alone do not offer compelling justification for adopting these products. In fact, EPS strategies generally exhibit lower Sharpe ratios than the unprotected portfolio, and the net gain/loss values appear mixed across different configurations. Therefore, to better capture the value of downside protection and upside participation over time, we provide the time-series plots in Figure~\ref{fig:eps_comparison}, which more clearly illustrate the performance advantages of EPS-protected portfolios in both negative and positive market conditions.

Figure \ref{fig:eps_comparison} illustrates the performance of the High Growth portfolio with and without EPS protection using Buffer and Floor strategies, in comparison to the Balanced and Growth portfolios. During periods of negative returns, both EPS protected strategies —Buffer and Floor EPS — demonstrate clear downside protection, consistently outperforming the unprotected High Growth and Growth portfolios. This highlights the loss-mitigation benefits of the EPS structure. Conversely, in positive return environments, the EPS protected High Growth portfolios still outperform the Balanced portfolio, indicating that these strategies retain a meaningful degree of upside participation. Overall, the results suggest that applying EPS contracts to the High Growth portfolio can improve return outcomes in both adverse and favorable market conditions.

\begin{figure}[h!]
    \centering
    \begin{subfigure}[b]{0.49\textwidth}
        \centering
        \includegraphics[width=\textwidth]{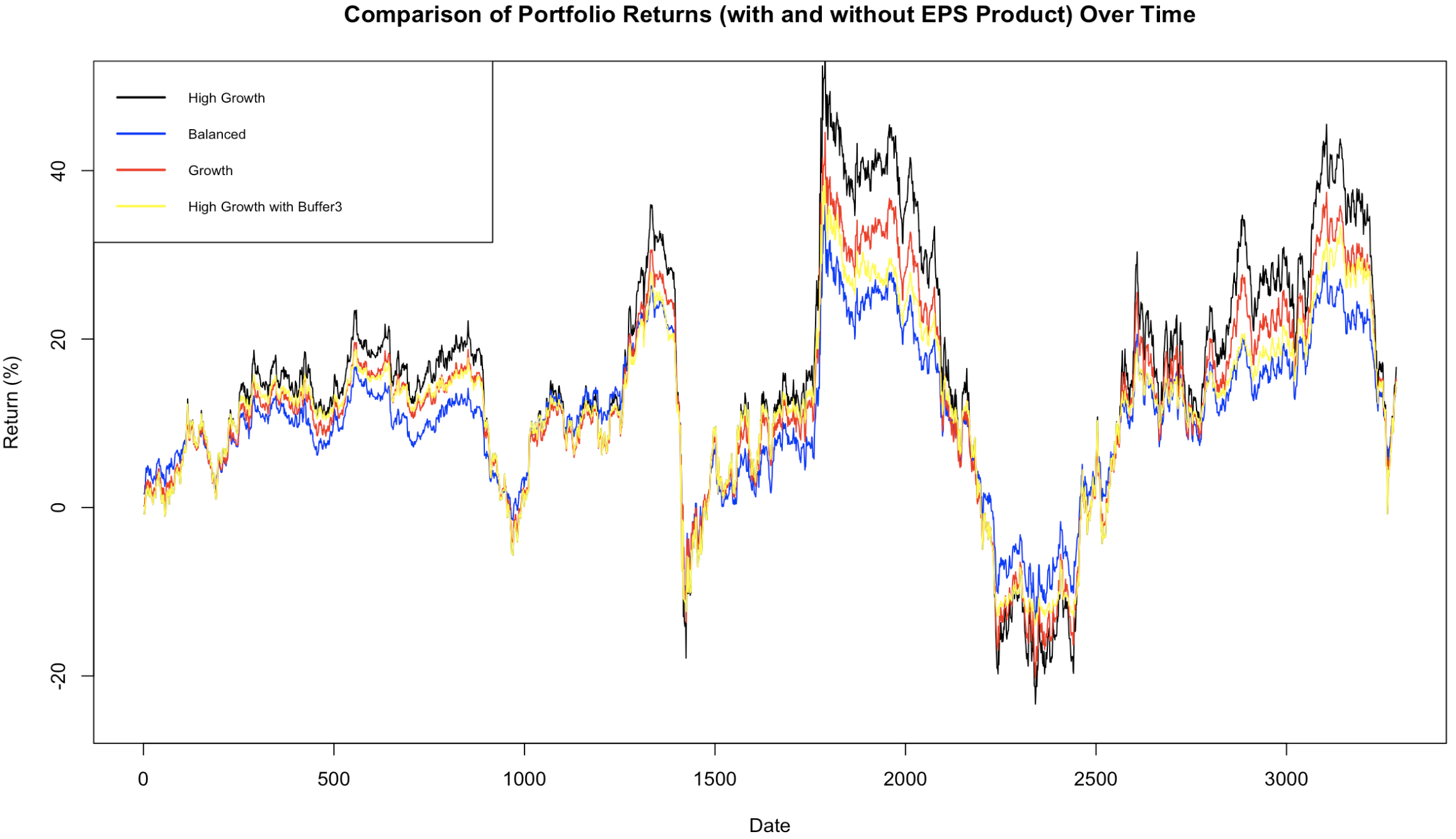}
        \caption{Buffer EPS}
        \label{fig:buffer3}
    \end{subfigure}
    \hfill
    \begin{subfigure}[b]{0.49\textwidth}
        \centering
        \includegraphics[width=\textwidth]{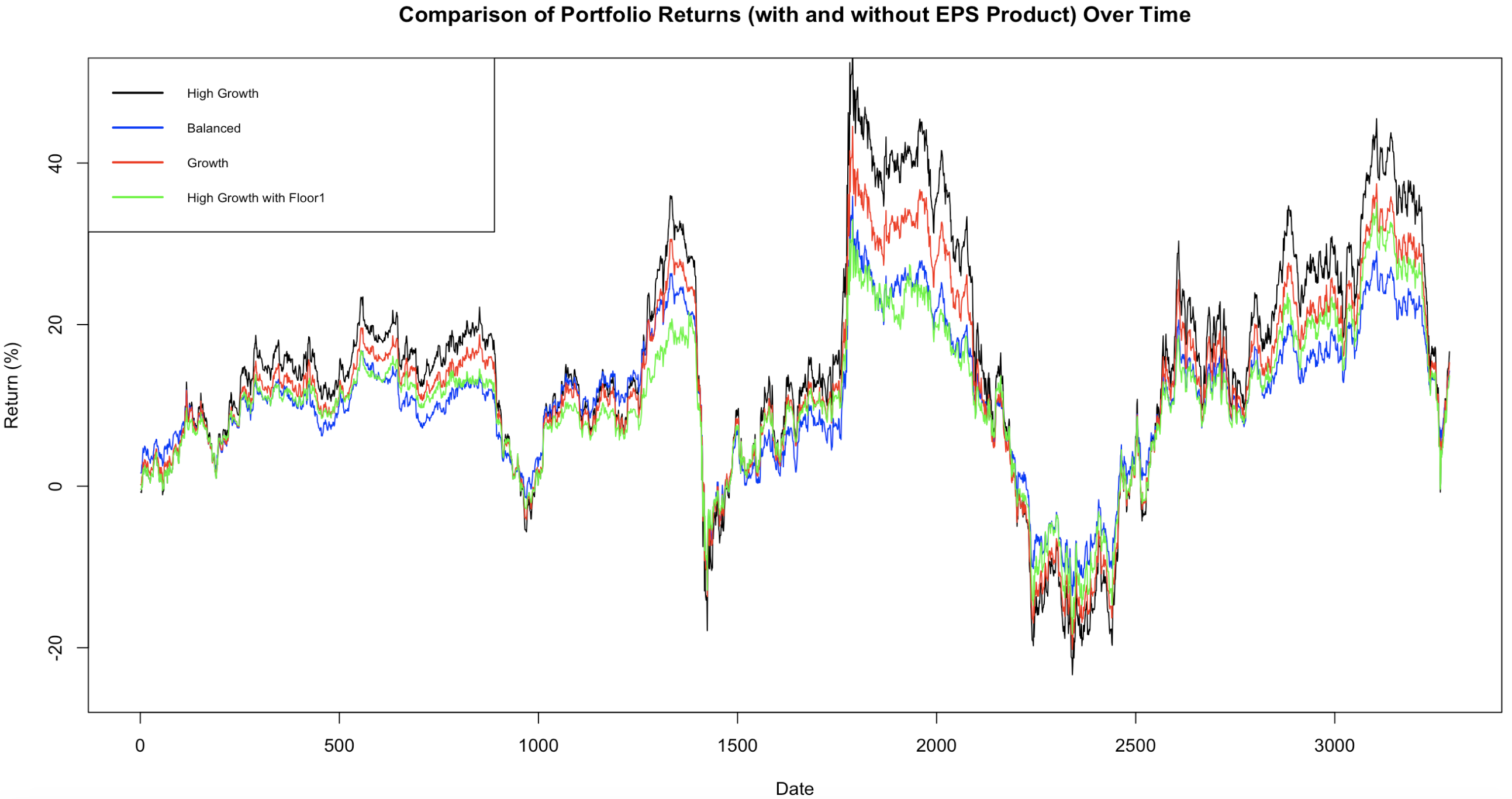}
        \caption{Floor EPS}
        \label{fig:floor1}
    \end{subfigure}
    \caption{Comparison of portfolio returns with and without cross-currency basket EPS}
    \label{fig:eps_comparison}
\end{figure}

Moreover, we provide another figure to present the comparison of the Omega ratios for the adjusted returns of the High Growth portfolio with the cross-currency basket EPS, alongside the returns of the Balanced and Growth portfolios. The Omega ratio is a performance measure that captures the likelihood of achieving returns above a specified threshold relative to the likelihood of falling below it, offering a comprehensive view of risk and reward beyond traditional metrics. According to \cite{Keating2002}, the evaluation statistic Omega has a precise mathematical definition as:
\begin{equation*}
\Omega(r) = \frac{\int_r^b (1-F(x))\,dx}{\int_a^r F(x)\,dx},    
\end{equation*}
where $(a,b)$ is the interval of returns and $F$ is the cumulative distribution of returns.

\begin{figure}[h]
    \centering
    \includegraphics[width=0.9\linewidth]{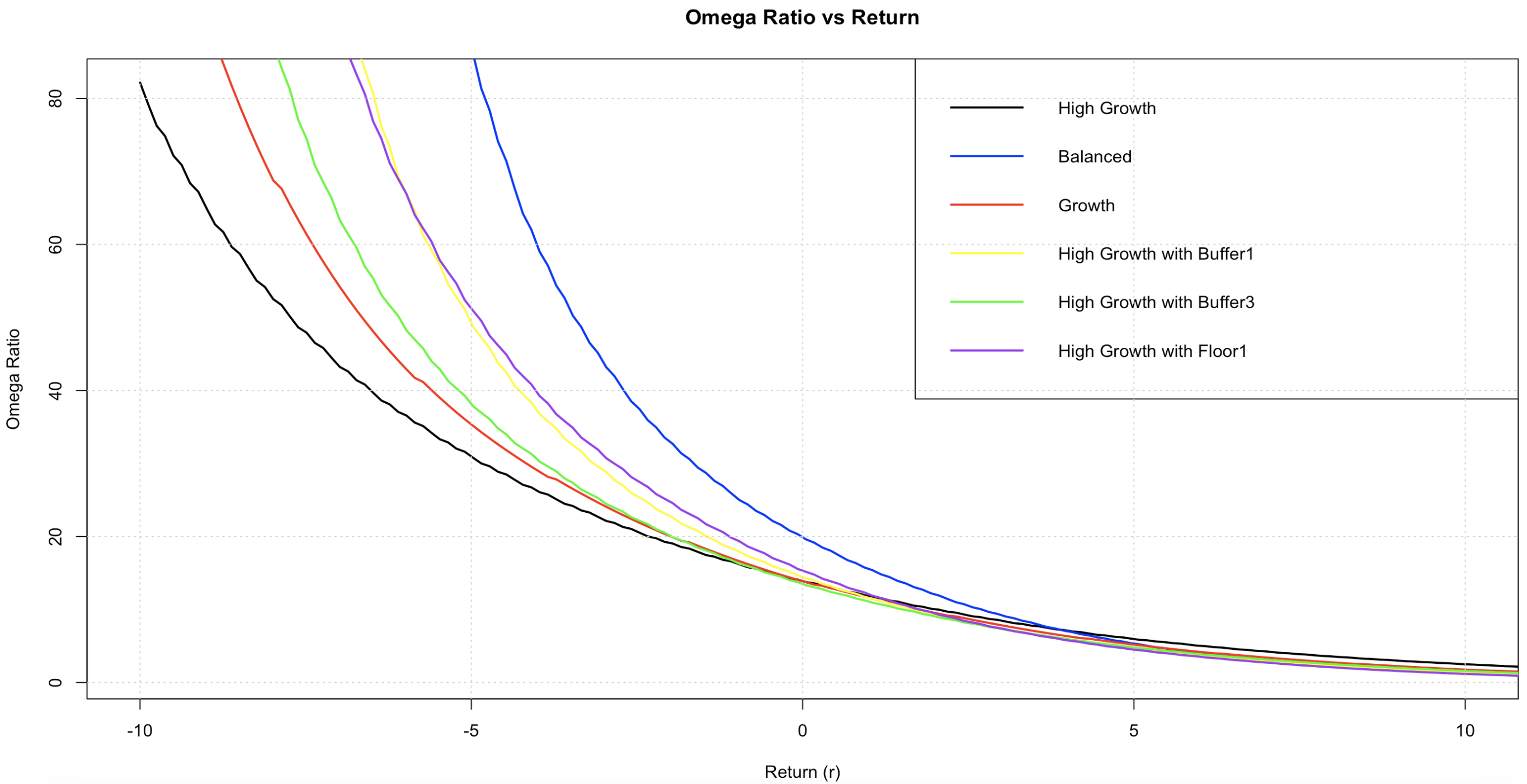}
    \caption{Comparison of Omega ratios for returns with and without cross-currency basket EPS}
    \label{fig:omega}
\end{figure}

Figure \ref{fig:omega} evaluates the performance of different investment strategies using the Omega ratio, a metric where higher values indicate better risk-adjusted returns. The High Growth portfolio with cross-currency basket EPS consistently achieves higher Omega ratios compared to the realized returns of the Growth and High Growth portfolios when returns are negative. This demonstrates its effectiveness in limiting downside risk while maintaining competitive performance.
Although the Balanced portfolio exhibits higher Omega ratios under negative return thresholds, its performance deteriorates as returns increase — the Omega ratio declines sharply, resulting in all portfolios showing similar ratios in the positive return region. In fact, at higher return levels, the Balanced portfolio's Omega ratio can fall below that of the High Growth portfolio with EPS.
The presence of buffers and floors in the structured EPS strategies helps mitigate downside risk, making them attractive to investors seeking growth with controlled volatility. These embedded protections enhance performance stability and offer a more favorable balance between risk and return than traditional, unprotected high-growth strategies.

\section{Conclusions}             \label{nsec6}

In this paper, the studies of {\it equity protection swaps} (EPS) from Xu et al. \cite{XLR2024} have been expanded into a more general situation - with consideration of cross-currency reference portfolio. An EPS is an insurance product for the variable annuity that provides protection against potential losses and collects insurance fees from actual gains. Compared with guarantees attached to variable annuities, which are offered in the U.S. market, an EPS has a simple structure and, more importantly, is a standalone financial derivative, which can be purchased independently of the variable annuity itself. We contend that it is a suitable product for the superannuation market in Australia where a typical holder of a pension account is able to directly invest in domestic and foreign securities, with 3.5 trillion Australian dollars (AUD) of total assets currently held in superannuation accounts. It should be stressed that an EPS can be structured in such a way that its initial cost is null and a fee is only charged when a reference portfolio generates significant profits during the lifetime of an EPS.

According to Xu et al. \cite{XLR2024}, the EPS products in a single economy can be perfectly hedged by a static portfolio of European call and put options. We have shown here that it is still possible to use European options to hedge the cross-currency EPS product but the form of a static hedging should be adapted to an EPS at hand and the use of basket options may be required for some cross-currency products. We first examine an EPS with separated protections for domestic and foreign portfolios, including nominal, effective, and quanto foreign returns. Then the basic algorithm is similar to the standard EPS product in a single currency, as presented in Xu et al. \cite{XLR2024}. A comparison of hedging costs for EPSs with three types of foreign equity returns shows that the hedging cost for foreign equity's effective return is the highest one, which was expected since an EPS of that kind provides the most comprehensive protection against currency and equity risk. Moreover, when a quanto exchange rate is chosen to be equal to the initial exchange rate, one expects that the hedging cost for nominal foreign return and quanto return should be close to each other and this is indeed the case, as can be seen from numerical results. Therefore, if a buyer of an EPS has strong views about the future level of the exchange rate, then the EPS referencing the quanto return could be a suitable product.

Another important class of cross-currency products are EPSs referencing the aggregated return on bespoke cross-currency portfolios.
We examine the case of the cross-currency effective return, as well as the quanto return, and we use cross-currency basket options to hedge EPS products on aggregated returns. To this end, we present three different methods to pricing the basket option that is needed for EPS product valuation, namely, the geometric averaging method, moments matching technique, and Monte-Carlo simulation. We provide a numerical study to compare the approximate basket option prices with simulation results, which are considered to be exact values. It can be seen that the hedging costs for cross-currency EPS products are fairly close under three valuation methods but we can see that the moment matching technique improves on the geometric averaging method and provides more accurate basket option prices.
Since cross-currency basket options on bespoke portfolios are not likely to be easily available in practice, we also examine superhedging strategies based solely on the same domestic and foreign market options as in the case of separated protections.

Finally, we provide numerical studies for model pricing and the backtesting of the performance of aggregated cross-currency EPSs. 
Three methods of pricing cross-currency EPSs are analyzed in the numerical study, including Monte Carlo simulation, geometric averaging, and moment matching, yield highly consistent estimates for cross-currency and quanto EPSs, with deviations from the Monte Carlo benchmark generally below 0.01\%.
From backtesting, we can see that cross-currency EPSs enhance portfolios by reducing downside risk while maintaining meaningful upside participation. Although Sharpe ratios remain lower than those of unprotected portfolios, the Omega ratio demonstrates superior risk-adjusted performance under negative return thresholds. These results suggest that EPS products provide a more resilient risk–return profile, making them valuable for growth-oriented investors seeking downside protection without fully sacrificing return potential.

\end{document}